\documentclass[11pt]{article}

\PassOptionsToPackage{numbers, compress}{natbib}
\usepackage{amsmath,amssymb,amsthm,fullpage,mathrsfs,pgf,tikz,caption,subcaption,mathtools}
\usepackage[ruled,vlined,linesnumbered]{algorithm2e}
\usepackage{natbib}
\usepackage{amsmath,amssymb,amsthm,mathtools}
\usepackage[utf8]{inputenc} 
\usepackage[T1]{fontenc}    
\usepackage{hyperref}       
\usepackage{url}            
\usepackage{booktabs}       
\usepackage{amsfonts}       
\usepackage{nicefrac}       
\usepackage{microtype}      
\usepackage{times}
\usepackage[margin=1in]{geometry}
\usepackage{bbm}
\usepackage{enumitem}
\usepackage{xcolor}
\usepackage{cleveref}
\usepackage{bm}

\newtheorem{theorem}{Theorem}[section]
\newtheorem{corollary}[theorem]{Corollary}
\newtheorem{proposition}[theorem]{Proposition}
\newtheorem{lemma}[theorem]{Lemma}
\newtheorem{definition}[theorem]{Definition}

\newtheorem{observation}[theorem]{Observation}
\newtheorem{claim}[theorem]{Claim}

\newcommand{\floor}[1]{\lfloor {#1} \rfloor}

\newcommand\norm[1]{\left\lVert#1\right\rVert}

\newcommand{\gup}[1]{g_{\uparrow {#1}}}
\newcommand{\gdown}[1]{g_{\downarrow {#1}}}
\newcommand{\fup}[1]{f_{\uparrow {#1}}}
\newcommand{\fdown}[1]{f_{\downarrow {#1}}}
\newcommand{\restrict}[2]{{#1}|_{#2}}
\newcommand{\localize}[2]{{#1}_{#2}}
\newcommand{\lazy}{\text{lazy}}
\newcommand{\stab}{\text{Stab}}
\newcommand{\var}{\text{Var}}
\newcommand{\eps}{\varepsilon}
\newcommand{\E}{\mathbb{E}}
\newcommand{\ip}[1]{\langle #1 \rangle}

\author{%
  Mitali Bafna\thanks{Department of Computer Science, Harvard University, MA 02138. Email: \texttt{mitalibafna@g.harvard.edu}.}
  \and
  Max Hopkins\thanks{Department of Computer Science and Engineering,
  UCSD, CA 92092. Email: \texttt{nmhopkin@eng.ucsd.edu}. Supported by NSF Award DGE-1650112.}
  \and
    Tali Kaufman\thanks{Department of Computer Science, Bar-Ilan University. Email: \texttt{kaufmant@mit.edu}. Supported by ERC and BSF.}
    \and
    Shachar Lovett\thanks{Department of Computer Science and Engineering, UCSD, CA 92092. Email: \texttt{slovett@cs.ucsd.edu}. Supported by NSF Award CCF-1953928.}
}
\title{Hypercontractivity on High Dimensional Expanders: a Local-to-Global Approach for Higher Moments}

\begin{document}

\maketitle
\begin{abstract}
Hypercontractivity is one of the most powerful tools in Boolean function analysis. Originally studied over the discrete hypercube, recent years have seen increasing interest in extensions to settings like the $p$-biased cube, slice, or Grassmannian, where variants of hypercontractivity have found a number of breakthrough applications including the resolution of Khot’s 2-2 Games Conjecture (Khot, Minzer, Safra FOCS 2018). In this work, we develop a new theory of hypercontractivity on high dimensional expanders (HDX), an important class of expanding complexes that has recently seen similarly impressive applications in both coding theory and approximate sampling. Our results lead to a new understanding of the structure of Boolean functions on HDX, including a tight analog of the KKL Theorem and a new characterization of non-expanding sets. 

Unlike previous settings satisfying hypercontractivity, HDX can be asymmetric, sparse, and very far from products, which makes the application of traditional proof techniques challenging. We handle these barriers with the introduction of two new tools of independent interest: a new explicit combinatorial Fourier basis for HDX that behaves well under restriction, and a new local-to-global method for analyzing higher moments. Interestingly, unlike analogous second moment methods that apply equally across all types of expanding complexes, our tools rely inherently on simplicial structure. This suggests a new distinction among high dimensional expanders based upon their behavior beyond the second moment.
\end{abstract}

\newpage
\section{Introduction}
Introduced over 50 years ago today, \textit{hypercontractivity} remains one of the most powerful tools in the analysis of boolean functions. Originally used to prove numerous landmark results on the discrete hypercube such as the KKL Theorem \cite{kahn1988influence} and Majority is Stablest \cite{mossel2005noise}, the study of hypercontractivity has since seen a resurgence on extended domains such as the $p$-biased cube \cite{keevash2019hypercontractivity}, slice \cite{khot2018small}, and Grassmannian \cite{subhash2018pseudorandom}. Fascinatingly, these regimes all share a common thread: while hypercontractivity doesn't hold in general, it is satisfied for certain classes of \textit{pseudorandom functions}. This recently discovered phenomenon has led to a slew of breakthroughs, most famously including the resolution of Khot's 2-2 Games Conjecture \cite{subhash2018pseudorandom}. Unfortunately, the scope of these results is currently restricted, as all known proof techniques rely on product structure or other strong symmetries, and no unifying theory is known to exist.

In this work we take the first substantive step towards solving this issue with the introduction of a new theory of hypercontractivity for the general class of \textit{high dimensional expanders} (HDX). HDX are a family of expanding complexes that have seen an explosion of work in recent years, leading to major breakthroughs across a number of areas including (among others) the recent construction of c3-LTCs \cite{dinur2021}, and efficient approximate sampling for many important systems (e.g. for matroid bases \cite{anari2019log}, independent sets \cite{anari2020spectral}, Ising models \cite{anari2021entropic}, and more). Our results lead to a new understanding of the structure of boolean functions on HDX, including a tight analog of the KKL Theorem, and a characterization of non-expanding sets similar to that used in the proof of 2-2 Games \cite{subhash2018pseudorandom}. 
Proving such results previously seemed out of reach since HDX are very far from products, asymmetric, and can be quite sparse. To handle these challenges, we introduce a new set of tools including a new explicit Fourier decomposition and a local-to-global method for analyzing higher order moments. Interestingly, unlike previous $\ell_2$-based techniques which apply equally across all types of expanding complexes, our methods rely crucially on the underlying HDX structure being \textit{simplicial}. This suggests a new stratification of spectral HDX based upon their behavior \textit{beyond the second moment}.

\subsection{Contributions}
Before jumping into a more detailed breakdown of our results, we start by giving an informal overview of our main contributions within the broader context of classical Fourier analysis and the theory of high dimensional expanders.

\paragraph{Classical Fourier Analysis:} Classical Fourier Analysis on the discrete hypercube focuses on analyzing functions $f: \{0,1\}^n \to \mathbb{R}$ through their \textit{Fourier Expansion}, a decomposition that breaks $f$ into a series of orthogonal ``level functions,'' each corresponding to the projection of $f$ onto a certain eigenspace of the (noisy) hypercube graph.\footnote{More generally, these are the eigenspaces of the Hamming scheme.} At a basic level, a function's Fourier decomposition gives a nice method for understanding its \textit{second moment}, since orthogonality allows one to move between this and the standard basis freely (a result usually known as \textit{Parseval's Theorem}). On the other hand, in computer science, we are usually interested in analyzing the special class of \textit{boolean functions} $f: \{0,1\}^n \to \{0,1\}$. These functions exhibit rich structure that Parseval's Theorem isn't equipped to capture---to understand them, we usually need to look beyond the second moment.

Hypercontractivity, introduced in 1970 by Bonami \cite{bonami1970etude} (and later independently by Beckner \cite{beckner1975inequalities} and Gross \cite{gross1975logarithmic}), is exactly the tool for the job. In its simplest form, hypercontractivity boils down to the statement that the fourth moment of low levels of the Fourier decomposition should behave nicely. Namely that the $i$th level of a boolean function $f$, denoted $f_i$, should satisfy: 
\begin{equation}\label{eq:intro-bonami}
\norm{f_i}_4 \leq 2^{O(i)}\norm{f_i}_2.
\end{equation}
This deceptively simple observation, known in the above form as ``Bonami's Lemma'' \cite{bonami1970etude}, led to many landmark results including the KKL Theorem \cite{kahn1988influence}, noise-sensitivity of sparse functions \cite{kahn1988influence}, Friedgut's Junta Theorem \cite{friedgut1998boolean}, and Majority is Stablest \cite{mossel2005noise}. What's more, hypercontractivity (and its resulting applications) actually extend beyond the hypercube. After KKL's seminal work, many authors studied extensions and applications of hypercontractivity \cite{bourgain1992influence,talagrand1994russo,friedgut1996every,friedgut1998boolean}, but it wasn't until recently that tight analogs of \Cref{eq:intro-bonami} were developed for general product spaces \cite{keevash2019hypercontractivity} as well as for other structured domains such as the symmetric group \cite{filmus2020hypercontractivity} and Grassmannian \cite{subhash2018pseudorandom}. These extended domains differ from the hypercube in that they are only hypercontractive for special classes of pseudorandom functions, but are nevertheless responsible for an impressive set of applications including analogs of classical results, a variety of new sharp threshold theorems \cite{keevash2019hypercontractivity,lifshitz2019noise,keevash2021global}, and perhaps most famously the proof of the 2-2 Games Conjecture \cite{khot2017independent,dinur2018towards,dinur2018non,barak2018small,khot2018small,subhash2018pseudorandom}. Unfortunately, despite the stark similarities between these settings, no unified theory explaining the phenomenen exists. Further, all known techniques rely heavily on product structure or other strong forms of symmetry, which makes it difficult to approach the problem in more general settings. 

\paragraph{Fourier Analysis on HDX:} High dimensional expanders (HDX) are a class of robustly connected complexes that have seen an incredible amount of development and application throughout theoretical computer science in the past few years, most famously in coding theory \cite{dinur2021,EvraKZ20,jeronimo2021near,KaufmanO21amplified, KaufmanT21quantum,dikstein2020locally,jeronimo2020unique,dinur2019list} and approximate sampling \cite{anari2019log,alev2020improved,anari2020spectral,chen2020rapid,chen2021optimal,chen2021rapid,feng2021rapid,jain2021spectral,liu2021coupling,blanca2021mixing}, but also in agreement testing \cite{dinur2017high,dikstein2019agreement,kaufman2020local}, CSP-approximation \cite{alev2019approximating,bafna2020high}, and (implicitly) hardness of approximation \cite{khot2018small,subhash2018pseudorandom}. In this work, we study a central notion of high dimensional expansion called \textit{two-sided local-spectral expansion}, originally developed by Dinur and Kaufman \cite{dinur2017high} to build sparse agreement testers.
For simplicity, we’ll often refer to these objects just as \emph{local-spectral expanders}, but the reader should be aware we always refer to the two-sided variant, not the weaker one-sided variant commonly used in approximate sampling.

Interestingly, local-spectral expanders are actually known to admit a (nascent) theory of Fourier analysis \cite{dikstein2018boolean,kaufman2020high}. Initial works in this area have focused on the development and application of Fourier Decompositions and Parseval's Theorem, and while the existing theory does have a few interesting applications (e.g. an FKN theorem for HDX \cite{friedgut2002boolean,dikstein2018boolean}, efficient CSP-approximation \cite{alev2019approximating,bafna2020high}), it is subject to the same limitations as original second moment methods on the hypercube: they simply don't capture the richer structure of boolean functions. Let's consider a concrete and important example: the expansion of pseudorandom sets (an analog of ``sparse functions are noise-sensitive'' on the hypercube).\footnote{The connection lies in the fact that the noise-sensitivity result can equivalently be phrased as saying that small sets on the noisy hypercube are expanding.} Traditionally proved via hypercontractivity, a variant of this result on the Grassmannian recently led to the resolution of the 2-2 Games Conjecture \cite{subhash2018pseudorandom}. On the other hand, Bafna, Hopkins, Kaufman, and Lovett \cite{bafna2020high} showed that second moment methods cannot recover such a result. While they are able to recover some sort of characterization with these techniques, it necessarily decays as the dimension grows to infinity, becoming trivial in the regime useful for hardness of approximation---if we want to do better, it appears we need a theory of hypercontractivity.

This is easier said than done: local-spectral expanders look nothing like any object previously known to satisfy hypercontractivity. They can be sparse, asymmetric, and very far from products. Moreover, there are no known techniques for analyzing local-spectral expanders beyond the second moment.\footnote{We note that recent works in the sampling literature have considered entropic notions of high dimensional expansion, but the underlying assumptions are much stronger than local-spectral expansion.} Even DDFH and KO's Fourier decompositions are intrinsically tied to second moment methods, since they are defined by linear algebraic manipulation of the standard inner product. Surprisingly, it turns out that these barriers are not inherent, and can be removed with the introduction of just two new tools: a \textit{combinatorial} Fourier decomposition for HDX, and a new local-to-global method to replace reliance on product structure in the analysis of higher moments.

Our new decomposition is the natural analog of the standard Fourier decomposition on product spaces (often called the ``orthogonal'' or ``Efron-Stein'' decomposition). It is equivalent to old decompositions in an $\ell_2$-sense (and therefore shares all relevant $\ell_2$-based properties), but comes with a number of additional benefits: it has simple explicit and recursive forms, and it behaves nicely under restriction. This allows us to bring to bear much of the power of more traditional Fourier-analytic machinery, which often relies on these same properties. Historically, however, applying this machinery in a useful fashion has also required the underlying object to be a product, or to satisfy some other strong symmetry. Our second key observation is that while individual variables in a local-spectral expander may be highly correlated, they look independent \emph{on average}. More concretely, this means that in the analysis of \textit{expectations} (such as a higher moment), we are free to treat the underlying variables as independent even if they actually exhibit a very high level of correlation.

\paragraph{Hypercontractivity on HDX:} Leveraging these tools, we build a theory of hypercontractivity on HDX. Concretely, we prove that \Cref{eq:intro-bonami} holds on local-spectral expanders for an appropriate notion of pseudorandom functions---ones that are not concentrated in any local restrictions on the complex.\footnote{In the high dimensional expansion literature, these restrictions are known as \textit{links}.} Combined with BHKL's recent spectral analysis of \textit{higher order random walks} (which, for the moment, we'll think of as analogs of the noisy hypercube graphs or Hamming scheme), this leads to the resolution of a number of open questions in boolean function analysis. To start, we provide a tight characterization of (edge) expansion on higher order random walks, which, unlike previous methods \cite{bafna2020high}, \textit{does not decay with dimension}. This matches the version of the result on the Grassmannian which led to the resolution of the 2-2 Games Conjecture \cite{subhash2018pseudorandom}, and opens yet another avenue towards the use of HDX in hardness of approximation. We also introduce natural analogs\footnote{When applied to the embedding of the hypercube into a simplicial complex, these definitions return the standard notions.} of two classic Fourier-analytic notions: \textit{influence} and the \textit{noise operator}. Combining these with the above recovers tight variants of both the KKL Theorem and noise-sensitivity of sparse (or in this case pseudorandom) functions.

Beyond these concrete applications, hypercontractivity on HDX also has interesting implications in the broader context of discrete Fourier analysis and high dimensional expansion. For the former, our result gives the first general class of hypercontractive objects beyond products, and combined with bounded degree constructions \cite{lubotzky2005explicit,kaufman2018construction}, the first example of hypercontractivity over any sparse object at all.\footnote{Formally, the notion of ``sparseness’’ here requires a bit of explanation. Previous settings of hypercontractivity all have natural representations as dense complexes. For instance, the hypercube complex has $2n$ vertices and $2^n$ top-level faces, and even weakly hypercontractive sparsifications like the short code \cite{barak2012making} have quasi-polynomially many faces.} For the latter, our result suggests a new stratification among notions of local-spectral expansion. This requires some additional explanation. While local-spectral expanders were originally introduced only over simplicial complexes, they were quickly extended to more general settings such as the Grassmannian, or even to general ranked posets \cite{dikstein2018boolean}. While these classes of local-spectral expanders are essentially equivalent in an $\ell_2$ sense \cite{dikstein2018boolean,bafna2020high,kaufman2021local}, our analysis of the fourth moment crucially relies on simplicial structure. We conjecture that this is an inherent rather than technical barrier: only special classes of underlying objects (e.g. Grassmannian, simplicial complexes) satisfy hypercontractivity, and thereby lead to the strongest known form of spectral high dimensional expanders.

\section{Background}\label{sec:background}
Before stating our results more formally, we give a quick overview of the theory of local-spectral expanders and higher order random walks. Local-spectral expansion is a robust notion of connectivity on weighted hypergraphs introduced by Dinur and Kaufman \cite{dinur2017high} in the context of agreement testing. As is standard in the area, we will view $d$-uniform hypergraphs $H \subseteq {[n] \choose d}$ as (pure) \textbf{simplicial complexes}:
\[
X_H = X(0) \cup \ldots \cup X(d),
\]
where $X(d)=H$, $X(i) \subseteq {[n] \choose i}$ is given by downward closure, and $X(0)=\emptyset$. We note that this notation is off by one from much of the HDX literature which considers $X(i) \subseteq {[n] \choose i+1}$. This notation is standard in the topological literature (where an $i$-simplex indeed as $i+1$ points), but is less natural for our purely combinatorial work. 

Most recent work on high dimensional expansion is based on the \textit{local-to-global paradigm}, in which local properties of a complex are lifted to a desired global property (e.g. mixing or agreement testing). The main local structure of interest are called \textbf{links}. For every ``$i$-face'' $\tau \in X(i)$, the link of $\tau$ is the subcomplex obtained by restriction to faces including $\tau$:
\[
X_\tau = \{ \sigma~:~ \sigma \cap \tau = \emptyset, \ \sigma \cup \tau \in X\}.
\]
A simplicial complex is said to be a \textbf{$\bm{\gamma}$-local-spectral expander} if (the graph underlying) every link is a $\gamma$-spectral expander.\footnote{A graph is a $\gamma$-spectral expander if the second largest eigenvalue of its adjacency matrix is at most $\gamma$ in absolute value.}

\textbf{Higher order random walks} are an analog of the standard walk on expander graphs that moves between two vertices via an edge. Kaufman and Mass \cite{kaufman2016high} observed that this process can be applied at any level of a simplicial complex: one could move between edges via a triangle, or triangles via a pyramid. Formally, these walks are defined as a composition of \textbf{averaging operators}, objects that have become ubiquitous tools in the study of high dimensional expanders. Denote the space of functions $\{f: X(k) \to \mathbb{R}\}$ as $C_k$. For a function $f\in C_k$, the (level $k$) \textbf{Up} and \textbf{Down operators} lift and lower $f$ to level $k+1$ and $k-1$ respectively by averaging:
\begin{align*}
    U_kf(\tau) &= \underset{\sigma \subset \tau}{\mathbb{E}}[f(\sigma)],\\
    D_kf(\tau) &= \underset{\sigma \supset \tau}{\mathbb{E}}[f(\sigma)].
\end{align*}
It will often be useful to compose the down or up operators multiple times to move between levels $k$ and $i$, we denote this by $D^k_i = D_i \circ \ldots \circ D_k$ and $U^k_i = U_k \circ \ldots \circ U_i$. Informally, HD-walks are simply affine combinations of composed averaging operators. For instance, the basic composition $N_k^i = U^{k+i}_kD^{k+i}_k$, called a \textbf{canonical walk}, is the random process which moves between two $k$-faces via a shared $(k+i)$-face.
\section{Results}
We now move to an informal description of our results. We view our work as having three main contributions. First, we introduce and develop a new theory of Fourier analysis on high dimensional expanders. This includes a new explicit Fourier decomposition, as well as a number of natural generalizations of Fourier-analytic ideas such as influence and the noise operator to simplicial complexes. Second, we prove that our Fourier-analytic decomposition satisfies a hypercontractive inequality for the special subclass of \emph{pseudorandom functions}, and use this fact to characterize the small set expansion of HD-walks and give a version of Bourgain's Theorem (an analog of KKL on product spaces) on HDX. Finally, en route to our hypercontractivity theorem, we introduce a new method of localization on high dimensional expanders of independent interest that enables local-to-global analysis of higher order moments.

\subsection{The Bottom-Up Decomposition}
We start with a discussion of our new explicit Fourier-analytic decomposition. All previously known Fourier bases on local-spectral expanders \cite{kaufman2020high,dikstein2018boolean} are linear algebraic in nature, and have no known closed form. While these decompositions certainly have their place and are sufficient for a number of interesting applications \cite{dikstein2018boolean,alev2019approximating,bafna2020high}, they often fall short when finer-grained calculation is required. To alleviate this issue, we introduce a new combinatorial decomposition on simplicial complexes that is an analog of the classic orthogonal (sometimes called Efron-Stein) decomposition on product spaces. For simplicity, we'll start by introducing the decomposition in a simple recursive form, and give the explicit form later in this section.
\begin{definition}[Bottom-Up (Recursive Form)]\label{intro:def-recursive}
Let $X$ be a $d$-dimensional pure simplicial complex and $f \in C_k$ any function. Recursively define the $i$th level function(s) to be:
\[
\gup{i} = D^k_i f - \sum\limits_{j=0}^{i-1} {i \choose j}U^i_j\gup{j}, \quad \fup{i} = {k \choose i}U^k_i\gup{i}
\]
We call $f = \sum\limits_{i=0}^k \fup{i}$ the Bottom-Up Decomposition.
\end{definition}
Strictly speaking, while the consideration of this basis is new over general simplicial complexes, it was first studied for the special case of the complete complex by \cite{khot2018small}. There, the authors took advantage of the complex's near-product structure to show that the decomposition gives an (approximate) Fourier basis close to the eigendecomposition of $f$ with respect to the well-studied Johnson graphs. We prove that the assumption of near-product structure is actually unnecessarily strong---it is enough for the underlying complex to be sufficiently expanding.

\begin{theorem}[Bottom-Up Properties (Informal \Cref{lemma:approx-orthog}+\Cref{thm:Bottom-vs-Top})]\label{thm:intro-BU-properties}
Let $X$ be a two-sided $\gamma$-local-spectral expander, and $M$ an HD-walk. Then for any $f \in C_k$, and $0 \leq i < j \leq k$:
\begin{enumerate}
    \item $\langle \fup{i},\fup{j} \rangle \approx 0$
    \item $\norm{f}_2^2 \approx  \sum\limits_{i=0}^k \norm{\fup{i}}_2^2$
    \item $\exists \lambda_i \text{ s.t. } M\fup{i} \approx \lambda_i \fup{i}$
\end{enumerate}
\end{theorem}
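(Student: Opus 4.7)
The plan is to prove all three properties by induction on the level of the decomposition, using one key structural fact supplied by the local-spectral expansion hypothesis: on a two-sided $\gamma$-local-spectral expander the averaging operators satisfy an \emph{approximate garland identity} of the form
\[
D_{i+1} U_i \;=\; \frac{1}{i+1}\, I \;+\; \frac{i}{i+1}\, U_{i-1} D_i \;+\; E_i,
\]
where $E_i$ has operator norm $O(\gamma)$, together with analogous approximate relations for compositions $D^m_j U^m_\ell$ with $j > \ell$. These replace the exact algebraic identities available on the complete complex (equivalently, on product spaces) by $\gamma$-close approximations, and are essentially the only place the expansion hypothesis enters the argument.

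For property (1), the recursive definition is engineered so that $\gup{i}$ lies approximately in the orthogonal complement of $\mathrm{Im}(U^i_{i-1})$: by induction each previously-constructed $\gup{j}$ has this property at its own level, so $\sum_{j<i}\binom{i}{j} U^i_j \gup{j}$ is, up to $\gamma$-terms, the projection of $D^k_i f$ onto the ``old'' space, leaving $\gup{i}$ in the complementary ``new'' space. For $j < i$ I then rewrite the inner product via adjointness of $U$ and $D$ as
\[
\langle \fup{i}, \fup{j}\rangle \;=\; \binom{k}{i}\binom{k}{j}\bigl\langle \gup{i},\, D^k_i U^k_j \gup{j}\bigr\rangle,
\]
and the approximate garland identity, iterated, expresses $D^k_i U^k_j \gup{j}$ as a scalar multiple of $U^i_j \gup{j}$ plus terms in $\mathrm{Im}(U^i_{i-1})$, both of which are approximately orthogonal to $\gup{i}$.

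Properties (2) and (3) then follow from (1) and the same garland calculus. Approximate Parseval is immediate from expanding $\norm{f}_2^2 = \sum_i \norm{\fup{i}}_2^2 + 2\sum_{i<j}\langle \fup{i}, \fup{j}\rangle$ and bounding the cross terms with (1) and Cauchy--Schwarz. For the eigenvector property, since general HD-walks are affine combinations of the canonical walks $N_k^m = U^{k+m}_k D^{k+m}_k$, it suffices to compute $N_k^m \fup{i} = \binom{k}{i}\, U^{k+m}_k D^{k+m}_k U^k_i \gup{i}$. Repeated application of the approximate garland identity shows that $D^{k+m}_k U^k_i$ acts on $\gup{i}$ as a scalar $\lambda_{m,i,k}$ plus an operator landing in $\mathrm{Im}(U^i_{i-1})$; the latter contributes only $O(\gamma)$ once pushed back up through $U^{k+m}_k$, and the scalar recovers the advertised eigenvalue $\lambda_i$.

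The main obstacle will be controlling the accumulated error. Each step in the recursive definition of $\gup{i}$, and each push through an approximate garland identity, introduces an $O(\gamma)$ correction, and these compound across the $k$ levels of the decomposition and the $m$ layers of walk composition. A naive accounting produces an unacceptable dependence on $k$, so the recursion must be organized so the final error is polynomial in $k$ (and in the walk depth) with dependence on $f$ only through $\norm{f}_2$. Matching constants against the Fourier decompositions of \cite{dikstein2018boolean,kaufman2020high}, which are exact eigenbases in the $\gamma = 0$ limit, will pin down the precise scalars $\lambda_i$ and serve as a sanity check.
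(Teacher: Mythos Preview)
Your plan is essentially correct and tracks the paper's argument closely, with two organizational differences worth noting.

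First, the crux in both approaches is that $\gup{i}$ lies approximately in $\mathrm{Ker}(D_i)$ (equivalently, in $\mathrm{Im}(U_{i-1})^{\perp}$). You propose to obtain this inductively from the recursive definition; the paper instead uses the explicit closed form $\gup{i}=\sum_{j=0}^{i}(-1)^{i-j}\binom{i}{j}U^{i}_{j}D^{k}_{j}f$ and applies the DDFH relation $D_i U^{i}_{j}=\tfrac{j}{i}U^{i-1}_{j-1}D_j+\tfrac{i-j}{i}U^{i-1}_{j}+E_{i,j}$ once, after which the main terms cancel \emph{exactly} by the binomial identity $\tfrac{j+1}{i}c_{i,j+1}+\tfrac{i-j}{i}c_{i,j}=0$ for $c_{i,j}=(-1)^{i-j}\binom{i}{j}$. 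Your inductive route gives the same conclusion (indeed, unwinding one step of your induction reproduces precisely this cancellation), but the explicit-form argument is cleaner and makes the error bookkeeping transparent. For property~(1) the paper then pushes adjoints the other way, bounding $\|D^{k}_{i-1}U^{k}_{i}\gup{i}\|$ rather than $D^{k}_{i}U^{k}_{j}\gup{j}$; this is equivalent to yours.

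Second, for property~(3) the paper does \emph{not} carry out the direct computation you sketch. Instead it proves that the Bottom-Up and HD-Level-Set decompositions are $O(2^{O(k)}\gamma)$-close in $\ell_2$ (this follows immediately from approximate orthogonality of both decompositions together with the fact that $f=\sum_i \fup{i}=\sum_i \fdown{i}$), and then inherits~(3) from the known eigenvector property of the HD-Level-Set decomposition \cite{dikstein2018boolean,bafna2020high}. Your direct approach via repeated commutation is viable but longer, and you would in any case need to identify the scalars $\lambda_i$ with those already computed for the HD-Level-Set basis.

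One correction: do not expect the accumulated error to be polynomial in $k$. The paper obtains $\min\{k^{O(i+j)},2^{O(k)}\}\cdot\gamma\cdot\|f\|_2^2$, and there is no indication this can be improved. Since the theorem is informal and $\gamma$ is taken small enough relative to $k$, this is harmless, but your stated goal of polynomial-in-$k$ error is too optimistic and not needed.
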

\Cref{thm:intro-BU-properties} is also very similar to an analogous result for the HD-Level-Set Decomposition in \cite[Theorem 1.3]{dikstein2018boolean}. We will cover their definition in greater detail in \Cref{sec:Bottom-Up}. For the moment, it suffices to note that their decomposition also breaks $f$ into $k+1$ Fourier levels, which we similarly denote by $f = \sum\limits_{i=0}^k \fdown{i}$. It turns out that the similarities between the HD-Level-Set and Bottom-Up Decompositions are no accident---the two decompositions are actually nearly equivalent.
\begin{theorem}[Bottom-Up Approximates HD-Level-Set (\Cref{thm:Bottom-vs-Top})]\label{thm:intro-Bottom-vs-Top}
Let $X$ be a two-sided $\gamma$-local-spectral expander and $f \in C_k$. Then the Bottom-Up and HD-Level-Set Decomposition are close in $\ell_2$-norm:
\[
\norm{\fup{i} - \fdown{i}}_2^2 \leq 2^{O(k)}\gamma\norm{f}^2_2.
\]
Similarly,
\[
\left | \langle \fup{i},\fup{i} \rangle - \langle \fdown{i},\fdown{i} \rangle \right | \leq 2^{O(k)}\gamma\norm{f}^2_2.
\]
\end{theorem}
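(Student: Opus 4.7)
The plan is to show that $\gup{i}$ is approximately in the kernel of $D_i$, so that $\fup{i} = \binom{k}{i} U^k_i \gup{i}$ lies in the same approximate eigenspace as $\fdown{i}$, and then conclude via the approximate uniqueness of spectral projections onto the distinct eigenspaces of a suitable HD-walk.

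First I would unroll \Cref{intro:def-recursive} inductively, using $U^i_\ell U^\ell_j = U^i_j$, to obtain the explicit closed forms
\begin{equation*}
\gup{i} \;=\; \sum_{j=0}^{i}(-1)^{i-j}\binom{i}{j}\, U^i_j D^k_j f,
\qquad
\fup{i} \;=\; \binom{k}{i}\sum_{j=0}^{i}(-1)^{i-j}\binom{i}{j}\, U^k_j D^k_j f,
\end{equation*}
and verify $\sum_i \fup{i} = f$ via the binomial identity $\sum_{i\geq j}(-1)^{i-j}\binom{k}{i}\binom{i}{j} = [k=j]$.

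The key step is to bound $\norm{D_i \gup{i}}_2^2 \leq 2^{O(k)} \gamma \norm{f}_2^2$. On the complete complex, where the identity $D_\ell U_{\ell-1} = \frac{1}{\ell} I + \frac{\ell-1}{\ell} U_{\ell-2} D_{\ell-1}$ holds exactly, a direct calculation using the explicit form gives $D_i \gup{i} = 0$. On a $\gamma$-HDX this identity holds only up to an additive error $E_\ell$ with $\norm{E_\ell}_{\mathrm{op}} \leq \gamma$, and iterating it inside the recursion-derived expression
\[
D_i \gup{i} \;=\; D^k_{i-1} f - \sum_{j<i} \binom{i}{j}\, D_i U^i_j \gup{j} \;=\; \sum_{j\leq i-1} \binom{i-1}{j} U^{i-1}_j \gup{j} - \sum_{j<i} \binom{i}{j}\, D_i U^i_j \gup{j}
\]
leaves only $2^{O(k)}\gamma\norm{f}_2^2$ error by triangle inequality against the $\ell_1$-mass $\sum_j\binom{i}{j} \leq 2^k$ of the coefficients. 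With $\gup{i}$ approximately harmonic, \Cref{thm:intro-BU-properties}(3) makes $\fup{i}$ a $2^{O(k)}\gamma$-approximate $\lambda_i$-eigenvector of any HD-walk $M$, with the same eigenvalue sequence $\{\lambda_i\}$ as $\fdown{i}$. Choosing $M$ (e.g.\ a canonical walk) so that $\{\lambda_i\}$ are distinct with constant gap, the projector $P_i$ onto the $\lambda_i$-approximate eigenspace satisfies $P_i \fup{j}, P_i \fdown{j} \approx 0$ for $j \neq i$; applying $P_i$ to $\sum_j \fup{j} = \sum_j \fdown{j} = f$ yields $\fup{i} \approx P_i f \approx \fdown{i}$, giving the first inequality. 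The second follows from $\langle\fup{i},\fup{i}\rangle - \langle\fdown{i},\fdown{i}\rangle = \langle\fup{i}-\fdown{i},\fup{i}+\fdown{i}\rangle$, Cauchy--Schwarz, and the bound $\norm{\fup{i}}_2,\norm{\fdown{i}}_2 \leq 2^{O(k)}\norm{f}_2$ from the explicit forms and unit operator norms of $U,D$.

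The main obstacle is the $2^{O(k)}$ rather than $k^{O(k)}$ error accumulation: the recursion nests applications of the approximate identity for $D_\ell U_{\ell-1}$ up to depth $k$ against binomial weights, and each application contributes a $\gamma$-error. The crucial point is that the combinatorial identity making $D_i\gup{i} = 0$ on the complete complex is \emph{exact}, so on a $\gamma$-HDX only the additive $E_\ell$ errors contribute; these are then controlled by the $\ell_1$-mass $\sum_j\binom{i}{j} \leq 2^k$ of the Bottom-Up coefficients. Verifying this coefficient-by-coefficient cancellation and keeping the binomial weights from blowing up multiplicatively as the iteration proceeds is the technical heart of the proof.
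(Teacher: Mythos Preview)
Your approach differs from the paper's and has two issues worth flagging.

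\textbf{Circularity.} You invoke \Cref{thm:intro-BU-properties}(3) to conclude that $\fup{i}$ is an approximate $\lambda_i$-eigenvector of an HD-walk $M$, but in the paper that property is \emph{derived from} the very theorem you are proving: the approximate-eigenvector statement is first established for the HD-Level-Set decomposition and then transferred to Bottom-Up via \Cref{thm:Bottom-vs-Top}. You can break the circle by deducing the approximate-eigenvector property directly from $\norm{D_i\gup{i}}_2 \le 2^{O(i)}\gamma\norm{f}_2$ together with the DDFH commutation identity for $DU$, but you must spell this out rather than cite \Cref{thm:intro-BU-properties}. Relatedly, your ``constant gap'' claim is wrong for natural walks (e.g.\ $U_{k-1}D_k$ has gap $\Theta(1/k)$); this is harmless since polynomial-in-$k$ factors are absorbed into $2^{O(k)}$, but the sentence should be corrected.

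\textbf{The second inequality loses a square root.} Your Cauchy--Schwarz step gives $|\langle\fup{i},\fup{i}\rangle - \langle\fdown{i},\fdown{i}\rangle| \le \norm{\fup{i}-\fdown{i}}_2\,\norm{\fup{i}+\fdown{i}}_2$. If the first inequality only delivers $\norm{\fup{i}-\fdown{i}}_2^2 \le 2^{O(k)}\gamma\norm{f}_2^2$ as stated, then $\norm{\fup{i}-\fdown{i}}_2 \le 2^{O(k)}\gamma^{1/2}\norm{f}_2$ and your bound is $2^{O(k)}\gamma^{1/2}\norm{f}_2^2$, not the claimed $2^{O(k)}\gamma\norm{f}_2^2$. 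The paper avoids this loss by writing $\langle\fup{i},\fup{i}\rangle - \langle\fdown{i},\fdown{i}\rangle = \langle\fup{i}-\fdown{i},\fup{i}-\fdown{i}\rangle + 2\langle\fdown{i},\fup{i}-\fdown{i}\rangle$ and bounding $\langle\fdown{i},\fup{i}-\fdown{i}\rangle$ \emph{directly} as $O(\gamma)\norm{f}_2^2$ via approximate orthogonality, not via Cauchy--Schwarz.

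\textbf{Comparison.} The paper bypasses spectral projectors entirely. After establishing $\norm{D_i\gup{i}}_2$ is small (your key step, their \Cref{lemma:approx-kernel}), it proves cross-orthogonality relations $|\langle\fdown{i},\fdown{j}\rangle|,\,|\langle\fdown{i},\fup{j}\rangle|,\,|\langle\fup{i},\fup{j}\rangle| \le 2^{O(k)}\gamma\norm{f}_2^2$ for $i\ne j$ (\Cref{lemma:approx-orthog}) by repeated application of the DDFH identity, and then concludes by pure algebra: $\langle\fdown{i}-\fup{i},\fdown{i}\rangle = \langle f-\fup{i},\fdown{i}\rangle \pm O(\gamma)\norm{f}_2^2 = \sum_{j\ne i}\langle\fup{j},\fdown{i}\rangle \pm O(\gamma)\norm{f}_2^2 = O(\gamma)\norm{f}_2^2$, and similarly for $\langle\fdown{i}-\fup{i},\fup{i}\rangle$. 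This is shorter and avoids choosing a walk $M$ or controlling eigenvalue gaps and projectors. Your projector route can be made to work once the circularity is removed --- and done carefully it actually yields $\norm{\fup{i}-\fdown{i}}_2 \le 2^{O(k)}\gamma\norm{f}_2$ in norm rather than norm-squared, which would rescue your Cauchy--Schwarz step --- but as written the argument neither claims nor uses this.
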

The main advantage of the Bottom-Up Decomposition then lies in its simple recursive structure and explicit form, which we now describe.
\begin{proposition}[Bottom-Up Explicit Form (\Cref{thm:explicit})]
Let $X$ be a $d$-dimensional pure simplicial complex, $f \in C_k$ any function, and $\gup{i}$ be as given in \Cref{intro:def-recursive}. Then:
\[
\gup{i} = \sum_{j=0}^i (-1)^{i-j} \binom{i}{j} U_j^i D_j^k f,
\]
or equivalently $\forall \tau \in X(i)$:
\[
\gup{i}(\tau) = \sum\limits_{\sigma \subseteq \tau}(-1)^{|\tau \setminus \sigma|}\underset{X_\sigma}{\mathbb{E}}[f]
\]
\end{proposition}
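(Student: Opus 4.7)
The plan is to prove the operator form by induction on $i$ and then derive the pointwise form by expanding the averaging operators.

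For the base case $i=0$ we have $\gup{0} = D^k_0 f$ directly from the recursive definition (the sum over $j<0$ is empty), which matches the claimed formula since the only term is $(-1)^0 \binom{0}{0} U^0_0 D^k_0 f = D^k_0 f$. For the inductive step, assume the formula holds for every $j < i$. Plugging the inductive hypothesis into the recursive definition gives
\[
\gup{i} = D^k_i f - \sum_{j=0}^{i-1} \binom{i}{j} U^i_j \sum_{\ell=0}^{j} (-1)^{j-\ell} \binom{j}{\ell} U^j_\ell D^k_\ell f.
\]
Using $U^i_j U^j_\ell = U^i_\ell$ (composition of up operators) and swapping the order of summation collects the coefficient of $U^i_\ell D^k_\ell f$ for each $\ell < i$ as $-\sum_{j=\ell}^{i-1}(-1)^{j-\ell}\binom{i}{j}\binom{j}{\ell}$. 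Applying the identity $\binom{i}{j}\binom{j}{\ell}=\binom{i}{\ell}\binom{i-\ell}{j-\ell}$ and reindexing $m=j-\ell$ reduces this to $-\binom{i}{\ell}\sum_{m=0}^{i-\ell-1}(-1)^m\binom{i-\ell}{m}$; since the full alternating sum $\sum_{m=0}^{i-\ell}(-1)^m\binom{i-\ell}{m}$ vanishes for $i>\ell$, the partial sum equals $-(-1)^{i-\ell}$, so the coefficient of $U^i_\ell D^k_\ell f$ becomes exactly $(-1)^{i-\ell}\binom{i}{\ell}$. Adding back the $D^k_i f$ term (which is the $\ell=i$ contribution in the claimed formula) completes the induction.

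To get the pointwise form, fix $\tau \in X(i)$ and expand. By definition of the averaging operators on a pure complex, $(U^i_\ell h)(\tau) = \binom{i}{\ell}^{-1}\sum_{\sigma\subseteq\tau,\,|\sigma|=\ell}h(\sigma)$, and $(D^k_\ell f)(\sigma) = \mathbb{E}_{X_\sigma}[f]$. Substituting into the operator form and canceling the $\binom{i}{\ell}$ with its inverse yields
\[
\gup{i}(\tau) = \sum_{\ell=0}^{i}(-1)^{i-\ell}\sum_{\sigma\subseteq\tau,\,|\sigma|=\ell}\underset{X_\sigma}{\mathbb{E}}[f] = \sum_{\sigma\subseteq\tau}(-1)^{|\tau\setminus\sigma|}\underset{X_\sigma}{\mathbb{E}}[f],
\]
which is the stated combinatorial expression.

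The main obstacle is purely book-keeping: making sure that the definitions of $U^i_\ell$ and $D^k_\ell$ used here agree with whatever weighting convention on the complex has been fixed earlier in the paper (so that $(U^i_\ell h)(\tau)$ really is the uniform average over $\ell$-subsets of $\tau$, and $(D^k_\ell f)(\sigma)$ really is the link expectation $\mathbb{E}_{X_\sigma}[f]$). Once the conventions match, both steps are essentially the standard Möbius-inversion argument on the Boolean lattice, so no genuinely new idea beyond the binomial identity is required.
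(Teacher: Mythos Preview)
Your proof is correct and follows essentially the same approach as the paper: induction on $i$, plugging in the inductive hypothesis, swapping the order of summation, and simplifying the resulting binomial sum $\sum_{j=\ell}^{i-1}(-1)^{j-\ell}\binom{i}{j}\binom{j}{\ell}$. You spell out the binomial identity $\binom{i}{j}\binom{j}{\ell}=\binom{i}{\ell}\binom{i-\ell}{j-\ell}$ and the pointwise derivation in slightly more detail than the paper (which just states the simplified sum and says the equivalent form ``is immediate from the definition of the down and up operators''), but the argument is the same.
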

In \Cref{sec:Bottom-Up}, we will see how these properties are useful for analyzing finer-grained structures like restriction that are often key to classical Fourier-analytic arguments. It is unknown how to analyze such properties for prior linear algebraic decompositions, and determining whether the latter share similar structure at this level remains an interesting open problem.

\subsection{Hypercontractivity}
Now that we have introduced our relevant Fourier-analytic decomposition, we turn our attention to the study of hypercontractivity. Hypercontractivity is one of the most powerful tools in boolean function analysis and is crucial to proving many of area's key results (e.g.\ KKL \cite{kahn1988influence}, FKN \cite{friedgut2002boolean}, Majority is Stablest \cite{mossel2005noise}, sharp threshold theorems \cite{friedgut1999sharp}, etc.). Informally, hypercontractivity can be thought of as a niceness condition on ``low-degree'' functions. We'll start by considering a simple variant often called the Bonami or Bonami-Beckner lemma, \cite{bonami1970etude} which states that a ``degree-$i$'' function $p$ should satisfy:
\[
\norm{p}_4 \leq 2^{O(i)}\norm{p}_2.
\]
Classically, we might think of $p$ as being a degree-$i$ polynomial, corresponding to the $i$th Fourier level of a boolean function. The corresponding statement in our context is therefore that the $i$th level of the Bottom-Up Decomposition should satisfy an analogous inequality:
\begin{equation}\label{eq:bonami}
    \norm{\fup{i}}_4 \leq 2^{O(i)}\norm{\fup{i}}_2.
\end{equation}
Unfortunately, it is well known that \Cref{eq:bonami} cannot hold in our setting, even over the complete complex. However, it is possible that the inequality could hold for \textit{natural subclasses} of functions. Indeed, such a phenomenon is known to occur on general product distributions \cite{keevash2019hypercontractivity}, where \textbf{pseudorandom} functions satisfy a form of \Cref{eq:bonami}.
\begin{definition}[Pseudorandomness]\label{def:intro-pr}
Let $X$ be a simplicial complex and $f \in C_k$. We say $f$ is $(\varepsilon,i)$-pseudorandom if it is sparse in every $i$-link in the following two senses:
\begin{enumerate}
    \item For all $\tau \in X(i)$: 
    \[
    \left| \underset{X_\tau}{\mathbb{E}}[f] \right| \leq \varepsilon \norm{f}_\infty
    \]
    \item For all $\tau \in X(i)$: 
    \[
    \langle f|_\tau, f|_\tau \rangle \leq \varepsilon \norm{f}_\infty^2
    \]
\end{enumerate}
\end{definition}

In applications, we will often only care about non-negative functions, in which case the second condition can be removed completely (as it is implied by the first). We note that functions satisfying \Cref{def:intro-pr} are also sometimes called \textit{global} since they are not concentrated in any local structure \cite{keevash2019hypercontractivity,lifshitz2019noise}. We call them pseudorandom in keeping with prior literature on the Johnson and Grassmann graphs \cite{khot2018small,subhash2018pseudorandom}, and because they cannot be distinguished from an ($\varepsilon$-sparse) random function by examining density inside links. Finally, note that \Cref{def:intro-pr} requires $f$ to be sparse. We conjecture that our results should hold in the dense regime as well, and discuss this further in \Cref{sec:discussion}.

Hypercontractivity for restricted subclasses is still a very powerful tool. KLLM's result, for instance, led to the resolution of Majority is Stablest in the $p$-biased setting \cite{lifshitz2019noise}, and the resolution of several conjectures in extremal combinatorics as well \cite{keevash2021global}. Unfortunately, their technique cannot be used to analyze simplicial complexes. Unlike the complete complex (which is essentially a product space), bounded degree HDX are necessarily quite far from products. Nevertheless, we prove that a variant of \Cref{eq:bonami} continues to hold.
\begin{theorem}[Hypercontractivity on HDX (Informal \Cref{thm:hypercontraction})]\label{thm:intro-hypercontraction}
Let $X$ be a sufficiently strong two-sided $\gamma$-local-spectral expander and $f \in C_k$ an $(\eps,i)$-pseudorandom function. Then the following hypercontractive inequality holds:
\[
\E[\fup{i}^4] \leq 2^{O(i)}\eps\E[\fup{i}^2]\norm{f}_\infty^2 + c_k\gamma^{1/2}\varepsilon\norm{f}_2^2\norm{f}_\infty^2,
\]
where $c_k \leq \min\{2^{O(k)},k^{O(i)}\}$.
\end{theorem}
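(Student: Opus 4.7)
The plan is to bound $\E[\fup{i}^4]$ via a local-to-global analysis that combines the explicit form of the Bottom-Up decomposition with pseudorandomness and local-spectral expansion. Using $g_{\uparrow i}(\tau) = \sum_{\sigma \subseteq \tau}(-1)^{|\tau\setminus\sigma|}\E_{X_\sigma}[f]$ together with $\fup{i}(\tau) = \sum_{\sigma \subseteq \tau,\,|\sigma|=i} g_{\uparrow i}(\sigma)$, the global 4th moment expands as
\[
\E[\fup{i}^4] = \sum_{\sigma_1,\sigma_2,\sigma_3,\sigma_4 \in X(i)} \Pr_\tau\bigl[\sigma_j \subseteq \tau~\text{for all } j\bigr] \prod_{j=1}^4 g_{\uparrow i}(\sigma_j),
\]
and I would split this sum according to the intersection structure of the 4-tuple, grouped by their union $\mu = \sigma_1 \cup \sigma_2 \cup \sigma_3 \cup \sigma_4$.

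The first step is to show that pseudorandomness propagates to lower levels: because $\E_{X_\sigma}[f]$ for $|\sigma|<i$ is an average of $\E_{X_\tau}[f]$ over $\tau \in X(i)$ with $\tau\supset\sigma$, both conditions of \Cref{def:intro-pr} hold at every level $j \leq i$ with the same $\eps$. Combined with inclusion-exclusion this gives the pointwise bound $\norm{g_{\uparrow i}}_\infty \leq 2^i\eps\norm{f}_\infty$. The ``diagonal'' contributions to the 4-tuple sum (where the $\sigma_j$'s are highly overlapping, so only a handful of distinct values of $g_{\uparrow i}$ appear) are then bounded by $\norm{g_{\uparrow i}}_\infty^2 \cdot \E[g_{\uparrow i}^2]$, and using the approximate Parseval of \Cref{thm:intro-BU-properties} to swap $\E[g_{\uparrow i}^2]$ for $\E[\fup{i}^2]$ yields the advertised $2^{O(i)}\eps\E[\fup{i}^2]\norm{f}_\infty^2$ main term.

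The remaining contributions---4-tuples whose $\sigma_j$'s share a nontrivial but not complete overlap---are the cross-terms that would vanish under product structure. To handle them I would fix the common sub-face $\mu$, reinterpret the contribution as an inner product of lower-level Bottom-Up components inside the link $X_\mu$, and invoke the $\gamma$-local-spectral expansion together with the approximate eigenvalue structure from \Cref{thm:intro-BU-properties}(3) to obtain a $\gamma^{1/2}$ saving per off-diagonal block. Summing over intersection patterns produces the error term $c_k\gamma^{1/2}\eps\norm{f}_2^2\norm{f}_\infty^2$, with $c_k \leq \min\{2^{O(k)}, k^{O(i)}\}$ corresponding to the two natural ways of organizing the sum (either enumerating $\mu$ directly or iterating through intermediate levels).

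The main obstacle is exactly these cross-terms: they are too spread out to control by pseudorandomness alone, yet too correlated to dispatch with a crude expansion argument. Making the ``independence on average'' intuition quantitative relies on the restriction behavior of the Bottom-Up decomposition (the key advantage advertised for this basis over its $\ell_2$-equivalent predecessors) together with a delicate inductive use of local-spectral expansion at every level of overlap. I expect this step to be the technical heart of the proof, and to be the place where the simplicial structure of $X$ is genuinely needed, matching the stratification among HDX emphasized in the introduction.
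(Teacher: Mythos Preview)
Your expansion of $\E[\fup{i}^4]$ over $4$-tuples grouped by intersection pattern, together with the pointwise bound $\|g_{\uparrow i}\|_\infty\le 2^i\eps\|f\|_\infty$, matches the paper's framework (which in turn follows \cite{khot2018small}). But your diagonal/cross split and your proposed tool for the cross terms both miss the mark.

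First, the correct dichotomy is not ``highly overlapping'' versus ``partially overlapping'' but rather \emph{some variable appears in only one $I_r$} versus \emph{every variable appears in at least two $I_r$'s}. In the former case, after isolating the unique variable $x$ one gets $\E_{x}[g_{\uparrow i}|_{J}(x)]$, which on a product space is exactly $D_i g_{\uparrow i}=0$; on an HDX it is only \emph{approximately} zero via \Cref{lemma:approx-kernel}. Crucially, before you can even take that inner expectation, $x$ is drawn from the link of \emph{all} the other variables, not just those in $J$. The paper's key new tool (\Cref{cor:localization}) handles exactly this: it writes $\E_{X_{\tau\cup v}}[h]-\E_{X_\tau}[h]=\Gamma h(v)$ with $\|\Gamma\|\le O(\gamma)$ via the swap walk, letting you peel off irrelevant conditioning one variable at a time. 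Your plan to ``invoke the approximate eigenvalue structure'' is the $\ell_2$ machinery that the paper explicitly argues is insufficient here; the swap-walk localization is what replaces product structure.

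Second, for patterns where every variable is doubled ($i\le \ell\le 2i$), your bound $\|g_{\uparrow i}\|_\infty^2\E[g_{\uparrow i}^2]$ is off by a factor of $\binom{k}{\ell}/\binom{k}{i}$, which can be as large as $\binom{k}{i}$. The paper recovers the needed $(i/k)^\ell$ decay via the quantitative restriction bound $\langle g_{\uparrow i}|_\tau,g_{\uparrow i}|_\tau\rangle\le (\eps/\binom{k-j}{i-j}+O(\gamma))\|f\|_\infty^2$ (\Cref{lemma:g-link-2nd-moment}), applied after an inductive Cauchy--Schwarz + localization argument (\Cref{claim:H_2}) that strips off the doubly-covered variables. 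You gesture at ``restriction behavior'' but do not identify this $\binom{k-j}{i-j}^{-1}$ gain, without which the sum over $\ell$ does not close.
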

Our overall framework for proving \Cref{thm:intro-hypercontraction} roughly follows KMMS' strategy for the complete complex. However, even with analogous results for the Bottom-Up Decomposition in hand, most of their techniques fail dramatically in our setting due to local-spectral expanders' distinct lack of product structure. In fact, \Cref{thm:intro-hypercontraction} gives the first general class of hypercontractive objects beyond product spaces, and combined with known bounded degree constructions of local-spectral expanders \cite{lubotzky2005explicit,kaufman2018construction}, the first example over any \textit{sparse} domain at all. In \Cref{sec:intro-local}, we'll discuss how we tackle these traditionally hard-to-handle structures with the introduction of a new notion of average-case independence that relates closely to local-spectral expansion. Our method actually allows for analysis well beyond the $4$th moment, and can also be used to extend \Cref{thm:intro-hypercontraction} to $2$-to-$2q$ hypercontractivity (where the $4$-norm is replaced by a higher $2q$-norm). We focus on the $2$-to-$4$ case in this work for simplicity.

Before moving on to applications of \Cref{thm:intro-hypercontraction}, it is worth discussing another typical form of hypercontractivity and how it translates to the setting of simplicial complexes. Hypercontractivity is frequently expressed in terms of an object called the \textbf{noise operator}. On the hypercube, the noise operator $T_\rho$ acts as an averaging process on boolean strings which replaces each coordinate with a random bit with probability $1-\rho$. In this context, hypercontractivity states that $T_\rho$ should act as a \textit{smoothing operator} in the following sense:
\begin{equation}\label{eq:noise-hyp}
\norm{T_\rho f}_4 \leq \norm{f}_2
\end{equation}
for some constant $\rho$. While it is not immediately obvious how to define $T_\rho$ over a simplicial complex, one can imagine a similar process where each vertex in a $k$-face is removed with probability $1-\rho$, and is then re-randomized over relevant $k$-faces. We formalize this procedure in terms of the averaging operators.
\begin{definition}[Noise Operator]
Let $X$ be a $d$-dimensional pure simplicial complex. The noise operator $T^k_\rho(X): C_k \to C_k$ at level $k \leq d$ of the complex is:
\[
T^k_\rho(X) = \sum\limits_{i=0}^k {k \choose i}(1-\rho)^i \rho^{k-i}U_{k-i}^kD^k_{k-i}.
\]
We write just $T_\rho$ when clear from context.
\end{definition}
When applied to the hypercube complex,\footnote{The hypercube complex has vertex set $[n] \times \{0,1\}$, where the first entry stands for a coordinate and the second entry a value. The top level $X(n)$ consists of all binary strings and is exactly the hypercube.} this natural analog returns exactly the standard boolean noise operator $T_\rho$. Combining standard arguments with the spectral properties of the Bottom-Up Decomposition, we can also prove a variant of \Cref{eq:noise-hyp} for pseudorandom functions on HDX. To state this result, it will be useful to have a notion of degree: as on the hypercube, we say the degree of a function $f$ is the largest $i$ such that $\fup{i}$ is non-zero.
\begin{corollary}[Informal \Cref{thm:noise-hyper}]
Let $X$ be a sufficiently strong two-sided $\gamma$-local-spectral expander and $f \in C_k$ a degree $i$, $(\delta,i)$-pseudorandom function for $\delta \leq \varepsilon\norm{\frac{f}{\norm{f}_\infty}}_2^2$. Then for some constant $\rho=\Theta(1)$:
\[
\norm{T_\rho f}_4 \leq \eps^{1/4}\norm{f}_2.
\]
\end{corollary}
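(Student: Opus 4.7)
By the degree assumption, the Bottom-Up decomposition gives $f = \sum_{j=0}^{i} \fup{j}$, and hence $T_\rho f = \sum_{j=0}^{i} T_\rho \fup{j}$. The plan is (i) compute the action of $T_\rho$ on each Bottom-Up level, and (ii) bound the resulting sum via Minkowski's inequality in $L^4$, invoking \Cref{thm:intro-hypercontraction} on each term. For step (i), item 3 of \Cref{thm:intro-BU-properties} tells us each $\fup{j}$ is an approximate eigenvector of every canonical walk $U^k_{k-l}D^k_{k-l}$; for the complete complex the exact eigenvalue on the $j$th level is $\binom{k-j}{l}/\binom{k}{l}$, and the binomial identity
\[
\sum_{l=0}^{k}\binom{k}{l}(1-\rho)^l\rho^{k-l}\cdot\frac{\binom{k-j}{l}}{\binom{k}{l}} \;=\; \rho^j\sum_{l=0}^{k-j}\binom{k-j}{l}(1-\rho)^l\rho^{k-j-l} \;=\; \rho^j
\]
then yields $T_\rho \fup{j} = \rho^j \fup{j} + E_j$, where $E_j$ captures the $\gamma$-dependent deviation from the complete-complex case.

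For step (ii), I first observe that $(\delta,i)$-pseudorandomness propagates downward to all $j\leq i$: for $\sigma \in X(j)$,
\[
\underset{X_\sigma}{\E}[f] \;=\; \underset{\tau \in X(i),\,\tau\supset\sigma}{\E}\!\left[\underset{X_\tau}{\E}[f]\right],
\]
so the triangle inequality gives $|\E_{X_\sigma}[f]| \leq \delta\|f\|_\infty$, and the same reasoning applied to $f^2$ controls $\langle f|_\sigma, f|_\sigma\rangle$. Thus \Cref{thm:intro-hypercontraction} applies at every $j \leq i$, yielding
\[
\|\fup{j}\|_4^4 \;\leq\; 2^{O(j)}\delta\|\fup{j}\|_2^2\|f\|_\infty^2 + c_k\gamma^{1/2}\delta\|f\|_2^2\|f\|_\infty^2.
\]
Plugging in $\delta \leq \eps\|f\|_2^2/\|f\|_\infty^2$, using approximate Parseval $\|\fup{j}\|_2 \lesssim \|f\|_2$ (item 2 of \Cref{thm:intro-BU-properties}), and taking $\gamma$ small enough (relative to $c_k$ and $\eps$) to absorb the second summand, the bound simplifies to $\|\fup{j}\|_4 \leq 2^{O(j)/4}\eps^{1/4}\|f\|_2$.

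Combining via Minkowski,
\[
\|T_\rho f\|_4 \;\leq\; \sum_{j=0}^{i}\rho^j\|\fup{j}\|_4 \;+\; \sum_{j=0}^{i}\|E_j\|_4 \;\leq\; \eps^{1/4}\|f\|_2\sum_{j\geq 0}(C\rho)^j \;+\; (\text{residual }\gamma\text{ error})
\]
for an absolute constant $C$ coming from the $2^{O(j)/4}$ factors. Choosing $\rho = \Theta(1)$ small enough that $C\rho < 1$ caps the geometric series by an absolute constant, and taking $\gamma$ small absorbs the residual into $\eps^{1/4}\|f\|_2$; a final rescaling of $\eps$ by an absolute constant yields the claimed bound. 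The main technical obstacle is controlling the error $E_j$ from step (i): the approximate eigenvalue errors of item 3 of \Cref{thm:intro-BU-properties} could naively accumulate across all the canonical walks comprising $T_\rho$. The cleanest way around this is to apply the approximate-eigenvalue statement to $T_\rho$ as a single HD-walk rather than walk-by-walk, directly extracting the eigenvalue $\rho^j$ via the binomial collapse above.
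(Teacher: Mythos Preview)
Your overall strategy matches the paper's proof almost exactly: decompose $f$ via the Bottom-Up decomposition, use that $T_\rho \fup{j} \approx \rho^j \fup{j}$, apply the level-wise hypercontractive bound (\Cref{thm:intro-hypercontraction}), and sum via Minkowski. The binomial collapse you write down for the eigenvalue $\rho^j$ is exactly the computation underlying the paper's argument.

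The one place you are genuinely incomplete is the treatment of the residual $E_j$. You correctly flag it as the main obstacle, but your proposed fix (apply the approximate-eigenvector statement to $T_\rho$ as a single HD-walk) only controls $\norm{E_j}_2$; item 3 of \Cref{thm:intro-BU-properties} is an $\ell_2$ statement, and you need an $\ell_4$ bound. Simply ``taking $\gamma$ small'' does not close this, because with only $\norm{E_j}_2 \leq c_k\gamma\norm{f}_2$ you would need $\gamma$ to depend on $\norm{f}_2/\norm{f}_\infty$ (equivalently on $\eps$), which is stronger than ``sufficiently strong local-spectral expander.'' The paper handles this by observing that $E_j$ is a bounded linear combination of averaging operators applied to $\gup{j}$, so $\norm{E_j}_\infty \leq k^{O(j)}\norm{\gup{j}}_\infty \leq k^{O(j)}\delta\norm{f}_\infty$ using the explicit form of $\gup{j}$ together with pseudorandomness. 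One then interpolates via $\norm{E_j}_4 \leq \norm{E_j}_2^{1/2}\norm{E_j}_\infty^{1/2}$, and the extra factor of $\delta^{1/2}$ in the $\ell_\infty$ bound is exactly what lets the error be absorbed with $\gamma$ depending only on $k$. Once you add this step, your argument is the paper's.
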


\subsection{Applications}
A classical application of hypercontractivity is to give what is known as a ``level-$i$ inequality'' that bounds low-level weight of a boolean function. We can use \Cref{thm:intro-hypercontraction} to give an analog on HDX for pseudorandom functions.
\begin{theorem}[Level-$i$ inequality (Informal \Cref{thm:level-i})]\label{cor:level-i}
Let $X$ be a two-sided $\gamma$-local-spectral expander with $\gamma$ sufficiently small and $f \in C_k$ an $(\eps,i)$-pseudorandom boolean function of density $\alpha$. Then the weight on $\fup{i}$ is bounded by:
\[
\langle \fup{i},\fup{i} \rangle \leq 2^{O(i)}\eps^{1/3}\alpha.
\]
\end{theorem}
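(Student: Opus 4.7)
The plan is to mimic the classical derivation of level-$i$ inequalities on the hypercube, replacing Parseval with the approximate orthogonality from \Cref{thm:intro-BU-properties} and replacing Bonami's lemma with our pseudorandom hypercontractivity from \Cref{thm:intro-hypercontraction}. The starting point is the standard identity $\langle \fup{i}, \fup{i} \rangle = \langle f, \fup{i} \rangle$, which on an HDX only holds approximately. Specifically, writing $f = \sum_j \fup{j}$ and using that $\langle \fup{i}, \fup{j} \rangle$ is $2^{O(k)} \gamma \norm{f}_2^2$-small for $i \neq j$, we obtain
\[
\norm{\fup{i}}_2^2 \leq \langle f, \fup{i} \rangle + 2^{O(k)} \gamma \norm{f}_2^2 .
\]

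Next I would apply H\"older's inequality in the form $\langle f, \fup{i} \rangle \leq \norm{f}_{4/3} \norm{\fup{i}}_4$. Since $f$ is boolean with density $\alpha$, we have $\norm{f}_{4/3} = \alpha^{3/4}$, $\norm{f}_\infty = 1$, and $\norm{f}_2^2 = \alpha$. Plugging \Cref{thm:intro-hypercontraction} into this bound gives
\[
\norm{\fup{i}}_4^4 \leq 2^{O(i)} \eps \norm{\fup{i}}_2^2 + c_k \gamma^{1/2} \eps \alpha ,
\]
and hence $\norm{\fup{i}}_4 \leq 2^{O(i)} \eps^{1/4} \norm{\fup{i}}_2^{1/2} + (c_k \gamma^{1/2} \eps \alpha)^{1/4}$. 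Combining and solving the resulting inequality $\norm{\fup{i}}_2^{3/2} \leq \alpha^{3/4} \cdot 2^{O(i)} \eps^{1/4}$ yields $\norm{\fup{i}}_2^2 \leq 2^{O(i)} \eps^{1/3} \alpha$, matching the target bound. The factor $\eps^{1/3}$ arises naturally from raising the pseudorandomness parameter to the $1/4$ power and then squaring after dividing through by $\alpha^{3/4}$.

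The main obstacle I anticipate is bookkeeping the error terms. There are two sources of $\gamma$-error: the deviation from exact orthogonality in the first step (of size $2^{O(k)} \gamma \norm{f}_2^2$), and the additive $c_k \gamma^{1/2} \eps \alpha$ term in hypercontractivity. To absorb them, I would argue by cases: either the main contribution dominates (in which case one gets the stated bound), or one of the error terms dominates, which forces $\alpha \lesssim 2^{-O(k)} \gamma^{1/2}$ or similar and the conclusion follows trivially from $\norm{\fup{i}}_2^2 \leq \norm{f}_2^2 = \alpha$. Choosing $\gamma$ sufficiently small as a function of $k$ (as in the theorem's hypothesis) ensures neither error term exceeds $2^{O(i)} \eps^{1/3} \alpha$, completing the argument.
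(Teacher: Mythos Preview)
Your approach is essentially the same as the paper's: approximate orthogonality to relate $\langle \fup{i},\fup{i}\rangle$ and $\langle f,\fup{i}\rangle$, then H\"older with exponents $(4/3,4)$, then the hypercontractive bound on $\norm{\fup{i}}_4^4$, and finally solving the resulting self-referential inequality to extract the $\eps^{1/3}$.

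The one place where your sketch falls short is exactly the obstacle you flagged: the error bookkeeping. Your proposed resolution---``choosing $\gamma$ sufficiently small as a function of $k$ ensures neither error term exceeds $2^{O(i)}\eps^{1/3}\alpha$''---does not work as stated, because $\eps$ is a parameter of the function, not of the complex. If $\eps$ is extremely small (say $\eps \ll 2^{-k}$), then no choice of $\gamma$ depending only on $k$ will make $2^{O(k)}\gamma\alpha$ or $c_k\gamma^{1/2}\eps\alpha$ smaller than $2^{O(i)}\eps^{1/3}\alpha$. Your fallback ``$\norm{\fup{i}}_2^2 \le \norm{f}_2^2 = \alpha$'' is the trivial bound, not the target one, so the case analysis does not close.

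The paper patches this with one extra ingredient: a \emph{naive} level-$i$ bound $\langle f,\fup{i}\rangle \le \binom{k}{i}2^{O(i)}\eps\alpha$, obtained by expanding $\fup{i}$ explicitly and using $\langle D_j^k f, D_j^k f\rangle \le \eps\alpha$ (a consequence of pseudorandomness, cf.\ \cite[Lemma~8.8]{bafna2020high}). This naive bound already gives the conclusion whenever $\eps \le \binom{k}{i}^{-3/2}$, since then $\binom{k}{i}\eps \le \eps^{1/3}$. In the complementary regime $\eps \ge \binom{k}{i}^{-3/2} \ge 2^{-(3/2)k}$, the assumption $\gamma \le 2^{-\Omega(k)}$ now \emph{does} force $\gamma^{1/4} \lesssim \eps$, and the error terms can be absorbed into the main term as you intended. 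So your argument is correct once this small-$\eps$ fallback is added.
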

Level-$i$ inequalities have a plethora of applications in boolean Fourier analysis. We'll look at the analog of two classical applications: one to small-set expansion, and the other to the structure of functions with low influence. Starting with the former, let's recall the basic definition of edge-expansion.
\begin{definition}
Let $M$ be a walk on the $k$th level of a simplicial complex $X$. The (edge) expansion of a subset $S \subseteq X(k)$ is the average probability of leaving $S$ in a single step of the walk:
\[
\Phi(S) = \underset{v \sim S}{\mathbb{E}}[M(v,X(k) \setminus S)],
\]
where $M(v,X(k) \setminus S)$ is the probability the walk leaves $S$ starting from $v$.
\end{definition}
Informally, a walk is called a \emph{small-set expander} if all small subsets expand. Traditionally, the level-$i$ inequality on the discrete hypercube is used to show that the noisy hypercube graph is a small-set expander. The analogous result on simplicial complexes, however, isn't true: HD-walks (which generalize graphs like the noisy hypercube) have well-known examples of small non-expanding sets: \emph{links} \cite{khot2018small,bafna2020high}. Using \Cref{cor:level-i}, we can prove a converse to this result: \emph{any} non-expanding set must be concentrated in a link.

\begin{theorem}[Characterizing non-expansion on HD-walks (Informal \Cref{thm:expansion})]\label{intro:expansion}
For every $0 < \delta < 1$, there exists some $\varepsilon > 0$ and $r \in \mathbb{N}$ such that for all large enough $k$ the following holds. For any HD-walk\footnote{Formally, this statement only holds for HD-walks such as $N^{\Theta(k)}_{k}$ which exhibit sufficiently fast eigenvalue decay. We give a more general formulation in the main body that holds for all HD-walks (see \Cref{thm:expansion}).} on a sufficiently strong two-sided local-spectral expander $X$ and any subset $S \subseteq X(k)$, if $S$ has expansion at most $\Phi(S) \leq \delta$, then $S$ is concentrated in a low-level link:
\[
\exists i \leq r, s \in X(i): \frac{|X_s \cap S|}{|X_s|} \geq \varepsilon
\]
\end{theorem}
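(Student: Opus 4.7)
The plan is to prove the contrapositive using the level-$i$ inequality in combination with the approximate spectral decomposition of $M$ on the Bottom-Up basis. Write $f = \ind_S$, so $\langle f, f\rangle = \alpha$ and $\norm{f}_\infty = 1$. Expansion admits the standard spectral form
\[
\Phi(S) = 1 - \frac{\langle f, Mf\rangle}{\langle f, f\rangle},
\]
so $\Phi(S) \leq \delta$ is equivalent to $\langle f, Mf\rangle \geq (1-\delta)\alpha$. I assume toward contradiction that $|X_s \cap S|/|X_s| < \varepsilon$ for every $s \in X(i)$ with $i \leq r$. Since $f^2 = f$ and $\norm{f}_\infty = 1$, this makes $f$ an $(\varepsilon,i)$-pseudorandom function in the sense of \Cref{def:intro-pr} for each $i \leq r$ (the second condition reduces to the first for a $\{0,1\}$-valued function).

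Next I would expand $f = \sum_{i=0}^k \fup{i}$ and apply \Cref{thm:intro-BU-properties} to write
\[
\langle f, Mf\rangle \;\approx\; \sum_{i=0}^k \lambda_i \langle \fup{i}, \fup{i}\rangle,
\]
where $\fup{0} = \alpha \cdot \mathbf{1}$ contributes exactly $\lambda_0 \alpha^2 = \alpha^2$, and the error from approximate orthogonality plus the approximate eigenvector property is of order $2^{O(k)}\gamma\alpha$. Taking $\gamma$ sufficiently small (relative to $k$ and $\delta$) absorbs this error, so non-expansion forces
\[
\sum_{i=1}^k \lambda_i \langle \fup{i}, \fup{i}\rangle \;\geq\; (1-\delta)\alpha - \alpha^2 - o(\alpha).
\]

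The final step is to split this sum at level $r$. For $i \leq r$, the level-$i$ inequality \Cref{cor:level-i} gives $\langle \fup{i}, \fup{i}\rangle \leq 2^{O(i)}\varepsilon^{1/3}\alpha$, so the low-level contribution is at most $r \cdot 2^{O(r)}\varepsilon^{1/3}\alpha$. For $i > r$, the eigenvalue decay hypothesis on $M$ (this is where the footnoted restriction on HD-walks such as $N^{\Theta(k)}_k$ enters) yields $|\lambda_i| \leq \lambda(r)$ with $\lambda(r) \to 0$ as $r \to \infty$, while $\sum_{i > r}\langle \fup{i}, \fup{i}\rangle \leq (1+o(1))\alpha$ by the approximate Parseval identity from \Cref{thm:intro-BU-properties}. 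One chooses $r$ first so that $\lambda(r) < (1-\delta)/4$, then $\varepsilon$ so that $r\cdot 2^{O(r)}\varepsilon^{1/3} < (1-\delta)/4$; the resulting total bound on the spectral sum then lies strictly below $(1-\delta)\alpha - \alpha^2$, contradicting the lower bound above. The dense regime $\alpha \geq 1-\delta$ is handled trivially since $\Phi(S) \leq 1 - \alpha \leq \delta$.

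The main obstacle will be carefully propagating the $2^{O(k)}\gamma$-scale error terms arising from approximate orthogonality, the approximate eigenvector property, and the Bottom-Up-to-HD-Level-Set identification of \Cref{thm:intro-Bottom-vs-Top}, so that they all collapse into a single ``sufficiently strong expansion'' hypothesis on $\gamma$. A secondary subtlety is that the level-$i$ inequality carries its own $2^{O(i)}$ prefactor multiplied by $\varepsilon^{1/3}$, which forces $r$ to be chosen before $\varepsilon$ and makes $\varepsilon$ depend doubly-exponentially on $1/(1-\delta)$. Finally, extracting the footnoted eigenvalue-decay condition cleanly from a statement that applies to general HD-walks (arbitrary non-negative affine combinations of the canonical walks $N_k^i$) requires a dimension-free bound on $\lambda(r)$ in terms of the walk's coefficients; this is the step that restricts the fully general formulation to walks with provable decay.
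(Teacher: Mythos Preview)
Your proposal is correct and follows essentially the same route as the paper: write $\Phi(S)=1-\tfrac{1}{\alpha}\langle f,Mf\rangle$, expand $f$ in the (approximate) eigenbasis, split the spectral sum at the ST-rank threshold $r$, control the low levels by the level-$i$ inequality (\Cref{cor:level-i}) and the high levels by eigenvalue decay plus approximate Parseval, then choose $r$ and $\varepsilon$ in that order. The only cosmetic difference is that the paper carries $\langle f,\fup{i}\rangle$ (obtained by first expanding in the HD-Level-Set basis and then passing to Bottom-Up via \Cref{thm:intro-Bottom-vs-Top}) rather than $\langle \fup{i},\fup{i}\rangle$; these agree up to the same $2^{O(k)}\gamma$ error by approximate orthogonality. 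Your aside on the ``dense regime $\alpha\ge 1-\delta$'' is unnecessary and slightly misstated: the contradiction hypothesis already includes the $i=0$ link, which forces $\alpha<\varepsilon$, so no separate case is needed.
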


Expansion is also closely related to a well-studied Fourier-analytic quantity called \textbf{total influence}. On the boolean hypercube, the total influence of a function measures its total variability across each coordinate:
\[
I[f] = \sum\limits_{i=1}^n \Pr_{x \sim \{0,1\}^n}[f(x) = f(x \oplus e_i)]
\]
where $e_i$ is the $i$th standard basis vector. One of the most celebrated results in the analysis of boolean functions is the KKL Theorem \cite{kahn1988influence}, which states that any function with low total influence must have an influential coordinate. In domains beyond the hypercube (such as product spaces), total influence is usually instead written equivalently as:
\[
I[f] = \langle f,Lf \rangle
\]
where $L$ is the (un-normalized) \textbf{Laplacian operator} (see \Cref{sec:Fourier} for more details). While the KKL Theorem does not hold over product spaces (even say the $p$-biased hypercube), a useful analog known as ``Bourgain's Sharp Threshold Theorem'' \cite[Appendix]{friedgut1999sharp} does.  Bourgain's Theorem states that if a boolean function has small total influence, there must exist a link (on the hypercube a subcube) in which the function is much denser than expected.

We prove an analogous result for HDX. The Laplacian formulation of total influence has a natural generalization on simplicial complexes: 
\[
I_X[f] = \langle f, k(I-U_{k-1}D_k)f \rangle
\]
that returns the standard definition over the hypercube complex (see \Cref{sec:Fourier}). Using \Cref{cor:level-i}, we prove that any function with low total influence must be concentrated in a link.
\begin{theorem}[Bourgain's Theorem for HDX (Informal \Cref{thm:hdx-sharp-threshold})]\label{intro:hdx-sharp-threshold}
Let $X$ be a sufficiently strong two-sided $\gamma$-local-spectral expander, and $f \in C_k$ a boolean function. Then for any $0 \leq K \leq k$, if $I[f] \leq K\var(f)$, there exists $i \leq K$ and an $i$-face $\tau$ such that the link of $\tau$ is dense:
\[
 \underset{X_\tau}{\E}[f] \geq 2^{-O(K)}.
\]
\end{theorem}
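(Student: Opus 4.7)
The plan is to reduce Bourgain's theorem to the Fourier weight distribution of $f$ under the Bottom-Up decomposition, use the total-influence hypothesis to localize that weight onto levels at most $O(K)$, and then derive a contradiction with the level-$i$ inequality (\Cref{cor:level-i}) under the assumption that $f$ is pseudorandom in every low-dimensional link. Let $\alpha = \E[f]$. The dense case will be immediate: averaging gives $\E_{v \in X(1)}\bigl[\E_{X_v}[f]\bigr] = \alpha$, so some vertex $v$ witnesses $\E_{X_v}[f] \geq \alpha$, and we are done if $\alpha \geq 2^{-CK}$. We may therefore assume $\alpha \leq 2^{-CK}$, so in particular $\var(f) = \alpha(1-\alpha) \geq \alpha/2$.

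First, decompose $f = \sum_{i=0}^{k}\fup{i}$. By approximate orthogonality (\Cref{thm:intro-BU-properties}), $\sum_{i \geq 1}\|\fup{i}\|_2^2 \approx \var(f) \geq \alpha/2$ up to a $2^{O(k)}\gamma$ error. For the Laplacian formulation $I[f] = \langle f,\, k(I - U_{k-1}D_k)f\rangle$, the spectral half of \Cref{thm:intro-BU-properties} combined with the standard level-$i$ eigenvalue $\lambda_i(U_{k-1}D_k) \approx (k-i)/k$ (inherited from the Johnson walk and stable under local-spectral expansion) will yield $k(I - U_{k-1}D_k)\fup{i} \approx i\cdot \fup{i}$, and hence
\[
I[f] \;\approx\; \sum_{i=1}^{k} i\,\|\fup{i}\|_2^2.
\]
The hypothesis $I[f] \leq K\var(f)$ together with Markov gives $\sum_{i > 2K}\|\fup{i}\|_2^2 \leq \tfrac{1}{2}\var(f)$, so pigeonholing on the remaining weight produces some $1 \leq i^* \leq 2K$ with $\|\fup{i^*}\|_2^2 \geq \var(f)/(4K) \geq \alpha/(8K)$.

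Next, suppose for contradiction that for every $j \leq 2K$ and every $\tau \in X(j)$, $\E_{X_\tau}[f] < \delta := 2^{-C'K}$ for a constant $C'$ to be chosen. Since $f \in \{0,1\}$, $\|f\|_\infty = 1$ and $\langle f|_\tau, f|_\tau\rangle = \E_{X_\tau}[f^2] = \E_{X_\tau}[f] < \delta$, so both clauses of \Cref{def:intro-pr} are satisfied and $f$ is $(\delta, j)$-pseudorandom for all $j \leq 2K$. Applying \Cref{cor:level-i} at level $i^*$ will give
\[
\|\fup{i^*}\|_2^2 \;\leq\; 2^{O(i^*)}\delta^{1/3}\alpha \;\leq\; 2^{O(K)-C'K/3}\alpha,
\]
which for $C'$ large enough in terms of the hidden constant in $2^{O(K)}$ is strictly smaller than $\alpha/(8K)$, contradicting the previous paragraph. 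Some $i \leq 2K$ and $\tau \in X(i)$ must therefore satisfy $\E_{X_\tau}[f] \geq 2^{-O(K)}$, which is the desired conclusion once the constant factor of $2$ is absorbed into the $O(K)$.

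The main obstacle I anticipate is bookkeeping the $2^{O(k)}\gamma$-scale error terms inherited from \Cref{thm:intro-BU-properties} and from the approximate eigenvalue relation for $U_{k-1}D_k$ on an HDX: the ``sufficiently strong'' quantifier on $X$ must pin $\gamma$ small enough (as a function of $k$, $K$, and the constants appearing in \Cref{cor:level-i}) that the relations $\sum_i \|\fup{i}\|_2^2 \approx \var(f)$ and $I[f] \approx \sum_i i\|\fup{i}\|_2^2$ only perturb the key lower bound $\alpha/(8K)$ by a constant factor. Beyond this quantitative tracking the proof will be a direct HDX analog of Bourgain's classical sharp-threshold argument, with the Bottom-Up decomposition standing in for the Fourier basis and \Cref{cor:level-i} providing the hypercontractive input.
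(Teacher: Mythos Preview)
Your proposal is correct and follows essentially the same logic as the paper, just unpacked. The paper routes the argument through the expansion characterization (\Cref{thm:expansion}): it converts $I[f]\le K\var(f)$ into $\Phi_{UD}(f)\le K/k$, then invokes \Cref{thm:expansion} with $r=K$ (using that the approximate eigenvalues of $U_{k-1}D_k$ are $1-i/k$) to derive a lower bound on $\varepsilon$ under the assumption that $f$ is $(\varepsilon,K)$-pseudorandom. But \Cref{thm:expansion} itself is proved by expanding $\langle f,Mf\rangle$ along the approximate eigendecomposition and applying the level-$i$ inequality to the low levels, which is exactly what you do directly. So the two arguments are the same at the core; the paper just packages the spectral step and the level-$i$ inequality into the intermediate expansion theorem.

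One small point: your Markov-plus-pigeonhole step only yields a heavy level $i^*\le 2K$, hence a dense link at some level $\le 2K$, whereas the paper obtains $i\le K$ on the nose. Your sentence ``the constant factor of $2$ is absorbed into the $O(K)$'' is not quite right, since the bound $i\le K$ in the statement is not inside a big-$O$. The paper avoids this loss because the expansion route does not pigeonhole onto a single level: it bounds $\sum_{i\le K}\langle f,\fup{i}\rangle$ in aggregate via the level-$i$ inequality, while the tail $\sum_{i>K}$ is controlled by the eigenvalue threshold $\delta$ chosen just below $\lambda_K$. If you want to match $i\le K$ exactly, you can mimic this: rather than isolating one $i^*$, bound $\sum_{i\le K}\|\fup{i}\|_2^2$ above by $\sum_{i\le K}2^{O(i)}\varepsilon^{1/3}\alpha$ and below by $\langle f,f\rangle-\alpha^2-\sum_{i>K}\|\fup{i}\|_2^2$, then balance. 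Otherwise your argument is fine as a proof of the informal statement with $i\le O(K)$.
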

Note that \Cref{thm:hdx-sharp-threshold} is actually a bit weaker than Bourgain's Theorem in the sense that it only promises a link that is much denser than average when the function $f$ is sparse. We conjecture that this result should hold in the dense regime as well (see \Cref{sec:discussion} for details). On the other hand, unlike Bourgain's Theorem (which has a density increase of $2^{-O(K^2)}$ rather than $2^{-O(K)}$ for general functions), our result is tight.\footnote{A similar tight version of Bourgain's Theorem for sparse functions on the $p$-biased cube was proved by \cite{keevash2019hypercontractivity}.}
\begin{proposition}[Bourgain's Theorem Lower Bound (Informal \Cref{thm:lower})]
Let $c \geq 1$ be any constant and $K>1$ an integer. For all $K \ll k \ll n$, there exists a Boolean function $f \in C_k$ on the $k$-dimensional complete complex on $n$ vertices satisfying:
\begin{enumerate}
    \item The influence of $f$ is small:
    \[
    I[f] \leq K\text{Var}(f).
    \]
    \item For every $i \leq cK$, all $i$-links are sparse:
    \[
\forall i \leq cK, \tau \in X(i): \underset{X_\tau}{\mathbb{E}}[f] \leq 2^{-\Omega(K)}.
\]
\end{enumerate}
\end{proposition}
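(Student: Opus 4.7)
The plan is to take the ``zoom-out'' on the slice: fix $C\subseteq[n]$ of size $L:=\lfloor K(n-k+1)/(2k)\rfloor$ and set $f(\sigma):=\ind[C\cap\sigma=\emptyset]$. Intuitively $f$ is the indicator of the sub-slice $\binom{[n]\setminus C}{k}$; the size $L$ is calibrated so that a uniform $k$-subset avoids $C$ with probability about $e^{-K/2}$, while each swap step of $U_{k-1}D_k$ flips $f$ only with probability $L/(n-k+1)\approx K/(2k)$. Both required estimates then follow from direct binomial computations.

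For the influence I would apply the reversibility identity $\langle f,(I-U_{k-1}D_k)f\rangle = \Pr[f(\sigma)=1,\,f(\sigma')=0]$ to the symmetric swap walk $\sigma\mapsto\sigma'=(\sigma\setminus v)\cup u$, where $(v,u)$ is uniform with $v\in\sigma$ and $u\in[n]\setminus(\sigma\setminus v)$. Conditioning on $\sigma$ avoiding $C$ gives $v\notin C$, so $\sigma'$ meets $C$ iff $u\in C$, an event of probability $L/(n-k+1)$ since $u$ is uniform over $n-k+1$ elements containing all of $C$. Writing $p:=\E[f]$, this gives $I[f] = kpL/(n-k+1) \le (K/2)p \le K\,\var(f)$, where the final step uses $p\approx e^{-K/2}\le 1/2$ for $K\ge 2$.

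For link density at $\tau\in X(i)$ with $i\le cK$, either $\tau\cap C\neq\emptyset$ (density $0$) or $\tau\cap C=\emptyset$, in which case factoring $\sigma=\tau\sqcup\rho$ with $\rho$ a uniform $(k-i)$-subset of $[n]\setminus\tau$ yields
\[
\E_{X_\tau}[f] \;=\; \tbinom{n-i-L}{k-i}\!\big/\!\tbinom{n-i}{k-i} \;\le\; \Bigl(1-\tfrac{k-i}{n-i}\Bigr)^L \;\le\; \exp\!\Bigl(-\tfrac{L(k-i)}{n-i}\Bigr).
\]
Substituting our choice of $L$ turns the exponent into $(K/2)(n-k+1)(k-i)/(k(n-i))$, which equals $(K/2)(1-o(1))$ uniformly over $i\le cK$ as $K/k,\,k/n \to 0$. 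Hence $\E_{X_\tau}[f] \le 2^{-\Omega(K)}$ as required.

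The only obstacles are bookkeeping. The factor-of-$\tfrac{1}{2}$ slack in $L$ is built in precisely to absorb the $1/(1-p)$ in $I/\var$ and guarantee a \emph{strict} inequality for every $K\ge 2$; uniformity of the $(1-o(1))$ in the exponent over $i\le cK$ comes from $cK/k\to 0$ for constant $c$, which is exactly the regime $K\ll k$. The conceptual point underlying why the same $f$ works for \emph{every} constant $c$---even $c\gg 1$---is a structural feature of the slice: a link fixes element inclusions but not exclusions, so no link of any size witnesses the defining condition ``$C\cap\sigma=\emptyset$'' of $f$; the link density therefore remains $\approx e^{-K/2}$ for every $i\ll k$, matching the $2^{-O(K)}$ upper bound of \Cref{intro:hdx-sharp-threshold} and confirming tightness in the density exponent.
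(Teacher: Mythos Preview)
Your proof is correct and takes a genuinely different route from the paper. The paper constructs an \emph{anti-tribes} function: it fixes $m=2cK$ disjoint ``tribes'' $T_1,\ldots,T_m\subset[n]$ each of size $\Theta(n/k)$ and sets $f(S)=1$ iff $S$ hits every tribe. The link-density bound then comes from observing that even the best $cK$-link can only pre-fill $cK$ of the $2cK$ tribes, leaving $cK$ independent ``OR'' conditions each satisfied with probability bounded away from $1$; the influence bound goes through estimating the probability that some tribe is hit exactly once.

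Your construction is strictly simpler: a single sub-slice indicator $f(\sigma)=\ind[\sigma\cap C=\emptyset]$ with $|C|$ calibrated so $\E[f]\approx e^{-K/2}$. The influence computation collapses to a one-line conditional probability for the swap walk, and the link-density bound is a direct binomial ratio. The key structural observation you make---that links on the slice impose \emph{inclusions} while $f$ is defined by an \emph{exclusion}, so no link of size $o(k)$ can meaningfully boost the density---is exactly what makes the same $f$ work for every constant $c$, whereas the paper has to scale the number of tribes with $c$. What the anti-tribes construction buys is closer alignment with the classical KKL/Bourgain lower-bound tradition on the hypercube (and it would be the natural choice if one wanted the densest link to sit strictly \emph{above} the global mean), but for the statement as written your argument is shorter and loses nothing.
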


\subsection{Localization (Average Independence)}\label{sec:intro-local}
Our hypercontractive inequality is derived from a new method of localization on high dimensional expanders of independent interest. Localization itself is of course not new---indeed such techniques have recently become synonymous with HDX. However, most prior work in the literature focuses on the localization of \emph{second moments}, whereas hypercontractivity requires the analysis of \emph{higher moments}. Traditionally, analysis beyond the second moment is difficult on HDX due to an inherent lack of product structure. We show that this can often be circumvented by a new method of decorrelating variables. \begin{theorem}\label{intro:localization}
Let $X$ be a $d$-dimensional two-sided $\gamma$-local-spectral expander and $f \in C_k$. Then for any $j \leq d-k$ and $\tau \in X(j)$, the global and localized expectation of $f$ over $X_\tau$ differ by an operator with small spectral norm:
\[
\underset{X_{\tau}(k)}{\E}[f] - \underset{X(k)}{\E}[f] = \Gamma f(\tau)
\]
where $\Gamma:C_k \to C_j$ satisfies $||\Gamma||\leq O_{k,j}(\gamma)$.
\end{theorem}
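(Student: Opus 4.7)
The first step is to recognize that the theorem is really a statement about a \emph{swap-type walk} on $X$. Define the operator $S : C_k \to C_j$ by
\[
(Sf)(\tau) \;=\; \underset{X_\tau(k)}{\E}[f],
\]
so that the desired equation simply reads $\Gamma = S - \Pi_0$, where $\Pi_0 : C_k \to C_j$ is the rank-one projection onto constants that sends every $f$ to the constant function of value $\E_{X(k)}[f]$. Because $S\,\mathbf{1}_k = \mathbf{1}_j = \Pi_0 \mathbf{1}_k$, the operator $\Gamma$ annihilates the constant function, and
\[
\|\Gamma\| \;=\; \|S|_{\mathbf{1}_k^\perp}\|.
\]
Hence bounding $\|\Gamma\|$ is equivalent to showing that $S$ has a large spectral gap: every nontrivial singular value of $S$ is at most $O_{k,j}(\gamma)$.

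The plan to control $\|S|_{\mathbf{1}_k^\perp}\|$ is to leverage the Bottom-Up decomposition. Writing $f = \sum_{i=0}^k \fup{i}$, the zeroth level $\fup{0}$ is exactly $\E_{X(k)}[f] \cdot \mathbf{1}_k$, so $S \fup{0} = \E_{X(k)}[f] \cdot \mathbf{1}_j$ exactly cancels $\Pi_0 f$, leaving
\[
\Gamma f \;=\; \sum_{i=1}^k S \fup{i}.
\]
By \Cref{thm:intro-BU-properties}, each $\fup{i}$ is an approximate eigenvector of the HD-walk $S$ with some eigenvalue $\lambda_i(S)$. The crucial input is a quantitative bound
\[
|\lambda_i(S)| \;\le\; c_{k,j}\,\gamma \qquad \text{for every } i \ge 1,
\]
i.e.\ that swap walks are eigenvalue-trivial on positive Bottom-Up levels up to an $O(\gamma)$ error. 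Once this is established, approximate orthogonality (\Cref{thm:intro-BU-properties} item 2) gives
\[
\|\Gamma f\|_2^2 \;\le\; c_{k,j}^2 \gamma^2 \sum_{i \ge 1} \|\fup{i}\|_2^2 \;\le\; O_{k,j}(\gamma^2)\,\|f\|_2^2,
\]
which is the desired operator norm bound on $\Gamma$.

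The hard part is establishing the spectral gap $|\lambda_i(S)| \le c_{k,j} \gamma$ for $i \ge 1$. This is where the proof genuinely uses the simplicial $\gamma$-local-spectral expansion rather than a softer property of $X$. The approach is to express $S$ in terms of elementary up and down averaging operators that inhabit consecutive levels: sampling $\sigma \in X_\tau(k)$ can be carried out one vertex at a time, each sample drawn from the $1$-skeleton of a further-refined link $X_{\tau \cup \sigma'}$, which is a $\gamma$-expander by hypothesis. Carrying out this step-by-step peeling and tracking the cancellations that arise because the swap walk enforces $\sigma \cap \tau = \emptyset$ (rather than allowing arbitrary subfaces of the combined $(k+j)$-face) produces, after comparison to the analogous identity on the complete complex, an error that is exactly a sum of local-spectral expansion terms---each a copy of the one-step bound $\|A_{X_{\tau'}} - \mathbf{1}\pi^\top\| \le \gamma$ applied inside some link $X_{\tau'}$. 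Summing these contributions yields $c_{k,j}\gamma$ with $c_{k,j}$ a combinatorial constant depending only on $k$ and $j$. Once this combinatorial bookkeeping is done, the rest of the proof is a clean application of the Bottom-Up machinery developed earlier.
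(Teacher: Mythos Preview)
Your identification of $\Gamma = S - \Pi_0$ with $S$ the swap operator and $\Pi_0$ the stationary projection is exactly what the paper does (see \Cref{lemma:localization}): one simply observes that $\E_{X_\tau(k)}[f] = S_{j,k}f(\tau)$ and $\E_{X(k)}[f] = U^j_0 D^k_0 f(\tau)$, so $\Gamma = S_{j,k} - U^j_0 D^k_0$.

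Where you diverge is in how you bound $\|\Gamma\|$. The paper's proof is one line from here: $\|\Gamma\|$ is precisely the second largest singular value of the rectangular swap walk $S_{j,k}$, and this is already known to be at most $jk\gamma$ by Dikstein--Dinur's swap-walk expansion theorem (\Cref{thm:dd}). There is nothing more to do.

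Your route through the Bottom-Up decomposition is therefore unnecessary, and also slightly ill-posed: $S : C_k \to C_j$ is a \emph{rectangular} operator, not an HD-walk in the sense of \Cref{def:HD-walk}, so \Cref{thm:intro-BU-properties} does not supply approximate eigenvalues $\lambda_i(S)$ for it. One could instead talk about singular values or about $S^*S$, but at that point the Bottom-Up machinery is contributing nothing---the entire content is the spectral gap of $S$, which you still have to establish separately. The ``step-by-step peeling'' argument you sketch for that spectral gap is essentially the Dikstein--Dinur proof of \Cref{thm:dd}, so your plan would ultimately work, but it amounts to re-deriving a result the paper simply cites as a black box, while wrapping it in an extra layer of decomposition that plays no role.
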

We emphasize that the first expectation in this definition is given by \emph{localizing} rather than \emph{restricting} $f$. In other words we are averaging over $k$-faces in the link $X_\tau$ (which are $(k+j)$-faces in the original complex) rather than over $k$-faces in the original complex $X$ that contain $\tau$. This latter notion of restriction is also very important in analysis of HDX. We discuss further in \Cref{sec:localization}.

\Cref{intro:localization} should really be thought of as saying that, on average, $f$ can be decorrelated from ``irrelevant'' $j$-faces that don't appear in the input. This is particularly useful when analyzing objects like HDX with high correlation. To understand the technique a bit more concretely, let's look at a basic example application. 

Let $X$ be a $\gamma$-local-spectral expander. We will often be interested in analyzing certain expected products on $X$. For instructive purposes, let's take a look at an example of such a product with just two instances of some $g \in C_2$:
\begin{equation}\label{eq:localization}
\underset{a \sim X(1)}{\mathbb{E}}\underset{b \sim X_a(1)}{\mathbb{E}}\underset{c \sim X_{ab}(1)}{\mathbb{E}}\big[g(a,b)g(a,c)\big] = \underset{a \sim X(1)}{\mathbb{E}}\underset{b \sim X_a(1)}{\mathbb{E}}\big[g(a,b)\underset{c \sim X_{ab}(1)}{\mathbb{E}}[g(a,c)]\big].
\end{equation}
Notice that if we were working over a product space, the distribution of $c \sim X_{ab}(1)$ would be the same as the distribution of $c \sim X_a(1)$. This allows us to significantly simplify the above:
\begin{align*}
    \underset{a \sim X(1)}{\mathbb{E}}\underset{b \sim X_a(1)}{\mathbb{E}}\big[g(a,b)\underset{c \sim X_{ab}(1)}{\mathbb{E}}[g(a,c)]\big] &= \underset{a \sim X(1)}{\mathbb{E}}\left[    \underset{b \sim X_a(1)}{\mathbb{E}}[g(a,b)]    \underset{c \sim X_{a}(1)}{\mathbb{E}}[g(a,c)] \right]\\
    &= \underset{a \sim X(1)}{\mathbb{E}}\left[    \underset{b \sim X_a(1)}{\mathbb{E}}[g(a,b)]^2 \right].
\end{align*}
On the other hand in an HDX (especially one of bounded degree), this could be far from true since $b$ and $c$ can be highly correlated. \Cref{intro:localization} provides a simple technique for circumventing this issue. Let $g|_{a}$ be the restriction of $g$ to $a$, that is $g|_{a}(b) = g(a,b)$. \Cref{intro:localization} promises that
\[
\underset{c \sim X_{ab}(1)}{\mathbb{E}}[g(a,c)] =\underset{c \sim X_{a}(1)}{\mathbb{E}}[g(a,c)] + \Gamma g|_{a}(b),
\]
where $\norm{\Gamma} \leq O(\gamma)$. This allows us to recover the same form as above up to $O(\gamma)$ error:
\begin{align*}
  \underset{a \sim X(1)}{\mathbb{E}}\left[    \underset{b \sim X_a(1)}{\mathbb{E}}[g(a,b)]    \underset{c \sim X_{ab}(1)}{\mathbb{E}}[g(a,c)] \right] &= \underset{a \sim X(1)}{\mathbb{E}}\left[    \underset{b \sim X_a(1)}{\mathbb{E}}[g(a,b)]^2 \right] + \underset{a \sim X(1)}{\mathbb{E}}\left[    \underset{b \sim X_a(1)}{\mathbb{E}}[g(a,b) \cdot \Gamma g(b)] \right]\\
  &\leq \underset{a \sim X(1)}{\mathbb{E}}\left[    \underset{b \sim X_a(1)}{\mathbb{E}}[g(a,b)]^2 \right] + O_g(\gamma),
\end{align*}
where we have ignored some terms in $g$ for simplicity and the last step follows from an application of Cauchy-Schwarz and the spectral norm (see \Cref{sec:hyper} for details).

We emphasize that while \Cref{eq:localization} in particular could also have been analyzed through a more direct application of the swap walk, \textit{these standard techniques fail miserably when additional copies of $g$ are added}. Since there will generally be $j$ copies of $g$ in analysis of the $j$th moment, this means that traditional techniques cannot go beyond the second moment. On the other hand, our technique is applied individually to each copy of $g$, so it is essentially irrelevant how many times it appears in the product.

\section{Discussion}\label{sec:discussion}
Before getting into the details and formalization of the above, we take a moment to give a more careful treatment of some interesting open problems and related work.
\subsection{Open Problems}
Hypercontractivity, both on the cube and on extended domains, has led to an astounding number of applications since its introduction some 50 years ago. We recover just a small sample of these classical applications in our work, and believe the theory will give rise to further results in the analysis of boolean functions. However, rather than surveying a list of classical results one might wish to extend (we refer the reader to O'Donnell's excellent book \cite{o2014analysis} for this), we'll instead focus on three open problems we feel are most directly raised by our work.

Perhaps the most obvious direction left open is to extend hypercontractivity to the \textit{dense regime}. While our definition of pseudorandomness implicitly assumes the underlying function is sparse, we conjecture that all of our results should hold under a weaker notion of pseudorandomness that drops this assumption.
\begin{definition}[Pseudorandomness (Dense Regime)]
Let $X$ be a simplicial complex and $f \in C_k$ a boolean function. We say $f$ is $(\varepsilon,i)$-pseudorandom if its local and global average are close on every $i$-link:
\[
\forall \tau \in X(i): \big | \underset{X_\tau}{\mathbb{E}}[f] - \mathbb{E}[f] \big | \leq \varepsilon.
\]
\end{definition}
While the stronger notion we use in this work is certainly sufficient for some applications (e.g.\ characterizing expansion, noise-sensitivity) and is line with previous work \cite{khot2018small,subhash2018pseudorandom,keevash2019hypercontractivity}, it does seem to fall short in other areas. A good example of this is our variant of Bourgain's Theorem. While our version only promises the existence of a dense link, the original result on product spaces actually promises a link with \textit{higher than average density} (albeit by a factor of $2^{-O(K^2)}$ instead of $2^{-O(K)}$), which could be recovered by proving hypercontractivity for the above definition. More generally, proving hypercontractivity for this dense variant opens the door to a broader spectrum of applications than the sparse regime alone can handle.

The second problem we'd like to discuss is more focused on the theory of high dimensional expanders itself. As mentioned in the introduction, local-spectral expansion can be extended well beyond simplicial complexes to many natural poset structures including the Grassmann poset \cite{dikstein2018boolean,kaufman2021local}, where hypercontractivity was crucial to resolving the 2-2 Games Conjecture \cite{subhash2018pseudorandom}. The spectral and $\ell_2$-structure of these expanding posets (eposets) is well understood \cite{dikstein2018boolean,alev2019approximating,bafna2020high}, and essentially has no dependence on the underlying poset structure.\footnote{Different poset parameters result in different eigenvalues, but the structure is otherwise the same.} In stark contrast, our results break down over general eposets at several key points. In fact, it seems likely that the Bottom-Up Decomposition is not even a Fourier basis (fails to satisfy \Cref{thm:intro-BU-properties}) over general eposets, since the proof relies heavily on simplicial structure (see \Cref{lemma:approx-kernel}). On the other hand, variants of hypercontractivity are known for some special eposets such as the Grassmann poset. The key difference in these cases is that the definition of pseudorandomness necessarily changes. This raises a natural question: do all eposets satisfy hypercontractivity for some notion of pseudorandomness, or are structures like the Grassman poset and simplicial complexes ``special''? We conjecture that the latter is the case, and that these objects represent a new, stronger class of spectral high dimensional expanders.

Our third proposed problem is not raised quite as directly by this work, but is hard to ignore in light of recent breakthroughs in approximate sampling via HDX \cite{anari2019log,alev2020improved,anari2020spectral,chen2020rapid,chen2021optimal,chen2021rapid,feng2021rapid,jain2021spectral,liu2021coupling,blanca2021mixing}. Hypercontractivity is classically connected to the \textit{Log-Sobolev inequality}, which gives strong control over the mixing time of its associated random walk. Applied to the hypercube, for instance, this connection improves the standard spectral mixing bound from $O(n^2)$ to the optimal $\Theta(n\log(n))$ \cite{diaconis1996logarithmic}. Recent analysis of entropic notions of high dimensional expansion and a \textit{modified Log-Sobolev inequality} have led to a slew of analogous improvements on important sampling problems \cite{chen2021optimal,blanca2021mixing,anari2021entropic}. These results, however, usually only apply to dense objects and need stronger assumptions. Given these connections, it is natural to ask whether our theory of hypercontractivity can improve mixing times for general local-spectral expanders in some analogous fashion.

\subsection{Related Work}

\paragraph{Hypercontractivity on Extended Domains:} Nearly 20 years after its introduction, Kahn, Kalai, and Linial \cite{kahn1988influence} revolutionized the study of boolean functions with hypercontractivity. Not long after, a significant interest grew in the development and application of hypercontractivity beyond the hypercube, with a particular focus on product distributions and especially the $p$-biased hypercube \cite{bourgain1992influence,talagrand1994russo,friedgut1996every,friedgut1998boolean}. These works offered a general theory of hypercontractivity for such domains, but their strength depended on the underlying distributions in the product space. While this was sufficient for many important applications including a number of sharp threshold theorems \cite{friedgut1996every,friedgut1999sharp}, the results become meaningless for unbalanced products such as the $p$-biased cube for $p \leq o_n(1)$. This issue was resolved only recently by Keevash, Lifshitz, Long, and Minzer \cite{keevash2019hypercontractivity}, who showed a hypercontractivity theorem for pseudorandom functions on product spaces that is independent of the underlying distributions. This result offered the missing piece for a number of classical applications, including a tight variant of the KKL Theorem (for monotone functions) \cite{keevash2019hypercontractivity}, Majority is Stablest \cite{lifshitz2019noise}, and had a number of other applications in extremal combinatorics \cite{keevash2021global}.

Another line of work has examined hypercontractivity on what are often called ``exotic'' domains: specific objects beyond products such as the slice \cite{khot2018small}, multislice \cite{filmus2018log}, Grassmannian \cite{subhash2018pseudorandom} (or similarly the degree-two short code \cite{barak2018small}), and symmetric group \cite{filmus2020hypercontractivity}. Like KLLM's improved result for product distributions, most of these examples are only hypercontractive for pseudorandom functions (with the multislice being the only exception).\footnote{We note that higher degrees of the short code are also hypercontractive, but only on low Fourier levels for general functions \cite{barak2012making}.} The main application of this line of work has been to agreement testing and hardness of approximation. In particular, hypercontractivity for the Grassmannian was used to prove the soundness of an agreement tester in the ``1\% regime'' needed for the proof of the 2-2 Games Conjecture \cite{khot2017independent,dinur2018towards,dinur2018non,barak2018small,khot2018small,subhash2018pseudorandom}. It is worth noting that agreement testing theorems are also known for local-spectral expanders \cite{dinur2017high,dikstein2019agreement,kaufman2020local} (indeed the objects were originally introduced in this context). These results, however, lie in the ``99\% regime,'' so it is interesting to ask whether our theory of hypercontractivity can be used to build a bounded degree agreement tester in the more difficult 1\% regime.

Finally, we should note that our overarching proof structure for hypercontractivity builds on KMMS' work on the slice (i.e.\ the complete complex). Their techniques, however, rely heavily on the fact that the slice is close in $\ell_1$-distance to a product. This is far from true on local-spectral expanders, especially those of bounded degree which may essentially be as far as possible from products. As previously discussed, this lack of structure is a challenging barrier broken for the first time in this work.

\paragraph{Fourier Analysis on HDX:} Fourier analysis on HDX was originally studied by Diksein, Dinur, Filmus, and Harsha \cite{dikstein2018boolean}, who introduced the HD-Level-Set Decomposition, analyzed its spectral properties, and used it to prove an FKN Theorem for HDX. A similar decomposition was also proposed around the same time by Kaufman and Oppenheim \cite{kaufman2020high}, though their work was more focused on understanding the spectral structure of higher order random walks than on developing a theory of Fourier analysis. In the years since, the HD-Level-Set Decomposition has seen some further development \cite{alev2019approximating,kaufman2020chernoff,bafna2020high}, and the nascent theory has helped build efficient approximation algorithms for certain $k$-CSPs \cite{alev2019approximating} and unique games \cite{bafna2020high}, but the restriction to second moment methods seems to have limited its use otherwise. Towards breaking this same barrier, Gur, Lifshitz, and Liu \cite{GurCommunication} have also (independently) developed a similar theory of hypercontractivity on local-spectral expanders. While their work certainly shares some connections to ours, its main proof techniques differ substantially and we believe the two works are of independent interest.

\subsection{Roadmap}
Having concluded introductory discussion of our work, we lay out a brief roadmap for the rest of the paper. In \Cref{sec:prelims} we give preliminaries and formally define local-spectral expansion and higher order random walks. In \Cref{sec:localization} we discuss our new local-to-global method for higher moments that allows us to move beyond product distributions. In \Cref{sec:Bottom-Up} we discuss our new explicit Fourier Decomposition, its basic properties, and behavior under restriction. In \Cref{sec:hyper} we prove hypercontractivity for pseudorandom functions (\Cref{thm:intro-hypercontraction}). In \Cref{sec:expansion} we apply this result to characterize edge expansion in HD-walks (\Cref{intro:expansion}). Finally in \Cref{sec:Fourier} we introduce analogs of classic Fourier analytic notions such as influence and the noise operator and use them to prove both a KKL Theorem (\Cref{intro:hdx-sharp-threshold}) and noise-sensitivity of pseudorandom functions.

\section{Preliminaries}\label{sec:prelims}
Before moving into proofs and further discussion of our main results, we take a moment to cover the theory of local-spectral expanders and higher order random walks in more detail.
\subsection{Simplicial Complexes}
Our main objects of interest in this work are a family of expanding hypergraphs known as \textbf{local-spectral expanders}. In this context, it will be useful to think of $d$-uniform hypergraphs as objects called \textbf{pure simplicial complexes}.
\begin{definition}[Weighted, Pure Simplicial Complex]
A $d$-dimensional, pure simplicial complex $X=X(0) \cup \ldots \cup X(d)$ on $n$ vertices is the downward closure of a hypergraph $X(d)={[n] \choose d}$ where
\[
X(i) = \left \{ s \in {[n] \choose i}~\bigg\rvert~\exists t \in X(d), s \subseteq t \right \}.
\]
We call the elements of $X(i)$ $i$-faces. A weighted pure simplicial complex $(X,\Pi)$ is a simplicial complex $X$ endowed with a distribution $\Pi$ over $X(d)$. This induces a distribution over each $X(i)$ by downward closure:
\begin{equation}\label{eq:distr-down-clos}
\pi_i(x) = \frac{1}{i+1}\sum\limits_{y \in X(i+1):y \supset x}  \pi_{i+1}(y),
\end{equation}
where $\pi_d = \Pi$.
\end{definition}
Weighted pure simplicial complexes are equivalent to weighted hypergraphs, and we will adopt the former viewpoint throughout the rest of this work. We note that our definition of dimension is off by one from some of the literature which adopts the convention that an $i$-face has $i+1$ vertices. While this is natural from a topological viewpoint, it makes less sense in our combinatorial context.

Weighted simplicial complexes also come equipped with a natural set of inner products. Recall that $C_i = C_i(X)$ denotes the space of functions $f: X(i) \to \mathbb{R}$. The distribution $\Pi=(\pi_d,\ldots,\pi_0)$ induces a natural inner product on each level:
\[
\forall f,g \in C_i: \quad \langle f,g \rangle_{X(i)} = \underset{\tau \sim \pi_i}{\mathbb{E}}[f(\tau)g(\tau)].
\]
When clear from context, we drop $X(i)$ from the notation. Just like on the hypercube, these associated products are a core component of function analysis and the development of Fourier analysis on HDX.

\subsection{Local Spectral Expansion}
In this work we focus on a recent spectral notion of high dimensional expansion called \textit{two-sided local-spectral expansion} introduced by Dinur and Kaufman \cite{dinur2017high}. The definition hinges crucially on a form of local structure in simplicial complexes called \textbf{links}.
\begin{definition}[Link]
Let $(X,\Pi)$ be a $d$-dimensional weighted, pure simplicial complex. The \textbf{link} of an $i$-face $s \in X(i)$ is a $(d-i)$-dimensional pure simplicial complex given by the restriction of $X$ to faces containing $s$, that is:
\[
X_s = \{t \setminus s \in X~|~ t \supseteq s\}.
\]
We call $X_s$ an \textbf{$i$-link}. Throughout the rest of the paper, $X_s$ will always refer to its weighted version $(X_s,\Pi_s)$ where $\Pi_s$ is induced by the original distribution $\Pi$ by normalizing over top level faces of $X_s$.
\end{definition}
When analyzing a particular level $k$ of the complex, we will often abuse notation and write $X_s$ to mean the set of $k$-faces in $X$ which contain $s$ when clear from context.

Much of the high dimensional expansion literature centers around what is called the \textbf{local-to-global paradigm}, where properties on links are lifted to global properties on a complex. Local-spectral expansion can be seen as a definitional formalization of this notion: a complex is said to be expanding if all its local parts are expanding.
\begin{definition}[Local-spectral expansion \cite{dinur2017high}]
A weighted, pure simplicial complex $(X,\Pi)$ is a two-sided $\gamma$-local-spectral expander if for every $i \le d-2$ 
and every face $s \in X(i)$, the underlying graph\footnote{The underlying graph of a complex $X$ is $G=(V=X(1),E=X(2))$.} of $X_s$
is a two-sided $\gamma$-spectral expander.\footnote{A weighted graph is a two-sided $\gamma$-spectral expander if $\max\{|\lambda_2|,|\lambda_n|\} \leq \gamma$.}
\end{definition}

\subsection{Higher Order Random Walks}
Just like expander graphs are inextricably tied to their underlying random walks, local-spectral expanders are similarly connected to an analogous set of random processes known as higher order random walks (HD-walks). In \Cref{sec:background}, we discussed one example of these objects called the canonical walks that move between $k$-faces via a shared $(k+i)$-face, and saw that these could be defined by the averaging operators. We'll now extend these definitions to the more general setting of weighted simplicial complexes and, as well as define HD-walks in full generality. We'll start with the weighted averaging operators.
\begin{definition}[Averaging Operators]
Let $(X,\Pi)$ be a $d$-dimensional weighted, pure simplicial complex. For every $0 \leq k < d$, the \textbf{Up Operator} $U_k$ lifts functions from $C_k$ to $C_{k+1}$ by averaging:
\[
\forall \tau \in X(k+1): U_kf(\tau) = \frac{1}{k+1}\sum\limits_{\sigma \in X(k): \sigma \subset \tau}f(\sigma).
\]
Similarly, the \textbf{Down Operator} lowers functions from $C_{k+1}$ to $C_k$ by averaging:
\[
\forall \tau \in X(k): D_{k+1}f(\tau) = \frac{1}{\pi_{k+1}(X_\tau)}\sum\limits_{\sigma \in X_\tau} \pi_{k+1}(\sigma)f(\sigma),
\]
where $\pi_{k+1}(X_\tau) = \sum\limits_{\sigma \in X_\tau}\pi_{k+1}(\sigma)$, and the sum is over $k+1$ faces of $X$ containing $\tau$.
\end{definition}
It is worth noting that the averaging operators are \textit{adjoint} with respect to the associated inner products mentioned in the previous section, that is for any $f \in X(i)$ and $g \in X(i-1)$:
\[
\langle f,U_ig\rangle_{X(i)} = \langle D_if,g \rangle_{X(i-1)} = \underset{(\sigma,\tau) \sim (\pi_{i},\pi_{i-1})}{\mathbb{E}}[f(\sigma)g(\tau)].
\]
This means that basic combinations of the operators such as the canonical walks discussed in \Cref{sec:background} are \textit{self-adjoint} and therefore have a spectral decomposition.

Let's now formalize the notion of higher order random walks. We'll start with a basic version called \textbf{pure walks} that are simply a composition of the averaging operators.
\begin{definition}[Pure Walk \cite{alev2019approximating}]
Given a weighted, pure simplicial complex $(X,\Pi)$, a $k$-dimensional pure walk $Y: C_k \to C_k$ on $(X,\Pi)$ of height $h(Y)$ is a composition:
\[
Y = Z_{2h(Y)} \circ \cdots \circ Z_{1},
\]
where each $Z_i$ is a copy of $D$ or $U$.
\end{definition}
For the moment we won't force these walks to be self adjoint, but as we noted basic examples such as $N_k^i=D_{k}^{k+i}U_k^{k+i}$ do satisfy this constraint.

We define general higher order random walks to be any linear combinations of pure walks which is stochastic and self-adjoint.
\begin{definition}[HD-walk \cite{alev2019approximating}]\label{def:HD-walk}
Let $(X,\Pi)$ be a pure, weighted simplicial complex, and $\mathcal Y$ a family of pure walks $Y: C_k \to C_k$ on $(X,\Pi)$. We call a linear combination
\[
M = \sum\limits_{Y \in \mathcal Y} \alpha_Y Y
\]
a $k$-dimensional HD-walk on $(X,\Pi)$ as long as it is stochastic and self-adjoint. We call $w(M) \coloneqq \sum |\alpha_Y|$ the weight of $M$, and $h(M) = \max\{h(Y)\}$ its height.
\end{definition}

\subsection{Rectangular Swap Walks}
\Cref{def:HD-walk} only captures walks which stay on some fixed level of the complex. While these are certainly our main object of study, it turns out that in analysis it is often useful to consider \textit{rectangular walks} which move between levels of the complex. We will be particularly interested in a rectangular walk introduced independently by Alev, Jeronimo, and Tusiani \cite{alev2019approximating}, and Dikstein and Dinur \cite{dikstein2019agreement} called the \textit{swap walk}. Informally, the swap walk from $X(i)$ to $X(j)$ moves from an $i$-face $\tau$ to a $j$-face $\sigma$ through a shared $(i+j)$-face, but \textit{swaps out all original elements in $\tau$}. In other words, the intersection between $\tau$ and $\sigma$ must be empty, and the shared $(i+j)$-face is exactly $\tau \cup \sigma$. To formalize this, it is useful to first introduce the more basic rectangular walk moving between $X(i)$ and $X(j)$ with no such restrictions.
\begin{definition}[Rectangular Canonical Walks]
Let $(X,\Pi)$ be a $d$-dimensional, pure, weighted simplicial complex. For any $i+j \leq d$, the rectangular canonical walk $N_{i,j}$ is the natural operator moving between $X(i)$ and $X(j)$ through $X(i+j)$:
\[
N_{i,j} = D^{i+j}_{i}U^{i+j}_j.
\]
\end{definition}
Swap walks are then defined by forcing the down steps in a canonical walk to remove only vertices in the initial face.
\begin{definition}[Rectangular Swap Walks]
Let $(X,\Pi)$ be a $d$-dimensional, pure, weighted simplicial complex. For any $i+j \leq d$, the rectangular swap walk $S_{i,j}$ is the (normalized) restriction of $N_{i,j}$ to pairs $(\tau,\sigma) \in X(i) \times X(j)$ such that $|\tau \cap \sigma|=0$.
\end{definition}
Swap walks appear naturally in a number of areas, including agreement testing \cite{dikstein2019agreement}, coding theory \cite{jeronimo2021near}, and approximation algorithms \cite{alev2019approximating,bafna2020high} and were well studied even before their formal introduction on HDX. On the complete complex, for instance, rectangular swap walks are exactly the bipartite Kneser graphs. Swap walks are particularly useful in these contexts because unlike their canonical counterpart canonical walks, they are actually great expanders.
\begin{theorem}[Theorem 7.1 \cite{dikstein2019agreement}]\label{thm:dd}
Let $(X,\Pi)$ be a $d$-dimensional two-sided $\gamma$-local-spectral expander. Then for any $i+j \leq d$, the spectral expansion of $S_{i,j}$ is at most:
\[
\lambda(S_{i,j}) \leq ij\gamma,
\]
where $\lambda(S_{i,j})$ is the second largest singular value of $S_{i,j}$.
\end{theorem}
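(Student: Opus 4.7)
The plan is to prove the bound by induction on $\min(i,j)$, leveraging the self-similar structure of local-spectral expanders under the link operation. The base case $(i,j)=(1,1)$ is immediate: $S_{1,1}$ is (up to the removal of self-loops) the underlying graph of $X$ on vertex set $X(1)$, which is a two-sided $\gamma$-expander by local-spectral expansion applied to the empty face.

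For the inductive step, I would decompose the rectangular canonical walk $N_{i,j} = D^{i+j}_i U^{i+j}_j$ according to the intersection size between the starting $i$-face and ending $j$-face. Concretely, write
\[
N_{i,j} \;=\; \sum_{\ell=0}^{\min(i,j)} p_\ell\, S^{(\ell)}_{i,j},
\]
where $S^{(\ell)}_{i,j}$ is the normalized rectangular walk restricted to pairs $(\tau,\sigma)$ with $|\tau \cap \sigma|=\ell$, so that in particular $S^{(0)}_{i,j} = S_{i,j}$. For each $\ell \geq 1$, the walk $S^{(\ell)}_{i,j}$ factors through the link of a common $\ell$-face: it is an average over $\rho \in X(\ell)$ of swap walks $S_{i-\ell,\,j-\ell}$ inside the link $X_\rho$. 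Since links of a two-sided $\gamma$-local-spectral expander are themselves two-sided $\gamma$-local-spectral expanders, the inductive hypothesis bounds each such localized swap walk by $(i-\ell)(j-\ell)\gamma$. Meanwhile, the norm of $N_{i,j}$ on the orthogonal complement of constants can be controlled directly via the factorization $N_{i,j} = D^{i+j}_i U^{i+j}_j$ together with $\|U\| = 1$ and the known spectral bound on $D^{i+j}_0 U^{i+j}_0$ from local-spectral expansion. Solving the inclusion-exclusion identity above for $S_{i,j}$ then yields the target bound $\lambda(S_{i,j}) \leq ij\gamma$.

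The main obstacle will be the bookkeeping in the inclusion-exclusion: the coefficients $p_\ell$ depend on the number of common extensions to $(i+j)$-faces, which in turn is governed by the weight distribution $\pi_{i+j}$ and its induced link distributions. A careful count must ensure that when the inductive error terms of order $(i-\ell)(j-\ell)\gamma$ are weighted by $p_\ell$ and summed, they combine cleanly to give a bound linear in $ij$ rather than a weaker polynomial. A secondary subtlety, closely related to the restriction-versus-localization distinction emphasized in \Cref{sec:intro-local}, is that expressing $S^{(\ell)}_{i,j}$ as an average of localized swap walks on links $X_\rho$ requires normalizing by $\Pi_\rho$, and one must verify that this reweighting is compatible with the singular-value formulation so that the bounds composed across the induction remain tight.
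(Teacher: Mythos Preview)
The paper does not contain a proof of this theorem. \Cref{thm:dd} is stated as a citation of Theorem~7.1 from \cite{dikstein2019agreement} and is used as a black box throughout (most notably in \Cref{cor:localization}); the only additional remark the paper makes is that the same result was proved concurrently in \cite{alev2019approximating} with worse constants. There is therefore no ``paper's own proof'' to compare your proposal against.

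As a side comment on your sketch itself: the high-level strategy of inducting via links is indeed the route taken in \cite{dikstein2019agreement}, but the induction there is not on $\min(i,j)$ via the full inclusion-exclusion over all intersection sizes. Rather, they peel off one vertex at a time, relating $S_{i,j}$ to an average of $S_{i-1,j}$ (or $S_{i,j-1}$) over $1$-links, which keeps the bookkeeping linear and produces the clean $ij\gamma$ bound directly. Your proposed decomposition of $N_{i,j}$ into all $S^{(\ell)}_{i,j}$ simultaneously would work in principle, but as you yourself flag, controlling the weighted sum of $(i-\ell)(j-\ell)\gamma$ terms so that it collapses to exactly $ij\gamma$ is delicate, and the cleaner one-step-at-a-time induction avoids this entirely.
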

We note that this result was concurrently proved by AJT \cite{alev2019approximating}, albeit with a quantitatively worse bound.

\section{Localization Beyond the Second Moment}\label{sec:localization}
Localization is one of the (if not the) most important technique in the analysis of high dimensional expanders. Classic results, often grouped together under the name \textit{Garland's method}, show how global functions on simplicial complexes can be broken down into an average over local parts. There are two forms of Garland's method that will be relevant to our work. The first handles \textit{restrictions} of a function $f$ in $C_k$ to any $\tau \in X(i)$, s.t. $\restrict{f}{\tau} \in C_{k-i}(X_\tau)$ satisfies:
\[
\forall \sigma \in X_\tau(k-i): \restrict{f}{\tau}(\sigma) = f(\tau \cup \sigma).
\]
\begin{lemma}[Garland's method (restrictions) \cite{kaufman2020high}]\label{lemma:garland-restrict}
Let $(X,\Pi)$ be a weighted, pure simplicial complex, and $f \in C_k$. Then for any $i \leq k$, $\norm{f}^2$ is equal to its average second moment restricted to $i$-links:
\[
\langle f,f \rangle = \underset{\tau \in X(i)}{\E}[\langle \restrict{f}{\tau},\restrict{f}{\tau} \rangle].
\]
\end{lemma}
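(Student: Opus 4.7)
The plan is to reduce the claim to a statement about sampling: the right-hand side can be rewritten as the expectation of $f^2$ over a two-step process on $X(k)$, and I would show this process produces exactly $\pi_k$. Once that is done, both sides equal $\mathbb{E}_{\rho\sim\pi_k}[f(\rho)^2]$ and the lemma is immediate.

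First I would unwrap the right-hand side. Expanding the link inner product, $\langle\restrict{f}{\tau},\restrict{f}{\tau}\rangle$ is the expectation of $f(\tau\cup\sigma)^2$ over $\sigma$ drawn from the level-$(k-i)$ distribution of the weighted link $(X_\tau,\Pi_\tau)$, which is itself obtained from $\Pi_\tau$ via the downward-closure recursion \eqref{eq:distr-down-clos} applied inside $X_\tau$. Since $\Pi_\tau$ is by definition $\pi_d$ conditioned on containing $\tau$, the overall right-hand side is the expectation of $f(\rho)^2$ under the following process: draw $\tau\sim\pi_i$, then a top face $\rho''\in X(d)$ with $\rho''\supset\tau$ from the conditional distribution $\pi_d(\,\cdot\mid\rho''\supset\tau)$, then a uniform random $(k-i)$-subset $\sigma\subset\rho''\setminus\tau$, and set $\rho:=\tau\cup\sigma$.

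The key step is then to check that $\rho$ is distributed as $\pi_k$. Iterating \eqref{eq:distr-down-clos} gives the closed form $\pi_j(x) = \binom{d}{j}^{-1}\sum_{\rho''\supset x,\,\rho''\in X(d)}\pi_d(\rho'')$ for every $j\le d$. Combined with a direct Bayes calculation, this shows that the joint distribution of $(\tau,\rho'')$ produced above coincides with the one obtained by first drawing $\rho''\sim\pi_d$ and then choosing $\tau$ as a uniform random $i$-subset of $\rho''$. Conditional on $\rho''$, jointly picking a uniform $i$-subset $\tau$ together with a uniform $(k-i)$-subset $\sigma$ of $\rho''\setminus\tau$ is tautologically the same as picking a uniform random $k$-subset $\rho\subset\rho''$, and marginalizing over $\rho''\sim\pi_d$ this is again exactly $\pi_k$ by the same closed form.

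A slicker alternative is induction on $i$: the base case $i=0$ is trivial since $X_\emptyset=X$, and the inductive step reduces to the single-vertex identity $\langle f,f\rangle=\mathbb{E}_{v\sim\pi_1}\langle\restrict{f}{v},\restrict{f}{v}\rangle$ that follows from a single application of \eqref{eq:distr-down-clos}. I do not expect a real conceptual obstacle here --- the lemma is really just consistency of the level-wise induced distributions $\pi_0,\ldots,\pi_d$ and their restrictions to links --- and the only thing to be careful about is bookkeeping the combinatorial normalizations $\binom{d}{i}$, $\binom{d-i}{k-i}$, $\binom{d}{k}$ so the marginals line up on the nose.
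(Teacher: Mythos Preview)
Your proposal is correct. The paper does not actually prove this lemma --- it is stated with a citation to \cite{kaufman2020high} and treated as a known fact --- so there is no proof in the paper to compare against. Your sampling argument is the standard one: the identity $\binom{d}{i}\binom{d-i}{k-i}=\binom{d}{k}\binom{k}{i}$ is exactly what makes the two-step marginal collapse to $\pi_k$, and your inductive alternative works equally well.
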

The second form of interest handles \textit{localizations} of $f$ to $\tau \in X(i)$, where $\localize{f}{\tau} \in C_k(X_\tau)$ lifts $f$ from $X(k)$ to $X_\tau(k)$:
\[
\forall \sigma \in X_\tau(k): \localize{f}{\tau}(\sigma) = f(\sigma).
\]
\begin{lemma}[Garland's method (localizations) \cite{oppenheim2018local}]\label{lemma:garland-local}
Let $(X,\Pi)$ be a $d$-dimensional, weighted, pure simplicial complex, and $f \in C_k$. Then for any $k+i \leq d$, $\norm{f}_2^2$ is equal to its average second moment localized to $i$-links:
\[
\langle f,f \rangle = \underset{\tau \in X(i)}{\E}[\langle \localize{f}{\tau},\localize{f}{\tau} \rangle].
\]
\end{lemma}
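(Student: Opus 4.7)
The plan is to unfold both sides, reduce the claim to a statement about marginal distributions, and verify that statement by re-sampling from the top level of the complex. Specifically, write the right-hand side as
\[
\underset{\tau \sim \pi_i}{\E}\Big[\underset{\sigma \sim \pi_{\tau,k}}{\E}[f(\sigma)^2]\Big],
\]
where $\pi_{\tau,k}$ denotes the measure on $k$-faces of the link $X_\tau$ (pushed forward from $\pi_\tau$ by the recursion in \eqref{eq:distr-down-clos}). Since the summand depends only on $\sigma$, everything reduces to showing that the marginal distribution of $\sigma$ under the two-stage sample $(\tau,\sigma) \sim (\pi_i, \pi_{\tau,k})$ is exactly $\pi_k$; once this is established we immediately get $\E_\sigma[f(\sigma)^2] = \langle f,f\rangle$.

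To prove the marginal claim, I would pass through an auxiliary ``top-down'' sampler on $X(k+i)$, which is well-defined since $k+i \leq d$. Concretely, I would show that both of the following procedures produce the same distribution on pairs $(\tau,\sigma)$ with $\tau \cap \sigma = \emptyset$ and $\tau \cup \sigma \in X(k+i)$:
\begin{enumerate}
\item Sample $\tau \sim \pi_i$, then $\sigma \sim \pi_{\tau,k}$.
\item Sample $\rho \sim \pi_{k+i}$, then choose a uniformly random $i$-subset $\tau \subset \rho$, and set $\sigma = \rho \setminus \tau$.
\end{enumerate}
The first procedure, after unfolding the link's distribution in terms of $\pi_{d}$ via iterated application of \eqref{eq:distr-down-clos}, expresses the joint probability as a weighted sum over top-faces $\eta \supset \tau \cup \sigma$; the second, after similarly pushing $\pi_{k+i}$ up to $\pi_d$, yields the same sum because the $i$-subset and $k$-subset are chosen symmetrically. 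This type of symmetric re-sampling identity is exactly what underlies the classical restriction form of Garland's method (\Cref{lemma:garland-restrict}), and the argument for localizations is structurally the same.

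Given the equivalence, the marginal on $\sigma$ in procedure (2) is obtained by sampling $\rho \sim \pi_{k+i}$ and then taking a uniformly random $k$-subset, which is precisely the defining recursion for $\pi_k$ iterated $i$ times. Hence the marginal is $\pi_k$, as needed, and the lemma follows by the chain of equalities
\[
\underset{\tau \in X(i)}{\E}\langle \localize{f}{\tau}, \localize{f}{\tau}\rangle \;=\; \underset{(\tau,\sigma)}{\E}[f(\sigma)^2] \;=\; \underset{\sigma \sim \pi_k}{\E}[f(\sigma)^2] \;=\; \langle f,f\rangle.
\]

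The main obstacle is purely bookkeeping: making sure the normalization factors in the recursive definition of $\pi_i$ match the normalization factors that appear when conditioning $\pi_d$ on containing $\tau$ (i.e., correctly handling the measure on the link). I expect no real difficulty beyond carefully tracking the binomial coefficients $\binom{d}{i}$, $\binom{d}{k+i}$, and $\binom{k+i}{i}$ and confirming they cancel appropriately; once these are in place the identity of joint distributions is essentially a tautology.
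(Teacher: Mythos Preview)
Your approach is correct; the paper does not actually prove this lemma but simply cites it from \cite{oppenheim2018local} as a known instance of Garland's method. Your reduction to the marginal-distribution claim and the verification via the auxiliary top-down sampler on $X(k+i)$ is the standard argument, and indeed the two-stage sample you describe is precisely a step of the rectangular swap walk $S_{i,k}$ (see \Cref{sec:prelims}), whose marginal on the $k$-side is $\pi_k$ by construction.
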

Garland's method will play an important role in the analysis of \Cref{thm:intro-hypercontraction}, but the results are generally only useful once a problem has been reduced to analyzing second moments. Since we are mainly interested in hypercontractivity and analyzing higher moments, Garland's method alone won't be sufficient. 

To this end, we introduce a new technique for analyzing higher moments on two-sided local-spectral expanders. At its core, the strategy relies on a deceptively simple observation: the difference between the global expectation of $f$ and its localized expectation over links is exactly given by an application of the swap walk minus its stationary operator.
\begin{lemma}\label{lemma:localization}
Let $(X,\Pi)$ be a $d$-dimensional pure, weighted simplicial complex and $f \in C_i$. Then for any $v \in X(j)$ such that $i+j \leq d$, we have:
\[
\underset{X_{v}}{\E}[\localize{f}{v}] - \E[f] = (S_{j,i} - U^j_0D_0^i)f(v).
\]
\end{lemma}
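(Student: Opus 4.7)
\medskip
\noindent\textbf{Proof plan for \Cref{lemma:localization}.}
The plan is essentially to unpack the two operators on the right-hand side and match them pointwise with the two terms on the left-hand side. Both identifications are ``definitional,'' but the swap-walk side requires a brief sanity check on weightings. I would organize the argument as follows.

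First, I would unpack $S_{j,i}f(v)$ for $v \in X(j)$ using the definition of the rectangular swap walk. By definition, $S_{j,i}$ is the restriction of $N_{j,i} = D^{i+j}_{j}U^{i+j}_i$ to pairs $(v,\sigma) \in X(j) \times X(i)$ with $v \cap \sigma = \emptyset$. Since $|v| + |\sigma| = j + i$ equals the ambient size, the disjointness constraint forces the shared $(i+j)$-face to be exactly $\rho = v \cup \sigma$. So, reading the walk as going from $v$ up to a random $\rho \in X(i+j)$ containing $v$ and then taking $\sigma = \rho\setminus v$, we get
\[
S_{j,i}f(v) \;=\; \underset{\rho \in X(i+j) : \rho \supset v}{\E}\bigl[f(\rho \setminus v)\bigr].
\]
The change of variables $\sigma = \rho \setminus v$ is a bijection onto $X_v(i)$, and I would then verify that the pushforward of the conditional distribution of $\rho$ under this bijection coincides with the induced distribution $\pi^{X_v}_i$ on $X_v(i)$. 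This is the ``bookkeeping'' step: both distributions arise from iteratively applying the downward-closure rule of \Cref{eq:distr-down-clos} to $\Pi$, and they agree because conditioning on containing $v$ and then stripping $v$ is exactly how the weighting on $X_v$ is defined. Granted this identification, $S_{j,i}f(v) = \underset{X_v}{\E}[\localize{f}{v}]$, since $\localize{f}{v}(\sigma) = f(\sigma)$ on $X_v(i)$.

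Second, I would handle $U^j_0 D^i_0 f(v)$, which is much easier. The operator $D^i_0 : C_i \to C_0$ averages $f$ all the way down to the empty face, yielding the scalar $\E[f]$ (viewed as a constant function in $C_0$). The operator $U^j_0 : C_0 \to C_j$ then lifts a constant back to $C_j$ as the same constant function; hence $U^j_0 D^i_0 f(v) = \E[f]$ for every $v \in X(j)$.

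Combining the two computations at a fixed $v \in X(j)$ gives
\[
(S_{j,i} - U^j_0 D^i_0)f(v) \;=\; \underset{X_v}{\E}[\localize{f}{v}] \;-\; \E[f],
\]
which is the claim. The only substantive step is verifying the weight-matching in the first paragraph; once one believes that the conditional distribution of $(i{+}j)$-faces containing $v$ restricted to $X_v(i)$ via $\rho \mapsto \rho \setminus v$ is exactly $\pi^{X_v}_i$, the lemma is a one-line unpacking of definitions, and the apparent ``subtraction of a stationary operator'' is simply the observation that $U^j_0 D^i_0$ is the rank-one operator projecting onto the global mean.
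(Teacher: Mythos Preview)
Your proposal is correct and follows essentially the same approach as the paper: both arguments reduce to the definitional identifications $S_{j,i}f(v) = \E_{X_v}[\localize{f}{v}]$ and $U^j_0 D^i_0 f(v) = \E[f]$, with the paper expanding the left-hand side and you expanding the right-hand side. Your explicit discussion of the weight-matching step (that the conditional distribution on $(i{+}j)$-faces containing $v$ pushes forward to $\pi^{X_v}_i$ under $\rho \mapsto \rho \setminus v$) is a bit more careful than the paper, which simply asserts this by writing $\sum_{w} \pi_{v,i}(w) f(w) = S_{j,i}f(v)$.
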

\begin{proof}
This is essentially immediate from expanding the left-hand side. We have:
\begin{align*}
    \underset{X_{v}}{\E}[\localize{f}{v}] - \E[f] &= \sum\limits_{w \in X(i)} \pi_{v,i}(w)f(w) - \sum\limits_{w \in X(i)}\pi_i(w)f(w)\\
    &= \sum\limits_{w \in X(i)} \left(\pi_{v,i}(w) - \pi_i(w)\right)f(w)\\
    &= (S_{j,i} - U^j_0D_0^i)f(v),
\end{align*}
where $\pi_{v,i}(w)=0$ for any $w \notin X_v(i)$.
\end{proof}
\Cref{lemma:localization} is particularly powerful on two-sided local-spectral expanders, since the spectral norm of $||S_{j,i} - U^j_0D_0^i||$ is small on every link by \Cref{thm:dd}.
\begin{corollary}\label{cor:localization}
Let $(X,\Pi)$ be a $d$-dimensional two-sided $\gamma$-local-spectral expander, $f \in C_i$, and $\tau \in X(\ell)$ for any $\ell <i$. Then for any $v \in X_{\tau}(j)$ such that $i+j-\ell \leq d$, the global and localized expectations of $f$ differ by:
\[
\underset{X_{\tau \cup v}}{\E}[\restrict{f}{\tau}] - \underset{X_{\tau}}{\E}[\restrict{f}{\tau}] = \Gamma \restrict{f}{\tau}(v)
\]
where $\Gamma: C_{i-\ell}(X_\tau) \to C_j(X_\tau)$ is an operator with spectral norm at most $||\Gamma||\leq (i-\ell)j\gamma$.
\end{corollary}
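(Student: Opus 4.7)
The plan is to reduce the corollary to \Cref{lemma:localization} applied inside the link $X_\tau$. The key observation is that $X_\tau$ is itself a two-sided $\gamma$-local-spectral expander: its links are exactly the links $X_{\tau \cup \sigma}$ of the ambient complex $X$, and two-sided spectral expansion of those underlying graphs is inherited directly from the hypothesis on $X$. Moreover, the restricted function $\restrict{f}{\tau}$ lives in $C_{i-\ell}(X_\tau)$, and the link of $v \in X_\tau(j)$ inside $X_\tau$ is exactly $X_{\tau \cup v}$. So the left-hand side of the corollary matches precisely the left-hand side of \Cref{lemma:localization} with the underlying complex replaced by $X_\tau$, the dimension $i$ replaced by $i-\ell$, and the function replaced by $\restrict{f}{\tau}$.

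Given this reduction, \Cref{lemma:localization} immediately identifies $\Gamma$ with $S_{j,i-\ell}^{X_\tau} - U_0^j D_0^{i-\ell}$, where the swap walk and averaging operators are now those of the link $X_\tau$. The remaining step is the spectral bound. First, I would note that $U_0^j D_0^{i-\ell}$ is precisely the projection (within $X_\tau$) onto the constant function $g \mapsto \E_{X_\tau}[g]$, which is the top singular vector of the stochastic, self-adjoint swap walk with singular value $1$. Consequently the operator norm of $\Gamma$ equals the second largest singular value of $S_{j,i-\ell}^{X_\tau}$. Applying \Cref{thm:dd} to the $\gamma$-local-spectral expander $X_\tau$ yields $\|\Gamma\| \leq (i-\ell)j\gamma$, completing the proof.

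The only real subtlety is the inheritance of local-spectral expansion by links; once that is in hand, everything else is bookkeeping and a direct invocation of the previously established lemmas. I expect no serious obstacle, since both \Cref{lemma:localization} and \Cref{thm:dd} are stated in exactly the form needed, and the identification of $U_0^j D_0^{i-\ell}$ as the rank-one "stationary" projector is immediate from the definitions of the up and down operators.
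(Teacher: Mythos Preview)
Your proposal is correct and follows essentially the same approach as the paper: apply \Cref{lemma:localization} to $\restrict{f}{\tau}$ on the link $X_\tau$ (which inherits two-sided $\gamma$-local-spectral expansion), identify $\Gamma$ as the swap walk minus its stationary operator on $X_\tau$, and bound its norm by the second singular value via \Cref{thm:dd}. The paper's proof is slightly terser but uses exactly the same ingredients in the same order.
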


\begin{proof}
Applying \Cref{lemma:localization} to the restricted function $\restrict{f}{\tau}$ on $X_\tau$ (which is also a two-sided $\gamma$-local-spectral expander), we have that the left-hand side is exactly given by $(S_{j,i-\ell} - U_0^{j}D_0^{i-\ell})_{\tau}\restrict{f}{\tau}(v)$. Since $U_0^{j}D_0^{i-\ell}$ is the stationary operator of $S_{j,i-\ell}$, the spectral norm $\norm{S_{j,i-\ell} - U_0^{j}D_0^{i-\ell}}$ is exactly the second largest singular value of $S_{j,i-\ell}$. As discussed in \Cref{thm:dd}, Dikstein and Dinur \cite[Theorem 7.1]{dikstein2019agreement} proved that this quantity is at most $(i-\ell)j\gamma$ on any two-sided $\gamma$-local-spectral expander.
\end{proof}
\section{The Bottom-Up Decomposition}\label{sec:Bottom-Up}
In this section, we introduce the Bottom-Up Decomposition, an explicit combinatorial decomposition on simplicial complexes which approximates Dikstein, Dinur, Filmus, and Harsha's HD-Level-Set Decomposition \cite{dikstein2018boolean}. This is particularly useful since the latter decomposition essentially corresponds to the eigenspaces of HD-walks \cite{dikstein2018boolean,alev2019approximating,bafna2020high}.
\begin{definition}[Level Functions (Recursive Form)]\label{def:recursive}
Let $(X,\Pi)$ be a $d$-dimensional pure simplicial complex and $f \in C_k$ any function. The $i$th level function of the Bottom-Up Decomposition is given by
\[
\gup{i} = D^k_i f - \sum\limits_{j=0}^{i-1} {i \choose j}U^i_j\gup{j}
\]
\end{definition}
The Bottom-Up Decomposition is given by lifting the level functions via the up operator.
\begin{theorem}[Bottom-Up Decomposition (Explicit Form)]\label{thm:explicit}
Let $(X,\Pi)$ be a $d$-dimensional pure simplicial complex and $f \in C_k$ any function. Let $\fup{i}={k \choose i}U^k_i\gup{i}$ be the lift of the $i$th level function to $C_k$. Then the following statements hold:
\begin{enumerate}
    \item The lifted level functions give a decomposition of $f$:
    \[f = \sum\limits_{i=0}^k \fup{i}\]
    \item The lifted level functions have the following explicit form:
    \[
    \fup{i} = {k \choose i}\sum_{j=0}^i (-1)^{i-j} \binom{i}{j} U_k^i D_j^k f,
    \]
    or equivalently for all $\tau \in X(i)$:
    \[
    \gup{i}(\tau) = \sum\limits_{\sigma \subseteq \tau}(-1)^{|\tau \setminus \sigma|}\underset{X_\sigma}{\mathbb{E}}[f], \quad \fup{i}(\tau)=\sum_{\sigma \in X(i): \sigma \subset \tau} \gup{i}(\sigma)
    \]
\end{enumerate}
\end{theorem}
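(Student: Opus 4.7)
The statement has two parts: (1) the decomposition $f = \sum_{i=0}^k \fup{i}$, and (2) the two explicit forms for $\gup{i}$ (and hence for $\fup{i}$). I would attack them in that order, since the decomposition follows essentially from the recursion, and the explicit forms just require an inductive unrolling.

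For part (1), the plan is to specialize \Cref{def:recursive} to $i = k$. Because $D^k_k$ is the identity operator, the recursion at $i = k$ reads $\gup{k} = f - \sum_{j=0}^{k-1}\binom{k}{j}U^k_j\gup{j}$. Using that $U^k_k$ is the identity so $\fup{k} = \binom{k}{k}\gup{k} = \gup{k}$, rearranging immediately yields $f = \sum_{i=0}^k \binom{k}{i}U^k_i \gup{i} = \sum_{i=0}^k \fup{i}$. No further work is required.

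For part (2), I would prove the first explicit form $\gup{i} = \sum_{j=0}^i(-1)^{i-j}\binom{i}{j}U^i_j D^k_j f$ by induction on $i$. The base case $i = 0$ is immediate: the recursion gives $\gup{0} = D^k_0 f$, which matches the single-term sum. For the inductive step, I would plug the inductive hypothesis into the recursion, use the composition identity $U^i_\ell U^\ell_j = U^i_j$ to collapse the repeated up-operators, and swap the order of summation to collect the coefficient of each $U^i_j D^k_j f$ (for $j < i$). The resulting inner sum is $\sum_{\ell=j}^{i-1}(-1)^{\ell-j}\binom{i}{\ell}\binom{\ell}{j}$, which I would simplify via the subset-of-subset identity $\binom{i}{\ell}\binom{\ell}{j} = \binom{i}{j}\binom{i-j}{\ell-j}$ followed by the fact that $(1-1)^{i-j} = 0$; the missing $\ell = i$ term then forces the inner sum to equal $-(-1)^{i-j}\binom{i}{j}$. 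Combined with the $j = i$ term coming from $D^k_i f$, this yields exactly the claimed explicit form. The explicit form for $\fup{i}$ then follows by applying $\binom{k}{i}U^k_i$ to both sides, and using $U^k_i U^i_j = U^k_j$.

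For the combinatorial reformulation, I would simply interpret the operators on a face. Since $D^k_j f(\sigma) = \underset{X_\sigma}{\mathbb{E}}[f]$ for $\sigma \in X(j)$, and since $U^i_j g(\tau) = \binom{i}{j}^{-1}\sum_{\sigma \subset \tau,\,|\sigma|=j} g(\sigma)$ when $\tau \in X(i)$, the factor $\binom{i}{j}$ from the explicit form cancels exactly against the normalization inside $U^i_j$, leaving $\gup{i}(\tau) = \sum_{j=0}^i (-1)^{i-j}\sum_{\sigma \subset \tau,\,|\sigma|=j}\underset{X_\sigma}{\mathbb{E}}[f] = \sum_{\sigma \subseteq \tau}(-1)^{|\tau \setminus \sigma|}\underset{X_\sigma}{\mathbb{E}}[f]$. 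The formula $\fup{i}(\tau) = \sum_{\sigma \subset \tau,\,|\sigma|=i}\gup{i}(\sigma)$ is then just the unpacking of $\binom{k}{i}U^k_i\gup{i}$ at $\tau$. There is no real obstacle here; the only mildly delicate point is the binomial identity in the inductive step, which amounts to a signed inclusion–exclusion on the Boolean lattice of subsets of a fixed $i$-face. Conceptually, the combinatorial form is exactly the Möbius inversion of the relation $\underset{X_\tau}{\mathbb{E}}[f] = \sum_{\sigma \subseteq \tau}\binom{i}{|\sigma|}^{-1}$-weighted sums of $\gup{|\sigma|}(\sigma)$ encoded by the recursion, so once one recognizes this, the whole proposition is a consequence of inclusion–exclusion dressed up in HDX notation.
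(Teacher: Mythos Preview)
Your proof is correct, and part (2) is essentially the paper's own inductive argument (same substitution of the inductive hypothesis into the recursion, same collapse $U^i_\ell U^\ell_j = U^i_j$, same binomial identity $\sum_{\ell=j}^{i-1}(-1)^{\ell-j}\binom{i}{\ell}\binom{\ell}{j} = -(-1)^{i-j}\binom{i}{j}$ for $j<i$). The combinatorial reformulation is likewise just the unpacking the paper alludes to.

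Where you differ is in part (1). The paper first establishes the explicit form (2) and then derives the decomposition from it by summing $\sum_{i=0}^k\binom{k}{i}U^k_i\sum_{j=0}^i(-1)^{i-j}\binom{i}{j}U^i_jD^k_j f$, swapping the order of summation, and invoking $\sum_{i=j}^k(-1)^{i-j}\binom{k}{i}\binom{i}{j}=\delta_{jk}$. Your route is more direct: you simply read off the recursion at $i=k$, using $D^k_k=U^k_k=\mathrm{Id}$, and rearrange. This is cleaner and logically independent of (2); the paper's approach, on the other hand, makes the Möbius-inversion structure a bit more visible (the identity $\sum_i(-1)^{i-j}\binom{k}{i}\binom{i}{j}=\delta_{jk}$ is exactly the inverse-pair relation underlying the whole decomposition). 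Either way the content is the same.
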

\begin{proof}
(1) can be proved directly by the explicit form given in (2):
\begin{align*}
\sum_{i=0}^k \fup{i} &= \sum_{i=0}^k \binom{k}{i} U_i^k \sum_{j=0}^i (-1)^{i-j} \binom{i}{j} U_j^i D_j^k f\\
&= \sum_{j=0}^k \left(\sum_{i=j}^k (-1)^{i-j} \binom{k}{i} \binom{i}{j}\right) U_j^k D_j^k f\\
&= U_k^k D_k^k f\\
&= f.
\end{align*}
where we've used the fact that:
\[
\sum_{i=j}^k (-1)^{i-j} \binom{k}{i} \binom{i}{j} = \delta_{jk}.
\]
It is left to prove (2). We proceed by induction. Note that the equality clearly holds for $i=0$, where both sides are simply the global expectation $\E[f]$. Now assume by induction that the equivalence holds up to $i-1$. We then have:
\begin{align*}
    \gup{i} &= D^k_i f - \sum\limits_{j=0}^{i-1}{i \choose j}U^i_j \gup{j}\\
    &= D^k_i f -\sum\limits_{j=0}^{i-1}{i \choose j}U^i_j\sum\limits_{\ell=0}^j (-1)^{j-\ell}{j \choose \ell}U^j_\ell D_\ell^k f\\
    &= D^k_i f - \sum\limits_{\ell=0}^{i-1}\left(\sum\limits_{j=\ell}^{i-1}(-1)^{j-\ell}{i \choose j}{j \choose \ell}\right)U_\ell^iD^k_\ell f\\
    &= D^k_if - \sum\limits_{\ell=0}^{i-1}(-1)^{i-1-\ell}{i\choose \ell}U_\ell^iD^k_\ell f\\
    &= \sum\limits_{\ell=0}^i (-1)^{i-\ell}{i \choose \ell}U_\ell^iD^k_\ell f.
\end{align*}
The explicit form of $\fup{i}$ then follows simply by applying ${k \choose i}U^k_i$ to $\gup{i}$, and the equivalent form is immediate from the definition of the down and up operators.
\end{proof}

\subsection{Bottom-Up vs. HD-Level-Set}
Dikstein, Dinur, Filmus, and Harsha's HD-Level-Set Decomposition \cite{dikstein2018boolean} is an elegant linear-algebraic decomposition for functions on local-spectral expanders. Like the Bottom-Up Decomposition, it breaks $f \in C_k$ down into $k+1$ Fourier levels, but differs in that it does so in a \textit{top-down} fashion.
\begin{theorem}[HD-Level-Set Decomposition, Theorem 8.2 \cite{dikstein2018boolean}]\label{thm:decomp-ddfh}
Let $(X,\Pi)$ be a $d$-dimensional two-sided $\gamma$-local-spectral expander, $\gamma < \frac{1}{d}$, $0 \leq k \leq d$, and let: 
\[
H^0=C_0, H^i=\text{Ker}(D_i), V_k^i = U^{k}_iH^i.
\]
Then:
\[
C_k = V^0_k \oplus \ldots \oplus V^k_k.
\]
In other words, every $f \in C_k$ has a unique decomposition $f=\sum\limits \fdown{i}$ such that $\fdown{i}=U^{k}_i\gdown{i}$ for $\gdown{i} \in \text{Ker}(D_i)$.
\end{theorem}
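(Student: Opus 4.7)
The plan is to establish the direct-sum decomposition in two parts: (i) pairwise orthogonality of the subspaces $V^i_k$, which upgrades the sum to a direct sum, and (ii) a spanning argument by induction on $k$. The entire proof rests on a single combinatorial identity from simplicial complexes together with the adjoint relationship between $D$ and $U$.

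The key input is the operator identity
\[
D_{m+1} U_m \;=\; \tfrac{1}{m+1} I + \tfrac{m}{m+1} U_{m-1} D_m,
\]
which is a direct computation from the definitions of the weighted up/down maps (every $(m+1)$-face contains $m+1$ subfaces of size $m$, and the down-then-up composition isolates the single ``self'' contribution from the remaining $m$). Iterating this identity, I would prove by induction on $k-i$ that $D^k_i U^k_i g = \binom{k}{i}^{-1} g$ for every $g \in H^i$. The inductive step rewrites $D_{k+1} U^{k+1}_i g = \bigl(\tfrac{1}{k+1} I + \tfrac{k}{k+1} U_{k-1} D_k \bigr) U^k_i g$ and uses the sub-hypothesis $D_k U^k_i g = \tfrac{k-i}{k} U^{k-1}_i g$ to obtain $D_{k+1} U^{k+1}_i g = \tfrac{k+1-i}{k+1} U^k_i g$; composing all down-steps then telescopes to the stated binomial.

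With this formula in hand, orthogonality and injectivity come essentially for free. For $i < j$, $g \in H^i$, and $h \in H^j$,
\[
\langle U^k_i g,\, U^k_j h \rangle = \langle g,\, D^k_i U^k_j h \rangle = \langle g,\, D^j_i (D^k_j U^k_j h) \rangle = \binom{k}{j}^{-1} \langle g,\, D^j_i h \rangle = 0,
\]
since $D^j_i h = D_{i+1} \cdots D_j h$ vanishes whenever $D_j h = 0$. The same formula also guarantees that $U^k_i$ is injective on $H^i$: if $U^k_i g = 0$, then $g = \binom{k}{i} D^k_i U^k_i g = 0$.

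For spanning I would induct on $k$; the base $k = 0$ is immediate as $H^0 = C_0$. For the inductive step, the adjoint relation $U_{k-1} = D_k^*$ yields the orthogonal splitting $C_k = \text{Ker}(D_k) \oplus \text{Image}(U_{k-1}) = H^k \oplus U_{k-1}(C_{k-1})$. Applying the inductive hypothesis $C_{k-1} = \bigoplus_{i<k} V^i_{k-1}$ gives $U_{k-1}(C_{k-1}) = \sum_{i<k} U_{k-1} U^{k-1}_i H^i = \sum_{i<k} V^i_k$, and combining with $V^k_k = H^k$ together with the pairwise orthogonality above produces $C_k = \bigoplus_{i \le k} V^i_k$. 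The main obstacle is the telescoping computation of $D^k_i U^k_i$ on $H^i$; everything else is formal manipulation of adjoints and sums. Worth noting: the spectral hypothesis $\gamma < 1/d$ does not appear to play a role in the existence of the decomposition, which is purely algebraic---it enters the broader DDFH program to control how HD-walks act on the pieces $V^i_k$ rather than to set up the pieces themselves.
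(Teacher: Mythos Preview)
The paper does not prove this theorem---it is quoted from \cite{dikstein2018boolean}---so there is no in-paper argument to compare against. That said, your proposal has a genuine gap at its foundation.

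The identity
\[
D_{m+1} U_m \;=\; \tfrac{1}{m+1} I + \tfrac{m}{m+1} U_{m-1} D_m
\]
is \emph{not} an exact combinatorial fact for weighted simplicial complexes; it holds only up to an error operator $E$ with $\|E\| \le \gamma$. This is precisely the content of DDFH's Claim~8.8 as quoted in the paper (\Cref{eq:DDFH}): two-sided local-spectral expansion is \emph{equivalent} to this relation holding approximately. A small example makes it concrete: on the complex with top faces $\{1,2,3\}$ and $\{1,2,4\}$ of equal weight, one checks directly that $D_2 U_1$ and $\tfrac{1}{2}I + \tfrac{1}{2} U_0 D_1$ assign different transition weights out of the vertex $1$.

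This error propagates through the rest of your argument. The subspaces $V^i_k$ are \emph{not} pairwise orthogonal---the paper itself records only approximate orthogonality in \Cref{lemma:approx-orthog}, with $|\langle \fdown{i}, \fdown{j}\rangle| \le 2^{O(k)}\gamma \|f\|_2^2$---so you cannot deduce directness that way. Your telescoping formula $D^k_i U^k_i g = \binom{k}{i}^{-1} g$ on $H^i$ is likewise only approximate, so injectivity of $U^k_i$ is not automatic. Most tellingly, your closing remark that the hypothesis $\gamma < 1/d$ plays no role is wrong: it is exactly what guarantees the accumulated error does not overwhelm the main term in $\langle U^k_i g, U^k_i g\rangle$, keeping $U^k_i$ injective on $H^i$. (On a bipartite underlying graph, where $\gamma=1$, the map $U_1$ genuinely has a kernel and the decomposition fails.) Your inductive spanning step $C_k = \ker(D_k) \oplus \mathrm{Image}(U_{k-1})$ is correct; what must replace exact orthogonality for directness is a dimension count, and that count needs injectivity of each $U_i$---which is where $\gamma < 1/d$ enters.
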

While the HD-Level-Set Decomposition is certainly useful in its own right, it has no known explicit form. This can make the analysis of standard Fourier-analytic techniques like restriction difficult, and hampers the analysis of higher moments. We will show that the Bottom-Up Decomposition provides an explicit approximation of the HD-Level-Set Decomposition that circumvents these issues while maintaining the latter's useful properties.

Before jumping into the details, however, it is worth reviewing an elegant technical tool of \cite{dikstein2018boolean} that will be crucial for our analysis. They prove that local-spectral expansion is equivalent to a global notion of spectral expansion on complexes that relates the upper and lower walks.
\begin{theorem}[DDFH Claim 8.8]
Let $(X,\Pi)$ be a $d$-dimensional two-sided $\gamma$-local-spectral expander. Then for any $1 \leq i \leq j \leq d$:
\begin{equation}\label{eq:DDFH}
D_i U_j^i = \frac{j}{i} U_{j-1}^{i-1} D_j + \frac{i-j}{i} U_j^{i-1}+E_{i,j},
\end{equation}
where $\|E_{i,j}\| \le (i-j)\gamma$.
\end{theorem}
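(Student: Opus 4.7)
The plan is to prove the identity by induction on $j-i$, reducing the multi-step operator relation to a single-step commutation between $D$ and $U$ and then chaining the errors additively. The base case is the key analytic input; the inductive step is combinatorial.

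For the base case (a single-step commutation of the form $D_{i+1} U_i \approx \alpha_i \cdot I + \beta_i \cdot U_{i-1} D_i$ up to a $\gamma$-error), I would expand $D_{i+1} U_i$ pointwise using the definitions of the averaging operators. On any $\tau \in X(i)$, the result is an average over $(i+1)$-faces containing $\tau$ of an average over their $i$-sub-faces. Splitting this sum based on whether the resampled $i$-face equals $\tau$ itself (a coincidence event, giving a multiple of the identity) or differs in exactly one vertex (giving a $U_{i-1} D_i$-type contribution) isolates the idealized operator. The discrepancy from this decomposition can, via Garland's method (\Cref{lemma:garland-local}), be identified with the non-stationary part of a random walk on the underlying graph of each $(i-1)$-link, and two-sided $\gamma$-local-spectral expansion directly bounds its spectral norm by $\gamma$.

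For the inductive step, I would peel off one up operator, writing $U_j^i$ as the composition of $U_{j-1}^i$ with a single outermost $U_j$, apply the one-step identity to push $D_i$ past this new $U$, and then invoke the inductive hypothesis on the shortened chain. The exact combinatorial coefficients $\tfrac{j}{i}$ and $\tfrac{i-j}{i}$ should then fall out of summing binomial contributions describing how $j$-faces lift through $i$-faces, an identity of the same flavor as the one appearing in the proof of \Cref{thm:explicit}. Each application of the one-step identity contributes one new error term of spectral norm at most $\gamma$ composed with operators $U$, $D$ whose operator norms are bounded by $1$, so the errors accumulate additively to give a final bound of $(j-i)\gamma$.

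The main obstacle is the base case: once the clean one-step commutation is in hand, the induction is largely mechanical bookkeeping of binomial coefficients, but establishing the spectral bound for the single step requires localizing to every $(i-1)$-link of the complex and directly invoking the definition of two-sided $\gamma$-local-spectral expansion. This is also where simplicial structure enters essentially, since the exact constants arise from counting the set-theoretic ways a $j$-face can be built from an $i$-face by addition of vertices, which is special to pure simplicial complexes and has no direct analogue in more general ranked posets.
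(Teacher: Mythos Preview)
The paper does not actually prove this statement; it is quoted as a known result from \cite{dikstein2018boolean} (hence the label ``DDFH Claim 8.8'') and used as a black box throughout. There is therefore no proof in the paper to compare against.

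That said, your proposed argument is the standard one and matches how the result is established in \cite{dikstein2018boolean}: one proves the single-step relation $D_{i}U_{i-1} = \tfrac{1}{i}I + \tfrac{i-1}{i}U_{i-2}D_{i-1} + M_i$ with $\|M_i\|\le\gamma$ by localizing to links and invoking two-sided spectral expansion of each link graph, and then inducts on the length of the up-chain, accumulating one $\gamma$-bounded error per step since the remaining $U$'s and $D$'s have operator norm at most $1$. Your identification of the base case as the only analytic content, and of the inductive step as binomial bookkeeping, is accurate. One small note: the inequality $i\le j$ in the statement appears to be a typo in the paper (the coefficient $\tfrac{i-j}{i}$ and the bound $(i-j)\gamma$ only make sense for $j\le i$), and you have correctly interpreted it with the intended orientation when you write the final error as $(j-i)\gamma$.
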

It is worth noting that this result (and the HD-Level-Set Decomposition) hold more generally for any ``expanding poset''---the difference lies in the exact coefficients in the above relation. 

The crucial observation for proving that the Bottom-Up Decomposition is a Fourier basis (that explicitly approximates the HD-Level-Set Decomposition) is that while $\gup{i}$ may not lie directly in $\text{Ker}(D_i)$ like $\gdown{i}$, it is fairly close to doing so. Proving this actually relies crucially on the exact coefficients in \Cref{eq:DDFH} which correspond to working over a simplicial complex. As a result, it is not clear the Bottom-Up Decomposition is a Fourier basis at all for general expanding posets.
\begin{lemma}\label{lemma:approx-kernel}
Let $(X,\Pi)$ be a two-sided $\gamma$-local-spectral expander, $f \in C_k$, and $\gup{i}$ be given as in the Bottom-Up Decomposition. Then:
\[
\norm{D_i\gup{i}}_2 \leq 2^{O(i)}\gamma \norm{D^k_i f}_2.
\]
\end{lemma}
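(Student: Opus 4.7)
The natural plan is to apply the DDFH commutation relation (Equation~\ref{eq:DDFH}) to the explicit form of $\gup{i}$ and show that everything except the error terms cancels. Concretely, start from
\[
D_i\gup{i} = \sum_{j=0}^{i}(-1)^{i-j}\binom{i}{j}\, D_i U_j^i D_j^k f,
\]
and substitute $D_i U_j^i = \frac{j}{i}U_{j-1}^{i-1}D_j + \frac{i-j}{i}U_j^{i-1} + E_{i,j}$. Using the identity $D_j D_j^k = D_{j-1}^k$ (valid because $D_j^k = D_{j+1}\circ\cdots\circ D_k$), this splits $D_i\gup{i}$ into three sums.

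The first two sums involve the ``main'' parts of the commutator. After reindexing $j\mapsto \ell+1$ in the first sum and using the binomial identity $\binom{i}{j}\frac{j}{i} = \binom{i-1}{j-1}$, it becomes $-\sum_{\ell=0}^{i-1}(-1)^{i-\ell}\binom{i-1}{\ell}U_\ell^{i-1}D_\ell^k f$. The second sum, via $\binom{i}{j}\frac{i-j}{i} = \binom{i-1}{j}$, becomes $+\sum_{j=0}^{i-1}(-1)^{i-j}\binom{i-1}{j}U_j^{i-1}D_j^k f$. These are negatives of each other and therefore cancel exactly, leaving only the error contribution
\[
D_i\gup{i} = \sum_{j=0}^{i}(-1)^{i-j}\binom{i}{j}\, E_{i,j}D_j^k f.
\]

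To finish, apply the triangle inequality together with $\|E_{i,j}\| \le (i-j)\gamma$. The one remaining subtlety is to compare $\|D_j^k f\|_2$ to the target $\|D_i^k f\|_2$: since $D_j^k = D_j^i \circ D_i^k$ and the down operator is an $L^2$-contraction (Jensen's inequality applied to the averaging operator), we get $\|D_j^k f\|_2 \le \|D_i^k f\|_2$ for every $j \le i$. Combining yields
\[
\|D_i\gup{i}\|_2 \le \gamma\,\|D_i^k f\|_2 \sum_{j=0}^{i}(i-j)\binom{i}{j} \le i\,2^{i}\,\gamma\,\|D_i^k f\|_2 = 2^{O(i)}\gamma\,\|D_i^k f\|_2.
\]

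The main obstacle is verifying the miraculous cancellation of the leading-order terms; this is not a generic combinatorial fact but relies on the precise coefficients $\frac{j}{i}$ and $\frac{i-j}{i}$ appearing in the DDFH relation for simplicial complexes. Indeed, for a general expanding poset the coefficients in the analog of Equation~\ref{eq:DDFH} are different, so the telescoping fails and $\gup{i}$ is not approximately in $\ker(D_i)$. This is precisely the structural reason, foreshadowed in the introduction, that the Bottom-Up Decomposition appears to be tied to the simplicial setting.
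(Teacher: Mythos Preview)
Your proposal is correct and follows essentially the same approach as the paper: both expand $D_i\gup{i}$ via the explicit form, apply the DDFH commutation relation, verify that the main terms cancel exactly (you via the reindexing and the identities $\binom{i}{j}\tfrac{j}{i}=\binom{i-1}{j-1}$, $\binom{i}{j}\tfrac{i-j}{i}=\binom{i-1}{j}$; the paper via checking $\tfrac{j+1}{i}c_{i,j+1}+\tfrac{i-j}{i}c_{i,j}=0$), and then bound the residual error sum using $\|E_{i,j}\|\le(i-j)\gamma$ together with the $\ell_2$-contractivity of the down operators. Your concluding remark about the cancellation being specific to simplicial coefficients also matches the paper's commentary.
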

\begin{proof}

The result follows from directly expanding $D_i\gup{i}$:
\begin{align*}
    D_i \gup{i} &= \sum_{j=0}^i (-1)^{i-j} {i \choose j} D_i U_j^i D_j^k f\\
    &=\sum_{j=0}^i (-1)^{i-j} {i \choose j}  \left(\frac{j}{i} U_{j-1}^{i-1} D_j + \frac{i-j}{i} U_j^{i-1}+E_{i,j} \right) D_j^k f.
\end{align*}
The key is then to notice that the main terms cancel. That is, setting $c_{i,j} = (-1)^{i-j} \binom{i}{j}$ we have:
\begin{align*}
    \sum_{j=0}^i c_{i,j} \left(\frac{j}{i} U_{j-1}^{i-1} D_j + \frac{i-j}{i} U_j^{i-1} \right) D_j^k f = 
\sum_{j=0}^{i-1} \left(\frac{j+1}{i} c_{i,j+1} + \frac{i-j}{i} c_{i,j} \right)  U_j^{i-1} D_j^k f = 0.
\end{align*}
Finally by the triangle inequality we have:
\begin{align*}
\|D_i \gup{i}\|_2 \le \sum_{j=0}^i \binom{i}{j}
\|E_{i,j} D_j^k f\|_2 &\le \sum_{j=0}^i \binom{i}{j} (i-j) \gamma \|D^k_j f\|_2\\
&\leq O(i 2^i \gamma) \|D^k_if\|_2
\end{align*}
where we have used the fact that $D$ contracts $\ell_2$-norm in the final step.
\end{proof}
We will also need approximate orthogonality of both decompositions.
\begin{lemma}\label{lemma:approx-orthog}
Let $(X,\Pi)$ be a two-sided $\gamma$-local-spectral expander. Then the following three approximate orthogonality relations hold for all $i \neq j$:
\begin{align}
    |\langle \fdown{i}, \fdown{j} \rangle| \leq 2^{O(k)}\gamma \norm{f}^2_2\\
    |\langle \fdown{i}, \fup{j} \rangle| \leq 2^{O(k)}\gamma \norm{f}^2_2\\
    |\langle \fup{i}, \fup{j} \rangle| \leq \min\{k^{O(i+j)},2^{O(k)}\}\gamma \norm{f}^2_2
\end{align}
\end{lemma}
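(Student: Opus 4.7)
My plan is to establish all three estimates via a common mechanism: use adjointness of $U$ and $D$ to push one side of the inner product inward, then iteratively apply the DDFH swap identity~(\ref{eq:DDFH}) to rewrite the composed operator $D_i^k U_j^k$ (WLOG $i \le j$) as a sum of ``main'' terms whose rightmost operator is $D_j$, plus ``error'' terms each containing at least one factor $E_{m,j}$ of spectral norm at most $m\gamma \le k\gamma$. Whether the main terms contribute at all depends on whether the adjacent $g$-function is $\gdown{j}$ (killed \emph{exactly} since $\gdown{j} \in \text{Ker}(D_j)$) or $\gup{j}$ (killed only \emph{approximately}, via Lemma~\ref{lemma:approx-kernel}).

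Concretely, for (1) I would rewrite $\langle \fdown{i}, \fdown{j} \rangle = \binom{k}{i}\binom{k}{j}\langle \gdown{i}, D_i^k U_j^k \gdown{j} \rangle$ using adjointness, expand $D_i^k U_j^k$ by the recipe above, and observe that every main term ends in $D_j$ and therefore annihilates $\gdown{j}$; the surviving error terms are each bounded by $\poly(k)\gamma\|f\|_2^2$ using Cauchy--Schwarz and the $\ell_2$-contractivity of $U$ and $D$, and there are at most $2^{O(k)}$ of them, yielding the claimed $2^{O(k)}\gamma\|f\|_2^2$ bound (this essentially recovers the known approximate orthogonality of the HD-Level-Set decomposition). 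For (3) I run the same expansion on $\langle \gup{i}, D_i^k U_j^k \gup{j} \rangle$; the main terms no longer vanish, but Lemma~\ref{lemma:approx-kernel} gives $\|D_j\gup{j}\|_2 \le 2^{O(j)}\gamma\|f\|_2$, so the main terms also contribute $\gamma$-small error. For (2) I apply the same machinery to $\langle \gdown{i}, D_i^k U_j^k \gup{j} \rangle$ when $i \le j$ (using Lemma~\ref{lemma:approx-kernel} on the right), or to $\langle D_j^k U_i^k \gdown{i}, \gup{j} \rangle$ when $j < i$ (using exact kernel membership of $\gdown{i}$); in either orientation, one side supplies vanishing and the other side contributes only errors.

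The step I expect to be the main obstacle is the combinatorial bookkeeping needed for the sharper $k^{O(i+j)}\gamma$ alternative in (3). A naive iteration of~(\ref{eq:DDFH}) proliferates into $3^{O(k)}$ intermediate terms, each of operator norm $\poly(k)\gamma$; this already suffices for the $2^{O(k)}\gamma$ bound appearing in (1), (2), and as the second branch of the minimum in (3), but it is far too crude when $i,j \ll k$. To get the polynomial-in-$k$ bound I would carry out a double induction on $k-i$ and $j-i$ that produces an explicit closed form for the ``main plus error'' decomposition of $D_i^k U_j^k$, tracking how the binomial prefactors $\binom{k}{i}\binom{k}{j}$ sitting out front cancel against the coefficients appearing inside. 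The relevant algebraic identities should be of the same flavor as the binomial cancellation already used in the proof of part~(1) of \Cref{thm:explicit}, so I expect the real work to be precisely this bookkeeping rather than any genuinely new analytic input beyond Lemma~\ref{lemma:approx-kernel} and~(\ref{eq:DDFH}).
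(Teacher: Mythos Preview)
Your approach is essentially the paper's: relation (1) is deferred to \cite{dikstein2018boolean}, and for (3) the paper likewise uses adjointness plus iterated applications of~(\ref{eq:DDFH}), invoking Lemma~\ref{lemma:approx-kernel} to handle the main terms. The one tactical difference addresses exactly your ``main obstacle'': rather than expanding the operator $D_i^k U_j^k$ as a sum and tracking binomial cancellations, the paper applies Cauchy--Schwarz \emph{first}, bounding (for $i>j$)
\[
|\langle \fup{i},\fup{j}\rangle| \;\le\; \binom{k}{i}\binom{k}{j}\,\norm{D_{i-1}^k U_i^k\gup{i}}_2\,\norm{\gup{j}}_2,
\]
and then controls the single norm $\norm{D_{i-1}^k U_i^k\gup{i}}_2$ by a \emph{linear} (non-branching) recursion: each application of~(\ref{eq:DDFH}) produces one continuing term $\frac{\bullet-i}{\bullet}\norm{D_{i-1}^{\bullet-1}U_i^{\bullet-1}\gup{i}}_2$ and two terms that are immediately bounded and discarded (one via $\norm{D_i\gup{i}}_2\le 2^{O(i)}\gamma\norm{f}_2$ from Lemma~\ref{lemma:approx-kernel}, one via $\norm{E_{\bullet,i}}\le k\gamma$). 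After $k-i$ steps this yields $\norm{D_{i-1}^k U_i^k\gup{i}}_2 \le 2^{O(i)}k^2\gamma\norm{f}_2$ directly, and the outer prefactor $\binom{k}{i}\binom{k}{j}$ supplies the $\min\{k^{O(i+j)},2^{O(k)}\}$. So no double induction or explicit closed form for $D_i^k U_j^k$ is needed; the proliferation you anticipate is avoided simply by bounding at the level of norms rather than operators.
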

\begin{proof}
All relations follow from \Cref{eq:DDFH} and \Cref{lemma:approx-kernel}.
The first relation is proved in \cite{dikstein2018boolean}. The latter relations follow similarly, but we give the third for completeness. In particular, assuming $i > j$ we have:
\begin{align*}
    |\langle \fup{i}, \fup{j} \rangle| &= \left |{k \choose i}{k \choose j}\langle U^k_i\gup{i}, U^k_j\gup{j} \rangle\right|\\
    &= {k \choose i}{k \choose j}\left|\langle D_{i-1}^kU^{k}_i\gup{i}, U^{i-1}_j\gup{j} \rangle\right|\\
    &\leq {k \choose i}{k \choose j}\norm{D_{i-1}^kU^{k}_i\gup{i}}_2\norm{U^{i-1}_j\gup{j}}_2\\
    &\leq {k \choose i}{k \choose j}\norm{D_{i-1}^kU^{k}_i\gup{i}}_2\norm{\gup{j}}_2
\end{align*}
where we have applied Cauchy-Schwarz and used the fact that averaging operators contract $\ell_2$-norm. We now separately bound both norms. The second is the simpler of the two, and we claim it is at most $2^{O(i)}\norm{f}$. In fact a more general claim holds.
\begin{claim}\label{claim:p-norm}
For any $\ell_p$-norm, we have:
\[
\norm{\gup{i}}_p \leq 2^i \norm{D^k_i f}_p
\]
\end{claim}
\begin{proof}
This follows from direct expansion of the $\ell_p$-norm:
\begin{align*}
\|\gup{i}\|_p &\le \sum_{j=0}^i \binom{i}{j} \|U^i_j D_j^k f\|_p\\
&\le \sum_{j=0}^i \binom{i}{j} \|D^k_if\|_p \\
&= 2^i \|D^k_i f\|_p
\end{align*}
where we have applied the triangle inequality and the fact that $U$ and $D$ contract $p$-norms. 
\end{proof}
Plugging this back into the above, we have:
\[
\langle \fup{i}, \fup{j} \rangle \leq 2^i{k \choose i}{k \choose j}\norm{f}_2\norm{D_{i-1}^kU^{k}_i\gup{i}}_2.
\]
To complete the proof, it is therefore enough to argue that $\norm{D_{i-1}^kU^{k}_i\gup{i}}_2$ is at most $c_k\gamma\norm{f}$ for some small enough $c_k$. This follows from $k-i$ repeated applications of \Cref{eq:DDFH} (one for each instance of the up operator). Informally, each application of \Cref{eq:DDFH} incurs two error terms, one stemming from the matrix $E_{i,j}$, and the other from $\frac{j}{i}U^{i-1}_{j-1}D_i$ applied to $\gup{i}$, which we know is small by \Cref{lemma:approx-kernel}. The final remaining term is then proportional to the original term, but with one $DU$ pair removed. For instance, for the first application we have:
\begin{align*}
\norm{D_{i-1}^kU^{k}_i\gup{i}} &= \norm{ D^{k-1}_{i-1}(D_kU^{k}_i)\gup{i},U^{i-1}_j\gup{j}}\\
&= \norm{D^{k-1}_{i-1}\left(\frac{i}{k} U_{i-1}^{k-1} D_i + \frac{k-i}{k} U_i^{k-1}+E_{k,i}\right)\gup{i}}\\
&= \frac{k-i}{k}\norm{ D^{k-1}_{i-1} U_i^{k-1}\gup{i}} + \frac{i}{k}\norm{D^{k-1}_{i-1}U_{i-1}^{k-1} D_i\gup{i}} + \norm{D^{k-1}_{i-1}E_{k,i}\gup{i}}\\
&\leq \frac{k-i}{k}\norm{ D^{k-1}_{i-1} U_i^{k-1}\gup{i}} + 2^{O(i)}\gamma\norm{f} + k\gamma\norm{\gup{i}}\\
&\leq \frac{k-i}{k}\norm{ D^{k-1}_{i-1} U_i^{k-1}\gup{i}} + k2^{O(i)}\gamma\norm{f}
\end{align*}
where we have used the facts that by \Cref{eq:DDFH} and \Cref{lemma:approx-kernel}, $\norm{E_{k,i}} \leq k\gamma$, and $\norm{D_i\gup{i}} \leq 2^{O(i)}\gamma\norm{f}$. A basic inductive argument then implies that $\norm{D_{i-1}^kU^{k}_i\gup{i}} \leq 2^{O(i)}k^2\gamma\norm{f}$, which completes the proof. For a more formal induction following exactly the same strategy, see \cite{dikstein2018boolean,bafna2020high}.
\end{proof}
With these lemmas in hand, proving that the Bottom-Up Decomposition $\ell_2$-approximates the HD-Level-Set Decomposition is elementary.
\begin{theorem}\label{thm:Bottom-vs-Top}
Let $(X,\Pi)$ be a two-sided $\gamma$-local-spectral expander and $f \in C_k$. Then the Bottom-Up and HD-Level-Set Decomposition are close in $\ell_2$-norm:
\[
\norm{\fup{i} - \fdown{i}}^2_2 \leq 2^{O(k)}\gamma\norm{f}_2^2.
\]
Similarly:
\[
\left |\norm{\fup{i}}^2 - \norm{\fdown{i}}^2 \right | \leq 2^{O(k)}\gamma\norm{f}_2^2
\]
\end{theorem}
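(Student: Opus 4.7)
The plan is to embed both the Bottom-Up and HD-Level-Set decompositions of $f$ inside a common refined structure---the HD-Level-Set decomposition of each $\fup{j}$---and then read off the comparison between $\fup{i}$ and $\fdown{i}$ directly.

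First I would apply \Cref{thm:decomp-ddfh} to $\gup{i}\in C_i$, writing $\gup{i}=\sum_{j=0}^i U^i_j h^{(i)}_j$ with $h^{(i)}_j\in\ker(D_j)$. Lifting by $\binom{k}{i}U^k_i$ and using $U^k_i U^i_j=U^k_j$ yields the HD-Level-Set decomposition of $\fup{i}$ itself:
\[
\fup{i}=\binom{k}{i}\sum_{j=0}^i U^k_j h^{(i)}_j, \qquad (\fup{i})_{\downarrow j}=\binom{k}{i}U^k_j h^{(i)}_j \text{ for } j\le i,
\]
so $\fup{i}$ has no HD-Level-Set mass above level $i$. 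To bound the mass below level $i$, expand $D_i\gup{i}$ via \Cref{eq:DDFH} and $D_j h^{(i)}_j=0$ to obtain
\[
D_i\gup{i}=\sum_{j<i}\tfrac{i-j}{i}\,U^{i-1}_j h^{(i)}_j \;+\; \sum_{j\le i} E_{i,j}\, U^i_j h^{(i)}_j,
\]
where the error sum has norm $\le 2^{O(k)}\gamma\norm{f}$ using $\|E_{i,j}\|\le k\gamma$ and $\norm{\gup{i}}\le 2^{O(i)}\norm{f}$ from \Cref{claim:p-norm}. Applying HD-Level-Set approximate orthogonality at level $i-1$ (the $U^{i-1}_j h^{(i)}_j$ are HD-Level-Set components in $C_{i-1}$) to decouple the main-term squared norm as $\sum_{j<i}\tfrac{(i-j)^2}{i^2}\norm{U^{i-1}_j h^{(i)}_j}^2$, and comparing with $\norm{D_i\gup{i}}\le 2^{O(i)}\gamma\norm{f}$ from \Cref{lemma:approx-kernel}, one extracts $\sum_{j<i}\norm{U^k_j h^{(i)}_j}^2\le 2^{O(k)}\gamma\norm{f}^2$, i.e.\ $\norm{\fup{i}-(\fup{i})_{\downarrow i}}^2\le 2^{O(k)}\gamma\norm{f}^2$. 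Call this estimate $(\ast)$.

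Next, by uniqueness of HD-Level-Set applied to $f=\sum_j\fup{j}$, we have $\fdown{i}=(\fup{i})_{\downarrow i}+\sum_{j>i}(\fup{j})_{\downarrow i}$, the $j<i$ terms vanishing by the first paragraph. Each $\norm{(\fup{j})_{\downarrow i}}^2$ for $j>i$ is controlled by applying $(\ast)$ to $\fup{j}$, so a triangle inequality gives $\norm{\fdown{i}-(\fup{i})_{\downarrow i}}^2\le 2^{O(k)}\gamma\norm{f}^2$, and combining with $(\ast)$ proves $\norm{\fup{i}-\fdown{i}}^2\le 2^{O(k)}\gamma\norm{f}^2$. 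For the norm-difference claim, a naive Cauchy--Schwarz on $\norm{\fup{i}}^2-\norm{\fdown{i}}^2=\langle\fup{i}-\fdown{i},\fup{i}+\fdown{i}\rangle$ only yields $\gamma^{1/2}$. To obtain $\gamma$, use HD-Level-Set orthogonality to expand $\norm{\fup{i}}^2\approx\sum_{j\le i}\norm{(\fup{i})_{\downarrow j}}^2$ and $\norm{\fdown{i}}^2=\norm{(\fup{i})_{\downarrow i}}^2+2\sum_{j>i}\langle(\fup{i})_{\downarrow i},(\fup{j})_{\downarrow i}\rangle+\text{small}$. The cross terms $\langle(\fup{i})_{\downarrow i},(\fup{j})_{\downarrow i}\rangle$ for $j>i$ are bounded by reading off the $\ell=i$ diagonal entry of the HD-Level-Set expansion of $\langle\fup{i},\fup{j}\rangle$ and invoking Bottom-Up approximate orthogonality $|\langle\fup{i},\fup{j}\rangle|\le 2^{O(k)}\gamma\norm{f}^2$ from \Cref{lemma:approx-orthog}, with the $\ell<i$ diagonal entries already controlled by $(\ast)$.

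The main obstacle is the second paragraph: three separate sources of $2^{O(k)}\gamma$ error must be tracked in the same calculation---the DDFH commutation errors $E_{i,j}$, the HD-Level-Set orthogonality slack at the intermediate level $i-1$, and the upper bound on $\norm{D_i\gup{i}}$---so care is needed to keep the overall constant $2^{O(k)}$ rather than letting it blow up. The norm-difference claim is also delicate because the naive proof loses a $\sqrt{\gamma}$; obtaining the stated $\gamma$ bound requires the refined decoupling above, combining HD-Level-Set orthogonality with Bottom-Up orthogonality from \Cref{lemma:approx-orthog}.
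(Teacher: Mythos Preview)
Your approach is workable (modulo a type mismatch: in your expansion of $D_i\gup{i}$ the error term should be $\sum_{j\le i}E_{i,j}h^{(i)}_j$, not $E_{i,j}U^i_j h^{(i)}_j$, since $E_{i,j}:C_j\to C_{i-1}$), but it is considerably more involved than the paper's argument, which is essentially five lines and uses nothing beyond \Cref{lemma:approx-orthog}.

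The paper's key observation is that one never needs to compute the HD-Level-Set decomposition of $\fup{i}$ or track individual pieces $h^{(i)}_j$. Instead, use that both decompositions sum to $f$: writing $\fdown{i}=f-\sum_{j\ne i}\fdown{j}$ gives
\[
\langle \fdown{i}-\fup{i},\fdown{i}\rangle
=\langle f-\fup{i},\fdown{i}\rangle-\sum_{j\ne i}\langle\fdown{j},\fdown{i}\rangle
=\sum_{j\ne i}\langle\fup{j},\fdown{i}\rangle-\sum_{j\ne i}\langle\fdown{j},\fdown{i}\rangle,
\]
and both sums are $\pm 2^{O(k)}\gamma\|f\|_2^2$ by relations (2) and (1) of \Cref{lemma:approx-orthog}. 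The same argument with the roles swapped bounds $\langle\fdown{i}-\fup{i},\fup{i}\rangle$, and subtracting gives $\|\fdown{i}-\fup{i}\|^2\le 2^{O(k)}\gamma\|f\|_2^2$. For the norm-difference claim, the identity
\[
\|\fup{i}\|^2-\|\fdown{i}\|^2=\|\fup{i}-\fdown{i}\|^2+2\langle\fdown{i},\fup{i}-\fdown{i}\rangle
\]
reuses exactly the same two already-bounded quantities, so the $\gamma$ (rather than $\gamma^{1/2}$) bound comes for free. The point you flagged as delicate---that Cauchy--Schwarz on $\langle\fup{i}-\fdown{i},\fup{i}+\fdown{i}\rangle$ loses a square root---is sidestepped because the inner product $\langle\fdown{i},\fup{i}-\fdown{i}\rangle$ itself, not merely its Cauchy--Schwarz upper bound, is $O(\gamma\|f\|^2)$.

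What your approach buys is a more structural statement: you actually show that the HD-Level-Set mass of $\fup{i}$ lives almost entirely at level $i$, which is a sharper intermediate fact than what the paper proves. But for the theorem as stated this extra structure is unnecessary, and extracting it requires juggling the DDFH commutation errors, intermediate-level orthogonality, and the bound from \Cref{lemma:approx-kernel} simultaneously, whereas the paper's route needs only the black-box orthogonality relations.
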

\begin{proof}
By \Cref{lemma:approx-orthog}, we have:
\begin{align*}
\langle \fdown{i} - \fup{i}, \fdown{i} \rangle &= \langle f - \fup{i}, \fdown{i} \rangle \pm 2^{O(k)}\gamma\|f\|_2^2\\
&= \pm 2^{O(k)}\gamma\|f\|_2^2
\end{align*}
Similarly:
\begin{align*}
\langle \fdown{i} - \fup{i}, \fup{i} \rangle &=\pm 2^{O(k)}\gamma\|f\|_2^2
\end{align*}
Therefore:
\begin{align*}
\langle \fdown{i} - \fup{i}, \fdown{i} - \fup{i} \rangle \leq 2^{O(k)}\gamma\|f\|_2^2
\end{align*}
as desired. To prove the second inequality, note that:
\begin{align*}
    \left |\langle \fup{i}, \fup{i} \rangle - \langle \fdown{i},\fdown{i} \rangle \right | &= \left |\langle \fup{i} - \fdown{i}, \fup{i} - \fdown{i} \rangle + 2\langle \fdown{i}, \fup{i} - \fdown{i} \rangle \right |\\
    &\leq 2^{O(k)}\gamma \norm{f}^2
\end{align*}
by our previous observations.
\end{proof} 
Note that since the HD-Level-Set Decomposition satisfies the Fourier-anatlyic properties in \Cref{thm:intro-BU-properties} \cite{dikstein2018boolean,alev2019approximating,bafna2020high}, \Cref{thm:Bottom-vs-Top} implies that the Bottom-Up Decomposition does as well.

\subsection{Properties of the Bottom-Up Decomposition}
Our proof of hypercontractivity (\Cref{thm:intro-hypercontraction}) relies on a number of important structural properties of and relations between $g_i$ and $f_i$. The first (and most basic) of these is the analog of a classic result for the HD-Level-Set Decomposition relating to $\ell_2$-norms of $g_i$ and $f_i$. 
\begin{lemma}\label{lemma:g-vs-f}
Let $(X,\Pi)$ be a two-sided $\gamma$-local-spectral expander and $f \in C_k$. Then:
\[
\langle \gup{i}, \gup{i} \rangle = \frac{1}{{k \choose i}}\langle \fup{i}, \fup{i} \rangle \pm c_{k,i}\gamma\norm{D^k_i f}_2^2
\]
where $c_{k,i} \leq k^{O(i)}$.
\end{lemma}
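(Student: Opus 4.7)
The plan is to first unfold $\fup{i} = \binom{k}{i} U^k_i \gup{i}$ and use adjointness of $U$ and $D$ to move one copy of the up operator to the other side, obtaining
\[
\langle \fup{i}, \fup{i} \rangle = \binom{k}{i}^2 \langle D^k_i U^k_i \gup{i}, \gup{i} \rangle.
\]
This reduces the lemma to showing that $D^k_i U^k_i \gup{i}$ is well approximated by $\tfrac{1}{\binom{k}{i}} \gup{i}$, a natural approximate eigenvalue relation reflecting the fact that $\gup{i}$ is almost in $\ker(D_i)$ by \Cref{lemma:approx-kernel}.

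The main technical step I would prove is the structural claim that for any $h \in C_i$ on a two-sided $\gamma$-local-spectral expander,
\[
\norm{D^k_i U^k_i h - \tfrac{1}{\binom{k}{i}} h}_2 \leq O(i \log k)\norm{D_i h}_2 + O(k^2)\gamma\norm{h}_2.
\]
I would prove this by iterating DDFH's identity (\Cref{eq:DDFH}) from the top of the tower down. Define the main-term coefficients $\lambda_j := \prod_{\ell=j+1}^{k} \frac{\ell - i}{\ell}$, noting $\lambda_k = 1$ and $\lambda_i = 1/\binom{k}{i}$, and set the residue $R_j := D^k_j U^k_i h - \lambda_j U^j_i h \in C_j$. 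Applying \Cref{eq:DDFH} to $D_j U^j_i = \tfrac{i}{j} U^{j-1}_{i-1} D_i + \tfrac{j-i}{j} U^{j-1}_i + E_{j,i}$ and using the algebraic identity $\lambda_j(j-i)/j = \lambda_{j-1}$, the main terms shift cleanly and yield the telescoping recursion
\[
R_{j-1} = D_j R_j + \lambda_j \tfrac{i}{j} U^{j-1}_{i-1} D_i h + \lambda_j E_{j,i} h.
\]
Starting from $R_k = 0$ and summing over $j = k, k-1, \ldots, i+1$, using $\norm{D_j} \leq 1$, $\lambda_j \leq 1$, and the DDFH bound $\norm{E_{j,i}} \leq (j-i)\gamma$, the $D_i h$ contributions accumulate to $O(i\log k)\norm{D_i h}_2$ and the $E_{j,i}$ contributions to $O(k^2)\gamma\norm{h}_2$.

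With the structural claim in hand, I would specialize to $h = \gup{i}$. By \Cref{lemma:approx-kernel} we have $\norm{D_i \gup{i}}_2 \leq 2^{O(i)}\gamma\norm{D^k_i f}_2$, and by \Cref{claim:p-norm} we have $\norm{\gup{i}}_2 \leq 2^i\norm{D^k_i f}_2$. Substituting gives the residue bound $\norm{R_i}_2 \leq k^{O(1)} 2^{O(i)}\gamma\norm{D^k_i f}_2$, and Cauchy--Schwarz together with $\binom{k}{i}\leq k^i$ yields
\[
\binom{k}{i}\left|\langle R_i, \gup{i}\rangle\right| \leq k^{O(i)}\gamma\norm{D^k_i f}_2^2,
\]
which is precisely the claimed error with $c_{k,i} \leq k^{O(i)}$. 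The main obstacle is the bookkeeping across $k-i$ iterations of \Cref{eq:DDFH}, but the identity $\lambda_j(j-i)/j = \lambda_{j-1}$ causes the recursion to telescope, so the accumulated error remains polynomial in $k$ rather than exponential.
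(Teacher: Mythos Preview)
Your proposal is correct and follows essentially the same approach as the paper: both start from adjointness to write $\langle \fup{i},\fup{i}\rangle = \binom{k}{i}^2 \langle \gup{i}, D^k_i U^k_i \gup{i}\rangle$ and then peel off the $U$'s one at a time via repeated application of \Cref{eq:DDFH}, with the main term accumulating the product $\prod_{\ell=i+1}^k \frac{\ell-i}{\ell} = 1/\binom{k}{i}$ and each step contributing one $E_{j,i}$ error and one $D_i\gup{i}$ error (the latter controlled by \Cref{lemma:approx-kernel}). Your formulation via the residues $R_j$ and coefficients $\lambda_j$ makes the telescoping and error accounting more explicit than the paper's sketch, but the underlying argument is identical.
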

\begin{proof}
The proof is essentially the same as for the HD-Level-Set Decomposition and as our analysis above, though we repeat the idea for completeness. The key is again to apply \Cref{eq:DDFH}. In particular, recall that:
\begin{align*}
    \langle \fup{i}, \fup{i} \rangle &= {k \choose i}^2\langle U^k_i\gup{i},U^k_i\gup{i}\rangle\\
    &={k \choose i}^2\langle \gup{i},D^k_iU^k_i\gup{i}\rangle
\end{align*}
by adjointness of $D$ and $U$ \cite{dikstein2018boolean}. The proof is then essentially the same as \Cref{lemma:approx-orthog}. Repeated application of \Cref{eq:DDFH} gives an error term of $O(k^2)\gamma\norm{D^k_i f}_2$. The only difference is that the main term no longer has an extra occurrence of $D$ at the end. Thus instead of becoming another error term, the main term becomes:
\[
{k \choose i}^2\left(\prod\limits_{j=0}^{k-i-1} \frac{k-j-i}{k-j}\right)\langle \gup{i},\gup{i} \rangle = {k \choose i}\langle \gup{i},\gup{i} \rangle
\]
which gives the result.
\end{proof}
We now cover a few important bounds on $g_i$ for \textit{pseudorandom functions}.
\begin{definition}[Pseudorandom]
Let $(X,\Pi)$ be a simplicial complex. We say that $f \in C_k$ is $(\eps,i)$-pseudorandom if it is sparse across all $i$-links in two senses:
\begin{enumerate}
    \item For all $\tau \in X(i)$: 
    \[
    \left| \underset{X_\tau}{\mathbb{E}}[f] \right| \leq \varepsilon \norm{f}_\infty
    \]
    \item For all $\tau \in X(i)$: 
    \[
    \langle f|_\tau, f|_\tau \rangle \leq \varepsilon \norm{f}_\infty^2
    \]
\end{enumerate}
\end{definition}
We note that if $f$ is non-negative, the former condition implies the latter: 
\begin{align*}
\langle \restrict{f}{\tau}, \restrict{f}{\tau} \rangle &\leq \norm{\restrict{f}{\tau}}_1\norm{f}_\infty\\ 
&= \mathbb{E}[f|_\tau]\norm{f}_\infty\\
&\leq \varepsilon\norm{f}_\infty^2.
\end{align*}
It is also worth noting that any $(\eps,i)$-pseudorandom function is automatically $(\eps,j)$-pseudorandom for $j \leq i$.

We now cover the first property of the Bottom-Up Decomposition that does not follow from standard HDX analysis, the behavior of level functions under restriction. Analysis of restrictions is a classic Fourier analytic tool, and the fact that our decomposition behaves nicely under restriction is a major advantage over previous decompositions which have no clear local structure in this sense. For this particular work, we'll mostly be interested in the following bound on the $\ell_2$-norm of restrictions.
\begin{proposition}\label{lemma:g-link-2nd-moment}
Let $(X,\Pi)$ be a two-sided $\gamma$-local-spectral expander and $f \in C_k$ an $(\eps,i)$-pseudorandom function. Then for any $j \leq i \leq k$ and $\tau \in X(j)$:
\[
\langle \gup{i}|_{\tau}, \gup{i}|_{\tau} \rangle \leq \left(\frac{\eps}{{k-j \choose i-j}} + c_{k,i}\gamma\right)\norm{f}^2_\infty,
\]
where $c_{k,i} \leq k^{O(i)}$.
\end{proposition}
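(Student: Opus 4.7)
My plan is to reduce the bound to an application of \Cref{lemma:g-vs-f} on the link $X_\tau$, viewed itself as a two-sided $\gamma$-local-spectral expander on which $f|_\tau$ inherits pseudorandomness (in particular, $\norm{f|_\tau}_{2,X_\tau}^2 \leq \eps\norm{f}_\infty^2$ by the second pseudorandomness condition). The key intermediate object is $(g_\tau)_{\uparrow i-j}$, the level-$(i-j)$ Bottom-Up function of $f|_\tau$ computed \emph{inside} the link $X_\tau$.

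First, using the explicit form of $\gup{i}$ (\Cref{thm:explicit}) and splitting each face $\rho \subseteq \tau \cup \sigma'$ as $\rho = \rho_1 \cup \rho_2$ with $\rho_1 \subseteq \tau$ and $\rho_2 \subseteq \sigma'$, I would rewrite $\gup{i}|_\tau(\sigma') = \sum_{\rho_1 \subseteq \tau}(-1)^{j-|\rho_1|}(g_{\rho_1})_{\uparrow i-j}(\sigma')$, where $(g_{\rho_1})_{\uparrow i-j}$ denotes the level-$(i-j)$ Bottom-Up function of $f|_{\rho_1}$ on the link $X_{\rho_1}$. Separating the $\rho_1 = \tau$ contribution yields a decomposition $\gup{i}|_\tau = (g_\tau)_{\uparrow i-j} + R_\tau$, where $R_\tau := \sum_{\rho_1 \subsetneq \tau}(-1)^{j-|\rho_1|}(g_{\rho_1})_{\uparrow i-j}$ collects the proper sub-face residue. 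For the main term, applying \Cref{lemma:g-vs-f} on $X_\tau$ at level $i-j$, together with approximate Parseval (\Cref{lemma:approx-orthog}) and the pseudorandomness bound on $\norm{f|_\tau}_2^2$, yields $\norm{(g_\tau)_{\uparrow i-j}}_{2,X_\tau}^2 \leq \frac{\eps}{\binom{k-j}{i-j}}\norm{f}_\infty^2 + k^{O(i)}\gamma\norm{f}_\infty^2$, which is precisely the main component of the claimed estimate.

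For the residue $R_\tau$, I would exploit the algebraic identity $\sum_{\rho_1 \subseteq \tau}(-1)^{j-|\rho_1|} = 0$ for $j \geq 1$. Applying \Cref{cor:localization} to $f|_{\rho_2}$ on $X_{\rho_2}$ gives $\E_{X_{\rho_1 \cup \rho_2}}[f] = \E_{X_{\rho_2}}[f] + (\Gamma f|_{\rho_2})(\rho_1)$ where $\Gamma$ has spectral norm $O(kj\gamma)$, and the constant $\E_{X_{\rho_2}}[f]$ telescopes out of the alternating sum, leaving only applications of $\Gamma$. Swapping the order of summation, the resulting Fourier coefficient $H_\tau(\rho_2) := \sum_{\rho_1 \subseteq \tau}(-1)^{j-|\rho_1|}\E_{X_{\rho_1 \cup \rho_2}}[f]$ can be identified as the level-$j$ Bottom-Up coefficient of $D_j^{k-|\rho_2|}f|_{\rho_2}$ at $\tau$, whose squared magnitude (after averaging $\sigma'$, hence $\rho_2$, over $X_\tau$ via Garland's method in the form of \Cref{lemma:garland-local}) is controlled by the spectral bound on $\Gamma$ and by $\norm{f|_{\rho_2}}_2^2 \leq \eps\norm{f}_\infty^2$. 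Combining the main term and the residue via the triangle inequality and squaring (absorbing the cross term into the $\gamma$ error by AM-GM) then yields the claimed bound with $c_{k,i} \leq k^{O(i)}$.

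The main obstacle is the residue analysis. Each individual $(g_{\rho_1})_{\uparrow i-j}$ is pointwise bounded only at scale $\eps\norm{f}_\infty$ (not $\gamma\norm{f}_\infty$), so the required $\gamma$-scale bound on $\norm{R_\tau}_2^2$ must come entirely from the alternating cancellation across $\rho_1 \subseteq \tau$; a naive $L_\infty$ bound would lose a factor of $\binom{k-j}{i-j}$. The difficulty is that \Cref{cor:localization} supplies only an $L^2 \to L^2$ spectral-norm estimate on $\Gamma$, while the summand $(\Gamma f|_{\rho_2})(\tau\setminus\rho_1)$ involves evaluation at a \emph{fixed} face of the link. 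Converting this pointwise value into an averaged $L^2$ quantity where the spectral bound actually applies requires routing the averaging over $\sigma'$ through Garland's method on $X_\tau$ in a way that effectively integrates over the relevant coset of faces, and it is precisely here that the combinatorial simplicial structure (rather than general expanding-poset structure) enters the proof.
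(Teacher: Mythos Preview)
Your restriction identity and the bound on the ``main term'' $(g_\tau)_{\uparrow i-j}$ via \Cref{lemma:g-vs-f} on $X_\tau$ are correct and match the paper's \Cref{lemma:g-restriction}. The gap is in the residue argument: the identity you invoke from \Cref{cor:localization} does \emph{not} hold for the quantity appearing in the explicit form of $\gup{i}$. In $\gup{i}(\tau\cup\sigma')=\sum_{\rho\subseteq\tau\cup\sigma'}(-1)^{|(\tau\cup\sigma')\setminus\rho|}\E_{X_\rho}[f]$, the symbol $\E_{X_{\rho_1\cup\rho_2}}[f]$ means $D_{|\rho_1|+|\rho_2|}^k f(\rho_1\cup\rho_2)$, the average of $f$ over $k$-faces \emph{containing} $\rho_1\cup\rho_2$ (a restriction). \Cref{cor:localization}, by contrast, compares $\E_{X_{\rho_2}}[f|_{\rho_2}]$ with the \emph{localized} expectation of $f|_{\rho_2}$ over $(k-|\rho_2|)$-faces of $X_{\rho_2}$ lying in the link of $\rho_1$. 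These are different operators: the restriction difference is $(D_{|\rho_1|}^{k-|\rho_2|}-U_0^{|\rho_1|}D_0^{k-|\rho_2|})f|_{\rho_2}$, which has no $O(\gamma)$ spectral bound (take $f=\mathbbm{1}_{X_v}$ and $\rho_1=\{v\}$ to see the difference is $\Theta(1)$). Consequently your claimed cancellation does not occur, and $R_\tau$ is genuinely of scale $\eps$, not $\gamma$: already for $j=1$ the residue is $-\gup{i-1}$, whose $L^\infty$-norm is $2^{i-1}\eps\norm{f}_\infty$ by \Cref{lemma:g-infty}.

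The paper's proof avoids this entirely. Rather than trying to push the residue into the $\gamma$-error, it applies Cauchy-Schwarz across the $2^j$ summands in \Cref{lemma:g-restriction} and bounds \emph{each} $\langle \gup{i-j}^{(\tau\setminus\sigma)},\gup{i-j}^{(\tau\setminus\sigma)}\rangle$ separately via \Cref{lemma:g-vs-f} on $X_{\tau\setminus\sigma}$, using the second pseudorandomness condition $\langle f|_{\tau\setminus\sigma},f|_{\tau\setminus\sigma}\rangle\le\eps\norm{f}_\infty^2$. Each term contributes $\eps/\binom{k-j+|\sigma|}{i-j}\le\eps/\binom{k-j}{i-j}$, and the resulting $2^{O(i)}$ factor is absorbed (as it is everywhere this proposition is used downstream). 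So the fix is not to refine the residue analysis but to drop the main/residue split and treat all $2^j$ terms uniformly.
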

Proving this, however, requires a more general understanding of the Bottom-Up Decomposition under restriction. They key observation is that the restriction of our level functions is closely related to the level functions of the restriction. More formally, for any $\tau \in X(j)$ let $\gup{\ell}^{(\tau)}$ denote the Bottom-Up Decomposition of $f|_\tau$. Then following relation between $g_i|_\tau$ and the $\gup{\ell}^{(\tau)}$ holds.
\begin{lemma}\label{lemma:g-restriction}
Let $(X,\Pi)$ be a pure, weighted simplicial complex, and $f \in C_k$. Then for any $j \leq i \leq k$ and $\tau \in X(j)$:
\[
\gup{i}|_\tau = \sum\limits_{\sigma \subseteq \tau}(-1)^{|\sigma|}\gup{i-j}^{(\tau \setminus \sigma)}.
\]
\end{lemma}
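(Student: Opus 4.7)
The plan is to prove the identity pointwise by directly expanding both sides using the explicit combinatorial form of $\gup{\ell}$ from \Cref{thm:explicit}. Fix any $\rho \in X_\tau(i-j)$; since $\rho \in X_\tau$ forces $\tau \cap \rho = \emptyset$, every subset $\omega \subseteq \tau \cup \rho$ factors uniquely as $\omega = \rho' \cup \alpha$ with $\rho' = \omega \cap \tau \subseteq \tau$ and $\alpha = \omega \cap \rho \subseteq \rho$. The explicit form then rewrites the left-hand side as the double sum
\[
\gup{i}|_\tau(\rho) \;=\; \gup{i}(\tau \cup \rho) \;=\; \sum_{\rho' \subseteq \tau}\,\sum_{\alpha \subseteq \rho}\,(-1)^{(j-|\rho'|)+((i-j)-|\alpha|)}\,\underset{X_{\rho'\cup\alpha}}{\mathbb{E}}[f],
\]
where I used $|(\tau\cup\rho)\setminus\omega| = (j-|\rho'|) + ((i-j)-|\alpha|)$.

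For the right-hand side I would unpack $\gup{i-j}^{(\tau\setminus\sigma)}$ as the $(i-j)$-th Bottom-Up level function of the restricted function $f|_{\tau\setminus\sigma} \in C_{k-(j-|\sigma|)}(X_{\tau\setminus\sigma})$. Applying the explicit form inside the link $X_{\tau\setminus\sigma}$ and using the fact that iterated links compose, namely $(X_{\tau\setminus\sigma})_\alpha = X_{(\tau\setminus\sigma)\cup\alpha}$ with the matching induced measure, yields
\[
\gup{i-j}^{(\tau\setminus\sigma)}(\rho) \;=\; \sum_{\alpha \subseteq \rho}\,(-1)^{(i-j)-|\alpha|}\,\underset{X_{(\tau\setminus\sigma)\cup\alpha}}{\mathbb{E}}[f].
\]
Reparameterizing the outer sum over $\sigma \subseteq \tau$ by $\rho' := \tau\setminus\sigma$ (so $|\sigma| = j-|\rho'|$) and swapping the order of summation transforms the right-hand side into exactly the same double sum as above, matching sign-for-sign and term-for-term.

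There is no real obstacle here: the lemma is essentially an inclusion--exclusion bookkeeping identity that exploits two combinatorial facts, the disjointness $\tau \cap \rho = \emptyset$ (which lets subsets of $\tau\cup\rho$ split cleanly into a $\tau$-part and a $\rho$-part), together with the compositional structure of links and their induced measures (which lets the iterated localization collapse into a single localization over $X_{\rho'\cup\alpha}$). An alternative route would be induction on $j$ using the recursive form of \Cref{def:recursive}, but this would require tracking binomial coefficients from the $U^i_\ell$ steps and seems strictly more cumbersome than the direct calculation above.
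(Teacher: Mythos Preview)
Your proposal is correct and follows essentially the same approach as the paper: both expand the left-hand side via the explicit form of \Cref{thm:explicit}, partition the subsets $\omega \subseteq \tau \cup \rho$ according to their intersection with $\tau$ (your $\rho'=\omega\cap\tau$ is exactly the paper's $\tau\setminus\sigma$), and then recognize the inner sum over $\alpha\subseteq\rho$ as $\gup{i-j}^{(\tau\setminus\sigma)}(\rho)$ via the iterated-link identity. The paper also remarks that the inductive route you mention is possible but more cumbersome, in agreement with your assessment.
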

\begin{proof}
This follows almost immediately from directly expanding the definition of $\gup{i}|_\tau$. In particular, recall that for all $I \in X(i-j)$, we have by \Cref{thm:explicit}:
\[
\restrict{\gup{i}}{\tau}(I) = \sum\limits_{T \subseteq I \cup \tau} (-1)^{|(\tau \cup I) \setminus T|} \underset{X_T}{\mathbb{E}}[f].
\]
The trick is to notice that we can divide up this sum over $T \subseteq I \cup \tau$ by $T$'s intersection with $\tau$. It will be convenient to phrase this in the following way. Let $\mathscr{T}$ denote the set of all sub-faces $T \subseteq I \cup \tau$, and for each $\sigma \subset \tau$, let $\mathscr{T}_\sigma$ be the set of sub-faces $T \subset I \cup \tau$ such that $T \cap \tau = \tau \setminus \sigma$. Notice that for any $\sigma \neq \sigma'$, $\mathscr{T}_\sigma$ and $\mathscr{T}_{\sigma'}$ are disjoint, and that the union of these families is exactly $\mathscr{T}$. Together, this means that we can break up the above sum by first summing over $\sigma$, and then every $T \in \mathscr{T}_\sigma$:
\[
\restrict{\gup{i}}{\tau}(I) = \sum\limits_{\sigma \subseteq \tau} \left(\sum\limits_{T \in \mathscr{T}_\sigma } (-1)^{|(I \cup \tau) \setminus T|}\underset{X_{T}}{\mathbb{E}}[f]\right)
\]
By definition, every $T \in \mathscr{T}_\sigma$ can be written as $T' \cup (\tau \setminus \sigma)$. Plugging this into the above gives the result:
\begin{align*}
    \sum\limits_{\sigma \subseteq \tau} \left(\sum\limits_{T \in \mathscr{T}_\sigma } (-1)^{|(I \cup \tau) \setminus T|}\underset{X_{T}}{\mathbb{E}}[f]\right) &= \sum\limits_{\sigma \subseteq \tau} \left(\sum\limits_{T' \cup (\tau \setminus \sigma) \in \mathscr{T}_\sigma } (-1)^{|(I \cup \tau) \setminus (T' \cup (\tau \setminus \sigma))|}\underset{X_{T' \cup (\tau \setminus \sigma)}}{\mathbb{E}}[f]\right)\\
    &= \sum\limits_{\sigma \subseteq \tau} (-1)^{|\sigma|}\left(\sum\limits_{T' \cup (\tau \setminus \sigma) \in \mathscr{T}_\sigma } (-1)^{|I \setminus T'|}\underset{X_{T' \cup (\tau \setminus \sigma)}}{\mathbb{E}}[f]\right)\\
    &= \sum\limits_{\sigma \subseteq \tau} (-1)^{|\sigma|} \gup{i-j}^{(\tau \setminus \sigma)}(I),
\end{align*}
where the final step comes from the fact that the inner summation over $\mathscr{T}_\sigma$ is equivalent to summing over all $T'$ in the link of $\tau \setminus \sigma$.
\end{proof}
We note that the same result was known to hold for the Bottom-Up Decomposition over the complete complex \cite{khot2018small}, who proved the result by induction using the recursive form of the decomposition. The same strategy will work for general simplicial complexes, but we find using the explicit form as above to be a bit simpler.

With this in hand, proving \Cref{lemma:g-link-2nd-moment} is fairly elementary and follows similarly to its analogous statement for the complete complex (see \cite[Corollary 3.4]{khot2018small}).
\begin{proof}[Proof of \Cref{lemma:g-link-2nd-moment}]
An application of \Cref{lemma:g-restriction} and Cauchy-Schwarz implies that:
\begin{align*}
\gup{i}|_\tau(T)^2 &= \left(\sum\limits_{\sigma \subseteq \tau}(-1)^{|\sigma|}\gup{i-j}^{(\tau \setminus \sigma)}(T)\right)^2\\
&\leq 2^{O(i)}\sum\limits_{\sigma \subseteq \tau}\gup{i-j}^{(\tau \setminus \sigma)}(T)^2
\end{align*}
Then applying \Cref{lemma:g-vs-f} gives:
\begin{align*}
    \langle \gup{i}|_\tau, \gup{i}|_\tau \rangle &\leq 2^{O(i)}\sum\limits_{\sigma \subseteq \tau} \langle \gup{i-j}^{(\tau \setminus \sigma)}, \gup{i-j}^{(\tau \setminus \sigma)} \rangle\\
    &\leq 2^{O(i)}\sum\limits_{\sigma \subseteq \tau} \frac{\langle \restrict{f}{(\tau \setminus \sigma)},\restrict{f}{(\tau \setminus \sigma)}\rangle}{{k - j + |\sigma| \choose i-j}} + c_1\gamma\norm{f}_\infty^2\\
    &\leq 2^{O(i)}\sum\limits_{\sigma \subseteq \tau} \frac{\varepsilon\norm{f}_\infty^2}{{k - j + |\sigma| \choose i-j}} + c_1\gamma\norm{f}_\infty^2\\
     &\leq 2^{O(i)}\frac{\varepsilon\norm{f}_\infty^2}{{k - j \choose i-j}} + c_2\gamma\norm{f}_\infty^2
\end{align*}
where $c_1,c_2 \leq k^{O(i)}$.
\end{proof}
Finally, it will also be useful to bound the infinity norm of $\gup{i}$ as well. Our final property shows that $\norm{\gup{i}}_{\infty}$ is particularly small when $f$ is pseudorandom.
\begin{lemma}\label{lemma:g-infty}
Let $(X,\Pi)$ be a two-sided $\gamma$-local-spectral expander and $f \in C_k$ be any $(\eps,i)$-pseudorandom function satisfying $\E[f] \leq \eps\norm{f}_\infty$. Then the infinity norm of $\gup{i}$ is small:
\[
\norm{\gup{i}}_{\infty} \leq 2^{i}\eps\norm{f}_\infty
\]
\end{lemma}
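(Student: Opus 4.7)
The plan is to invoke the explicit combinatorial form of the level functions given in \Cref{thm:explicit} and reduce the problem to a simple termwise bound. Fix $\tau \in X(i)$. The explicit form reads
\[
\gup{i}(\tau) = \sum_{\sigma \subseteq \tau} (-1)^{|\tau \setminus \sigma|} \underset{X_\sigma}{\E}[f],
\]
so by the triangle inequality
\[
|\gup{i}(\tau)| \leq \sum_{\sigma \subseteq \tau} \left| \underset{X_\sigma}{\E}[f] \right|.
\]
This is a sum of $2^i$ terms indexed by sub-faces $\sigma \subseteq \tau$, each of dimension at most $i$.

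Next, I would argue that every term in the sum is bounded by $\eps \norm{f}_\infty$. For $\sigma$ with $1 \leq |\sigma| \leq i$, this is exactly the first clause of $(\eps, i)$-pseudorandomness (using the excerpt's remark that $(\eps, i)$-pseudorandom implies $(\eps, j)$-pseudorandom for every $1 \leq j \leq i$, so the bound on $|\E_{X_\sigma}[f]|$ holds on every non-trivial link). For the empty face $\sigma = \emptyset$, the localized expectation is simply the global expectation $\E[f]$, which is bounded by $\eps \norm{f}_\infty$ by the extra hypothesis in the statement of the lemma. Combining these cases gives $|\E_{X_\sigma}[f]| \leq \eps \norm{f}_\infty$ uniformly over all $\sigma \subseteq \tau$.

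Summing this uniform bound over the $2^i$ sub-faces yields
\[
|\gup{i}(\tau)| \leq 2^i \eps \norm{f}_\infty,
\]
and since this bound is independent of the choice of $\tau \in X(i)$, taking a supremum gives the desired $\norm{\gup{i}}_\infty \leq 2^i \eps \norm{f}_\infty$. There is essentially no obstacle here: the entire argument is a direct consequence of the explicit form of $\gup{i}$ and the definition of pseudorandomness, with no need for any spectral or local-to-global machinery. The only subtlety worth emphasizing is the role of the $\E[f] \leq \eps \norm{f}_\infty$ hypothesis, which fills the one gap (the empty sub-face) not covered by the standard $(\eps,i)$-pseudorandomness assumption.
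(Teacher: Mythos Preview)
Your proof is correct and takes essentially the same approach as the paper: both invoke the explicit form of $\gup{i}$ from \Cref{thm:explicit}, bound each summand by $\eps\norm{f}_\infty$ via pseudorandomness (plus the extra hypothesis for the empty sub-face), and sum to get $2^i\eps\norm{f}_\infty$. The only cosmetic difference is that the paper uses the operator form $\sum_{j=0}^i(-1)^{i-j}\binom{i}{j}U_j^iD_j^kf$ and groups terms by dimension, whereas you use the equivalent combinatorial form and sum over individual sub-faces; your version is arguably clearer about why the $\E[f]\le\eps\norm{f}_\infty$ hypothesis is needed.
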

\begin{proof}
This is immediate from combining the explicit form of $\gup{i}$ with $(\eps,i)$-pseudorandomness.
\begin{align*}
    |\gup{i}(w)| &= \left|\sum_{j=0}^i (-1)^{i-j} \binom{i}{j} U_j^i D_j^k f(w)\right|\\
    &\leq \sum_{j=0}^i \binom{i}{j} \eps\norm{f}_\infty\\
    &= 2^{i} \eps\norm{f}_\infty
\end{align*}
where we have used the observation that since $f$ is $(\eps,i)$-pseudorandom, for all $w \in X(i)$: 
\[
|U_j^i D_j^k f(w)| \leq \eps\norm{f}_\infty
\]
\end{proof}

\section{Hypercontractivity on HDX}\label{sec:hyper}
In this section, we prove a hypercontractivity theorem for the Bottom-Up Decomposition on two-sided local-spectral expanders. Since we will work only with the Bottom-Up Decomposition in this section, we drop the $\uparrow$ for simplicity and simply write
$f = \sum f_i$ for $f_i = {k \choose i}U^k_ig_i$ and $g_i=\gup{i}$ as defined in the Bottom-Up Decomposition. 

\begin{theorem}\label{thm:hypercontraction}
Let $(X,\Pi)$ be a two-sided $\gamma$-local-spectral expander with $\gamma \leq k^{-\Omega(i)}$, and $f \in C_k$ an $(\eps,i)$-pseudorandom function. If $f=f_0+\ldots+f_k$ is the Bottom-Up Decomposition of $f$, then:
\[
\E[f_i^4] \leq 2^{O(i)}\eps\E[f_i^2]\norm{f}_\infty^2 + c_{k,i}\varepsilon\gamma^{1/2}\norm{D^k_if}_2^2\norm{f}_\infty^2
\]
where $c_{k,i} \leq k^{O(i)}$.\footnote{Note that this can be improved to $c_{k,i} \leq \max\left\{2^{O(i)},{k \choose i}^{O(1)}\right\}$, but since we generally consider the regime of $i \ll k$ we use $k^{O(i)}$ throughout for simplicity.}
\end{theorem}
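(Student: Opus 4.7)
The plan is to follow the overall template pioneered by KMMS on the complete complex, substituting the localization lemma (\Cref{cor:localization}) and the Bottom-Up restriction/infinity-norm bounds (\Cref{lemma:g-restriction}, \Cref{lemma:g-link-2nd-moment}, \Cref{lemma:g-infty}) for the product-structure arguments that KMMS used. The first step is to expand
\[
\E[f_i^4] = \binom{k}{i}^4 \underset{\tau \in X(k)}{\E}\underset{\sigma_1,\ldots,\sigma_4 \subset \tau}{\E}\left[\prod_{j=1}^4 g_i(\sigma_j)\right],
\]
and partition the inner 4-tuples $(\sigma_1,\sigma_2,\sigma_3,\sigma_4)$ by their intersection pattern (or just by $t = |\sigma_1 \cup \cdots \cup \sigma_4|$). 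The two extremes are $t=i$, where all four faces coincide and contribute $\E[g_i^4]$, and $t=4i$, where the faces are pairwise disjoint and form the ``generic'' regime that should drive the main term.

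For the generic (pairwise-disjoint) regime, the goal is to show the sum is bounded by $2^{O(i)}\eps \E[f_i^2]\|f\|_\infty^2$ up to the $\gamma^{1/2}$-error. The natural move is to decorrelate $(\sigma_3,\sigma_4)$ from $(\sigma_1,\sigma_2)$: the conditional expectation $\E[g_i(\sigma_3)g_i(\sigma_4)\mid \sigma_1,\sigma_2 \subset \tau]$ is by \Cref{lemma:localization} equal to its unconditional counterpart plus the action of an operator $\Gamma$ of spectral norm $O(k^2\gamma)$ applied to (a restriction of) a quadratic form in $g_i$. Iterating this for each ``swap'' yields a product of second moments that, by \Cref{lemma:g-vs-f}, matches $\binom{k}{i}^{-2} \E[f_i^2]^2$; bounding one factor by $\|f\|_\infty^2 \cdot\eps$ via the pseudorandomness assumption (combined with $\|g_i\|_\infty \le 2^i \eps \|f\|_\infty$ from \Cref{lemma:g-infty}) gives the desired main term.

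For the degenerate regimes (smaller $t$), I would pull out $\|g_i\|_\infty \le 2^i \eps\|f\|_\infty$ on the most-repeated factor(s) and bound the remaining expectation by $\ell_2$-norms of restrictions of $g_i$ to the shared sub-face, using \Cref{lemma:g-restriction} to write the restriction as a signed sum of level functions on a link, then \Cref{lemma:g-link-2nd-moment} to bound each such restriction by $\bigl(\eps/\binom{k-j}{i-j} + k^{O(i)}\gamma\bigr)\|f\|_\infty^2$. The combinatorial factor $\binom{k-j}{i-j}$ is exactly what cancels the outer $\binom{k}{i}^4$ weight from the expansion, keeping the coefficient at $c_{k,i}\le k^{O(i)}$. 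The $\gamma^{1/2}$ rather than $\gamma$ in the final error arises from applying Cauchy–Schwarz to the cross terms between the main contribution and the $\Gamma$-error: one factor yields $\|\Gamma\|\cdot\|g\|_2 \lesssim \gamma \|D^k_i f\|_2$, and squaring to extract from $\E[f_i^4]^{1/2}$-type quantities produces $\gamma^{1/2}$.

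The principal obstacle is the \emph{intermediate} intersection regimes: tuples in which two or three of the $\sigma_j$'s partially overlap but the configuration is neither maximally-intersecting nor pairwise-disjoint. Here one must choose which pairs to decorrelate via \Cref{cor:localization}, apply \Cref{lemma:g-restriction} inside the shared link to re-expand restrictions into level functions of a smaller complex, and then reapply the $\ell_2$-bound \Cref{lemma:g-link-2nd-moment} — all while keeping the aggregated constants under $k^{O(i)}$ and the aggregated $\gamma$-errors at $\gamma^{1/2}$. I expect this bookkeeping, rather than any single new inequality, to be the bulk of the work, and the step where the restriction toolkit of \Cref{sec:Bottom-Up} combines most nontrivially with the localization toolkit of \Cref{sec:localization}.
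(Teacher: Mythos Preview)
Your ingredients are right (intersection patterns, localization, the restriction/$\ell_\infty$ bounds on $g_i$), but the case structure is inverted in a way that breaks the argument.

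The pairwise-disjoint configuration $t=4i$ is \emph{not} the main term. In that pattern every variable appears in exactly one $I_j$, i.e.\ every variable is \emph{unique}. The paper's first step is to show that any pattern with a unique variable contributes only $2^{O(i)}\gamma\eps^2\|D_i^k f\|_2^2$: isolate the unique variable $x_\ell \in I_4$, localize to replace $\E_{x_\ell \in X_J}[g_i|_{J_4}(x_\ell)]$ by $D_i g_i(J_4) + \Gamma g_i|_{J_4}$, and then use \Cref{lemma:approx-kernel} ($\|D_i g_i\|_2 \le 2^{O(i)}\gamma\|D_i^k f\|_2$) to kill the main piece. You never mention \Cref{lemma:approx-kernel}, and without it the unique-variable patterns are not controlled. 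In particular, your proposed decorrelation of the disjoint case does not produce $\binom{k}{i}^{-2}\E[f_i^2]^2$: once you decouple $(\sigma_3,\sigma_4)$ from $(\sigma_1,\sigma_2)$, the remaining $\E[g_i(\sigma_1)g_i(\sigma_2)]$ with $\sigma_1\cap\sigma_2=\emptyset$ is the swap-walk pairing $\langle g_i, S_{i,i} g_i\rangle \approx \E[g_i]^2$, not $\langle g_i,g_i\rangle$.

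The main term actually comes from patterns with \emph{no} unique variables, hence $\ell \le 2i$. The paper's organization here is essential and is missing from your plan: partition the variables into $H_2,H_3,H_4$ by multiplicity, inductively peel off each $x_m\in H_2$ by decorrelating the two $g_i$'s that do \emph{not} contain $x_m$ (via \Cref{cor:localization}) and then Cauchy--Schwarz over $x_m$, reducing to an expectation over $H_3\cup H_4$ of a product of four restricted second moments. Only then does \Cref{lemma:g-link-2nd-moment} enter, applied after one more decorrelation/Cauchy--Schwarz split into the pairs $(I_1,I_2)$ and $(I_3,I_4)$. The key combinatorial identity is $|H_4|+\tfrac{1}{2}|H_3|=2i-\ell$, which is exactly what converts the $\binom{k-|B|}{i-|B|}^{-1/2}$ factors from \Cref{lemma:g-link-2nd-moment} into the $(i/k)^\ell$ needed to cancel the outer $\binom{k}{\ell}$ (not $\binom{k}{i}^4$, which is absorbed earlier). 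Your ``intermediate regime'' worry is the right instinct, but the resolution is this $H_2$/$H_3$/$H_4$ bookkeeping, not an ad hoc choice of which pairs to decorrelate.
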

The proof of \Cref{thm:hypercontraction} can get a bit technical at points, so for simplicity of notation, we note it is sufficient to prove the result assuming $\norm{f}_\infty=1$. Given a general function $f$, applying this to $\frac{f}{\norm{f}_\infty}$ gives the general form in \Cref{thm:hypercontraction}. Keeping this in mind, we'll start by laying out our general strategy for analyzing the fourth moment. Let $[\tau]_i = \{a \subseteq \tau: a \in X(i)\}$, and note that $f_i(\tau) = \sum_{a \in [\tau]_i} g_i(a)$. Using this notation, we can expand out the $4$th moment of $f_i$:
\[
\E[f_i^4] = \underset{\tau \in X(k)}{\E} \sum_{a,b,c,d \in [\tau]_i} g_i(a) g_i(b) g_i(c) g_i(d) = \sum_{a,b,c,d \in X(i)} \pi_k(X_{a \cup b \cup c \cup d}) g_i(a) g_i(b) g_i(c) g_i(d),
\]
where the indices $a,b,c,d$ are \textit{ordered}. We can further simplify this by grouping the terms by size of $a \cup b \cup c \cup d$:
\[
\E[f_i^4] = \sum_{\ell=i}^{4i}{k \choose \ell} \sum_{e \in X(\ell)} \pi_\ell(e) \sum_{a,b,c,d \in X(i): a \cup b \cup c \cup d = e} g_i(a) g_i(b) g_i(c) g_i(d).
\]
Analyzing the RHS directly is difficult, so taking after \cite{khot2018small}, we will partition the term even further by summing over fixed \textbf{intersection patterns} of $a$, $b$, $c$, and $d$ (an intersection pattern fixes the intersection size of every subset of $\{a,b,c,d\}$). Denote the set of such patterns where $|a \cup b \cup c \cup d| = \ell$ by $\Sigma_\ell$, and for any $e \in X(\ell)$, and $\sigma \in \Sigma_\ell$, let $\sigma(e)$ denote all tuples $(a,b,c,d)$ such that $a \cup b \cup c \cup d = e$, $(a,b,c,d) \in \sigma$. We may now write:
\[
\E[f_i^4] = \sum_{\ell=i}^{4i}{k \choose \ell}\sum_{\sigma \in \Sigma_\ell} \sum_{e \in X(\ell)} \pi_\ell(e) \sum_{a,b,c,d \in \sigma(e)} g_i(a) g_i(b) g_i(c) g_i(d).
\]
We make one final simplification of the above before moving to analysis. Let $x_1,\ldots,x_\ell$ be random variables which take on vertex values in the complex. For each intersection pattern $\sigma \in \Sigma_\ell$, let $I^{\sigma}_1,\ldots,I^{\sigma}_4$ be size-$i$ subsets of $\{x_1,\ldots,x_\ell\}$ whose union is $\{x_1,\ldots,x_\ell\}$ and which satisfy the intersection pattern $\sigma$. Then we can simplify the above as the following expectation over the $x_i$:
\[
\E[f_i^4] = \sum_{\ell=i}^{4i}{k \choose \ell}\sum_{\sigma \in \Sigma_\ell} \beta(\sigma) \underset{x_1 \in X(1)}{\E}\left[\underset{x_2 \in X_{x_1}(1)}{\E}\ldots\left[\underset{x_\ell \in X_{x_1,\ldots,x_{\ell-1}}(1)}{\E}\left[g_i(I^{\sigma}_1)g_i(I^{\sigma}_2)g_i(I^{\sigma}_3)g_i(I^{\sigma}_4) \right]\right]\right].
\]
where $\beta(\sigma) \leq 2^{O(i)}$ is a parameter dependent on the intersection pattern that accounts for the new normalization of terms in the nested expectation.\footnote{The bound $2^{O(i)}$ follows from a simple counting argument. Fix an intersection pattern $\sigma=\{a_1,\ldots,a_{15}\}$. Expanding out the expectation $\underset{x_1 \in X(1)}{\E}\left[\underset{x_2 \in X_{x_1}(1)}{\E}\ldots\left[\underset{x_\ell \in X_{x_1,\ldots,x_{\ell-1}}(1)}{\E}\left[g_i(I^{\sigma}_1)g_i(I^{\sigma}_2)g_i(I^{\sigma}_3)g_i(I^{\sigma}_4) \right]\right]\right]$. The coefficient of any term $g_i(a)g_i(b)g_i(c)g_i(d)$ is given by $\frac{\pi_\ell(e)}{\ell!}$ times the number of permutations of $x_1,\ldots x_\ell$ that fix $g_i(a)g_i(b)g_i(c)g_i(d)$. This latter value is the same for any term, and can be lower bounded by $\prod_{i=1}^{15} a_i!$, so the final coefficient is at least $\pi_\ell(e)\frac{\prod a_i!}{\ell!}$. To cancel this value, $\beta(\sigma)$ is therefore at most $\frac{\ell!}{\prod a_i!} \leq 2^{O(i)}$ as desired.} For simplicity of notation, we will instead write the right-hand side as:
\[
\E[f_i^4] = \sum_{\ell=i}^{4i}{k \choose \ell}\sum_{\sigma \in \Sigma_\ell} \beta(\sigma) \underset{x_1,\ldots,x_\ell}{\E}\left[g_i(I^{\sigma}_1)g_i(I^{\sigma}_2)g_i(I^{\sigma}_3)g_i(I^{\sigma}_4) \right]
\]
where it is understood that $\underset{x_1,\ldots,x_\ell}{\E}$ is a shorthand for the nested expectation $\underset{x_1 \in X(1)}{\E}\underset{x_2 \in X_{x_1}(1)}{\E}\ldots\underset{x_\ell \in X_{x_1,\ldots,x_{\ell-1}}(1)}{\E}$. We will use this convention throughout the rest of the proof, as the nested notation is cumbersome to write otherwise.

Our goal is now to upper bound this sum to get a hypercontractive inequality. We do this by bounding each sign pattern independently.
\begin{claim}\label{claim:sign-pattern-bound}
For every sign pattern $\sigma$, the corresponding expectation is bounded by:
\[ 
\underset{x_1,\ldots,x_\ell}{\E}\left[g_i(I^{\sigma}_1)g_i(I^{\sigma}_2)g_i(I^{\sigma}_3)g_i(I^{\sigma}_4) \right] \leq \left(\frac{i}{k} \right)^{\ell}2^{O(i)}\eps\E[f_i^2] + c_{k,i}\varepsilon\gamma^{1/2}\norm{D_i^k f}_2^2
\]
where $c_{k,i} \leq k^{O(i)}$.
\end{claim}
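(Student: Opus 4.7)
The plan is to bound each intersection-pattern expectation by iteratively applying the localization machinery of \Cref{sec:localization} to decouple variables, then combining Cauchy--Schwarz with the $\ell_\infty$ and restricted $\ell_2$ estimates on $g_i$ that follow from the pseudorandomness of $f$ (namely \Cref{lemma:g-infty} and \Cref{lemma:g-link-2nd-moment}). Associate to each variable $x_m$ the set $S_m \subseteq \{1,2,3,4\}$ of factors $I_j^\sigma$ that contain it, and call $x_m$ a \emph{singleton} when $|S_m|=1$. Denoting by $t_j$ the number of variables with $|S_m|=j$, the constraints $t_1+2t_2+3t_3+4t_4=4i$ and $\sum_j t_j=\ell$ organize the case analysis: singletons can be peeled off cheaply using pseudorandomness, while shared variables will require a swap-walk / Cauchy--Schwarz step.

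First, I would walk through the nested expectation one variable at a time. When the innermost variable $x_m$ is a singleton belonging to $I_j^\sigma$, only the factor $g_i(I_j^\sigma)$ depends on it, and \Cref{cor:localization} allows replacing its conditional expectation over the link of $\{x_1,\dots,x_{m-1}\}$ by its conditional expectation over the link of only the preceding vertices of $I_j^\sigma$, at the cost of an operator of spectral norm at most $k^2\gamma$ acting on the appropriate restriction of $g_i$. For variables in multiple factors the step is deferred to a later Cauchy--Schwarz stage. After all singletons have been peeled, each factor $g_i(I_j^\sigma)$ depends only on vertices of $I_j^\sigma$, essentially as if the joint distribution were a product.

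To bound the resulting main term, I would apply $\norm{g_i}_\infty \le 2^i\eps$ from \Cref{lemma:g-infty} on the ``cheap'' singleton positions and pair the remaining shared-variable factors with Cauchy--Schwarz, producing second-moment terms of the form $\langle g_i|_\tau, g_i|_\tau\rangle$ controlled by $\eps/\binom{k-|\tau|}{i-|\tau|} + k^{O(i)}\gamma$ via \Cref{lemma:g-link-2nd-moment}; in the sub-case where some $I_j^\sigma$ consists entirely of singletons the corresponding factor collapses to $\E[g_i]$, which iterating \Cref{lemma:approx-kernel} shows is $O(2^{O(i)}\gamma\norm{f}_2)$ and thus contributes only to the error term. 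Converting $\langle g_i,g_i\rangle$ back to $\E[f_i^2]/\binom{k}{i}$ via \Cref{lemma:g-vs-f} and using the identity $(i/k)^i\binom{k}{i}=2^{O(i)}$, the combinatorics of the intersection pattern yield exactly the target factor $(i/k)^\ell 2^{O(i)}\eps\E[f_i^2]$, with the coefficient $\beta(\sigma)\le 2^{O(i)}$ absorbed into the constants.

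The main obstacle lies in controlling the error bookkeeping. Each of the up to $\ell\le 4i$ localization steps contributes a term proportional to $\gamma$ times a spectral norm of a restricted $g_i$, and these must aggregate into the single allowed error $c_{k,i}\eps\gamma^{1/2}\norm{D_i^k f}_2^2$. This is where the $\gamma^{1/2}$ rather than $\gamma$ appears in the bound: applying Cauchy--Schwarz on the error contributions trades a full factor of $\gamma$ for $\gamma^{1/2}$ times an additional $\ell_2$ norm, which is then controlled via the spectral bound of \Cref{thm:dd} on $S_{j,i}-U_0^j D_0^i$ and reassembled into $\norm{D_i^k f}_2^2$ through Garland-type identities (\Cref{lemma:garland-local}). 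Keeping the prefactor of the form $k^{O(i)}$ ultimately requires showing that these per-step errors telescope rather than multiply across the $\ell$ peeling stages.
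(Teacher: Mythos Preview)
Your proposal has a genuine gap in how it handles singleton variables. You invoke \Cref{lemma:approx-kernel} only in the special sub-case where some $I_j^\sigma$ consists \emph{entirely} of singletons, collapsing that factor to $\E[g_i]$. The paper's argument is sharper and essential: if even a \emph{single} variable $x_m$ is a singleton (say $x_m \in I_4$ only), then after one localization step the inner average $\E_{x_m}[g_i|_{J_4}(x_m)]$ becomes $D_ig_i(J_4)$ up to a $\Gamma$-error, and $\|D_ig_i\|_2 \le 2^{O(i)}\gamma\|D_i^k f\|_2$ by \Cref{lemma:approx-kernel}. Thus \emph{any} pattern containing a singleton contributes only to the error term, and the main-term analysis is needed only when every variable appears at least twice, forcing $\ell \le 2i$. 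Your alternative route of applying $\|g_i\|_\infty \le 2^i\eps$ ``on singleton positions'' cannot substitute for this: it yields extra factors of $\eps$, not of $i/k$, and since these parameters are unrelated you cannot recover the required $(i/k)^\ell$ scaling for any $\ell>2i$. Concretely, infinity-norm bounds plus one application of \Cref{lemma:g-link-2nd-moment} and \Cref{lemma:g-vs-f} give at best $(i/k)^i 2^{O(i)}\eps^p\|f_i\|_2^2$, which does not dominate $(i/k)^\ell 2^{O(i)}\eps\|f_i\|_2^2$ when $\ell>2i$.

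For the surviving case $\ell\le 2i$, your ``pair with Cauchy--Schwarz'' is too coarse to see why the exponent comes out right. The paper partitions variables by multiplicity into $H_2,H_3,H_4$ and proceeds in two stages: an inductive argument over $H_2$ that at each step uses localization to decorrelate the two factors \emph{not} containing the current variable before applying Cauchy--Schwarz, and then a final Cauchy--Schwarz splitting $(I_1,I_2)$ from $(I_3,I_4)$, decorrelating the extraneous $H_3$ variables, and invoking \Cref{lemma:g-link-2nd-moment} on the restricting faces $B_{12}=(H_3\cap I_1\cap I_2)\cup H_4$ and $B_{34}=(H_3\cap I_3\cap I_4)\cup H_4$. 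The exponent $(i/k)^\ell$ then falls out of the combinatorial identity $(|B_{12}|+|B_{34}|)/2 = |H_4|+|H_3|/2 = 2i-\ell$, which is the crux of the accounting and is not visible in your outline.
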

Before jumping into the proof of \Cref{claim:sign-pattern-bound}, let's show how it can be used to prove \Cref{thm:hypercontraction}.
\begin{proof}[Proof of \Cref{thm:hypercontraction}]
Recall it is sufficient to prove the result assuming $\norm{f}_\infty=1$. As discussed earlier in the section, expanding the 4th moment gives the following relation:
\[
\E[f_i^4] = \sum_{\ell=i}^{4i}{k \choose \ell}\sum_{\sigma} \beta(\sigma) \underset{x_1,\ldots,x_\ell}{\E}\left[g_i(I^{\sigma}_1)g_i(I^{\sigma}_2)g_i(I^{\sigma}_3)g_i(I^{\sigma}_4) \right].
\]
Applying \Cref{claim:sign-pattern-bound} to the righthand side gives:
\begin{align*}
    \E[f_i^4] & \leq \sum_{\ell=i}^{4i}{k \choose \ell}\sum_{\sigma \in \Sigma_\ell} \beta(\sigma) \left (\left(\frac{i}{k} \right)^{\ell}2^{O(i)}\eps\E[f_i^2] + c_{k,i}\varepsilon\gamma^{1/2}\norm{D^k_i f}_2^2\right)\\
    &\leq \sum_{\ell=i}^{4i}\left(\frac{ek}{\ell}\right)^\ell\left(\frac{i}{k} \right)^{\ell}\left(\sum_{\sigma \in \Sigma_\ell} \beta(\sigma)\right) \left (2^{O(i)}\eps\E[f_i^2] + c_1\varepsilon\gamma^{1/2}\norm{D^k_i f}_2^2\right)\\
    &\leq 2^{O(i)}\eps \E[f_i^2] + c_2\varepsilon\gamma^{1/2}\norm{D^k_i f}_2^2
\end{align*}
where $c_1,c_2 \leq k^{O(i)}$ and the last step follows from noting that there are at most $\text{poly}(i)$ intersection patterns.
\end{proof}
\subsection{Proving \Cref{claim:sign-pattern-bound}}
The main technical work comes in proving \Cref{claim:sign-pattern-bound}, which relies heavily on Garland's method and our new localization strategy for decorrelating variables (\Cref{cor:localization}).

We split the proof into two parts. First, we will show that any pattern which has a unique element (i.e. some $x_i$ which appears only in one of the four sets) may be disregarded. 
\begin{proposition}
If $\sigma$ is a pattern in which any variable is unique (appears in only one $I_j$), then:
\[
\underset{x_1,\ldots,x_\ell}{\E}\left[g_i(I^{\sigma}_1)g_i(I^{\sigma}_2)g_i(I^{\sigma}_3)g_i(I^{\sigma}_4) \right] \leq 2^{O(i)} \gamma \eps^2\norm{D^k_i f}_2^2.
\]
\end{proposition}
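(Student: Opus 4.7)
The plan is to isolate the unique variable and exploit the fact that marginalizing it out is morally a single down-step on $g_i$, which is near-kernel by \Cref{lemma:approx-kernel}. Without loss of generality say the variable $x_m$ appears only in $I^\sigma_1$, and write $I^\sigma_1 = I_1' \cup \{x_m\}$, so that $I_1' \in X(i-1)$ lies entirely in $T := \{x_1, \ldots, x_\ell\} \setminus \{x_m\}$. Because the distribution under the nested expectation is symmetric in its variables (the marginal of any $j$ of the $x_i$'s is $\pi_j$), I would first reorder so that $x_m$ is sampled last and pull the inner expectation out:
\[
\underset{x_1,\ldots,x_\ell}{\E}\bigl[g_i(I^\sigma_1)g_i(I^\sigma_2)g_i(I^\sigma_3)g_i(I^\sigma_4)\bigr]
= \underset{T}{\E}\Bigl[g_i(I^\sigma_2)g_i(I^\sigma_3)g_i(I^\sigma_4)\cdot\underset{x_m \in X_T(1)}{\E}\bigl[g_i(I_1' \cup \{x_m\})\bigr]\Bigr].
\]

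Next I would apply \Cref{cor:localization} with $f = g_i \in C_i$, $\tau = I_1' \in X(i-1)$, and $v = T \setminus I_1' \in X_{I_1'}(\ell - i)$, rewriting the inner expectation as a main term plus a controlled localization error:
\[
\underset{x_m \in X_T(1)}{\E}\bigl[g_i(I_1' \cup \{x_m\})\bigr] = D g_i(I_1') + \Gamma\restrict{g_i}{I_1'}(T \setminus I_1'),
\]
with $\|\Gamma\| \le (\ell - i)\gamma \le 3i\gamma$. This splits the target into two pieces.

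Both pieces are bounded by the same template. In each, pull out $|g_i(I^\sigma_2)|,|g_i(I^\sigma_3)| \le \|g_i\|_\infty \le 2^i\eps$ using \Cref{lemma:g-infty}, then Cauchy--Schwarz on the remaining product. For the main term, symmetry together with Garland's method (\Cref{lemma:garland-local}) give $\E[g_i(I^\sigma_4)^2] = \|g_i\|_2^2 \le 2^{O(i)}\|D_i^k f\|_2^2$ (using \Cref{claim:p-norm}) and $\E[(D g_i(I_1'))^2] = \|D g_i\|_2^2 \le 2^{O(i)}\gamma^2\|D_i^k f\|_2^2$ (using \Cref{lemma:approx-kernel}), so multiplying yields $2^{O(i)}\gamma\eps^2\|D_i^k f\|_2^2$. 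For the error term, Garland's restriction identity (\Cref{lemma:garland-restrict}) yields $\underset{T}{\E}\bigl[(\Gamma\restrict{g_i}{I_1'}(T\setminus I_1'))^2\bigr] \le \|\Gamma\|^2\,\underset{I_1'}{\E}\|\restrict{g_i}{I_1'}\|_2^2 = (3i\gamma)^2\|g_i\|_2^2$, producing the same bound after Cauchy--Schwarz. Summing gives the stated inequality.

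The main obstacle is bookkeeping rather than any single hard step: verifying that \Cref{cor:localization}'s indexing genuinely produces the clean identity $D g_i(I_1')$ as the main term (so that \Cref{lemma:approx-kernel} can apply directly), and identifying the joint expectations of $g_i(I^\sigma_j)^2$ and $D g_i(I_1')^2$ under the nested sampling with the intended global $\ell_2$-norms by symmetry and Garland's method. Conceptually, the argument is the simplicial analog of the product-space observation that marginalizing out a unique variable annihilates a harmonic summand; the extra work is only to control the $O(\gamma)$ error incurred by switching from the conditional link of $T$ to the unconditional link of $I_1'$.
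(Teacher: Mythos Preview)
Your proposal is correct and follows essentially the same approach as the paper: isolate the unique variable, apply \Cref{cor:localization} to rewrite its marginalization as $D_ig_i$ plus a $\Gamma$-error, then control the main term via \Cref{lemma:approx-kernel} and the error term via the spectral bound on $\Gamma$ together with Garland's method, with two factors of $\|g_i\|_\infty\le 2^i\eps$ absorbed along the way. The only cosmetic difference is ordering: the paper applies Cauchy--Schwarz first to split off three $g_i$'s and then uses $\|g_i\|_\infty$ on two of them, whereas you pull out the two $\|g_i\|_\infty$ factors first and then Cauchy--Schwarz the remaining pair; both routes land on the same bound.
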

\begin{proof}
To simplify notations in the proof, let $I=\{x_1,\ldots,x_{\ell}\}$ and $I_j = I^{\sigma}_j$.
Assume without loss of generality that $I_4$ has a unique variable $x_\ell$, and set $J = I \setminus \{x_{\ell}\}$ and $J_4 = I_4 \setminus \{x_{\ell}\}$. We can re-write our expectation as:
\[
(*) = \underset{J}{\E}\left[g_i(I_1)g_i(I_2)g_i(I_3)\underset{x_\ell \in X_{J}(1)}{\E}\left[g_i|_{J_4}(x_\ell)\right] \right].
\]
By \Cref{cor:localization}, the inner expectation can be replaced with $D_ig_i(J_4)$ up to $\gamma$ error in the following sense. Consider any fixing of the variables $J$ (namely, fixing $x_1,\ldots,x_{\ell-1}$), we have:
\begin{align*}
\underset{x_\ell \in X_J(1)}{\E}\left[g_i|_{J_4}(x_\ell)\right] &= \underset{x_\ell \in X_{J_4}(1)}{\E}\left[g_i|_{J_4}(x_\ell)\right] + \Gamma \restrict{g_i}{J_4}(J \setminus J_4)\\
&= D_ig_i(J_4) + \Gamma \restrict{g_i}{J_4}(I \setminus I_4)
\end{align*}
where $\norm{\Gamma} \le O(i \gamma)$ by \Cref{cor:localization}. Plugging this back into our original expectation gives:
\begin{align*}
(*) = \E_{J}\left[g_i(I_1)g_i(I_2)g_i(I_3)D_ig_i(J_4) \right] + \E_{J}\left[g_i(I_1)g_i(I_2)g_i(I_3) \Gamma \restrict{g_i}{J_4} (I \setminus I_4) \right] 
\end{align*}
The idea is now to split each term into two parts: the first three terms $g_i(I_1)g_i(I_2)g_i(I_3)$ and the last term. Let's first split these portions by Cauchy-Schwarz to get:
\begin{align*}
(*) \le \; & \E_{J}\left[g_i(I_1)^2g_i(I_2)^2g_i(I_3)^2\right]^{1/2}\\
\cdot & \left( \underset{J}{\E}\left[ D_i g_i(J_4)^2 \right]^{1/2} + \underset{J_4}{\E}\left[\underset{I \setminus I_4 \in X_{J_4}}{\E}\left[ \Gamma \restrict{g_i}{J_4}(I \setminus I_4)^2 \right]\right]^{1/2}\right)
\end{align*}
where we have re-arranged variable for convenience in the last term. We now bound each term separately. 

The first term can be bounded by the observation that $\norm{g_i}_{\infty} \leq 2^{O(i)}\eps\norm{f}_\infty$ (\Cref{lemma:g-infty}), and hence:
\[
\E_{J}\left[g_i(I_1)^2g_i(I_2)^2g_i(I_3)^2\right]^{1/2} \leq 2^{O(i)}\eps^2\norm{g_i}_2
\]
where we simply bounded two of the three $g_i^2$ terms by their infinity norm and applied Garland's lemma for localizations (\Cref{lemma:garland-local}) to remove the extra variables. 

We next analyze the second term. The first summand is exactly $\norm{D_ig_i}$, which by \Cref{lemma:approx-kernel} is at most $O(\gamma\norm{D^k_i f})$. The second summand is a bit trickier, but can be analyzed through a combination of standard spectral bounds and Garland's lemma for restrictions (\Cref{lemma:garland-restrict}). In particular, re-writing the inner expectation as an inner-product we get:
\begin{align*}
\underset{J_4}{\E}\left[\underset{I \setminus I_4 \in X_{J_4}}{\E}\left[ \Gamma \restrict{g_i}{J_4}(I \setminus I_4)^2 \right]\right]^{1/2} &= \underset{J_4}{\E}\left[\left\langle \Gamma \restrict{g_i}{J_4}, \Gamma \restrict{g_i}{J_4}\right\rangle\right]^{1/2}\\
&\leq c\gamma \underset{J_4}{\E}\left[\left\langle \restrict{g_i}{J_4}, \restrict{g_i}{J_4}\right\rangle\right]^{1/2}\\
&= c\gamma\norm{g_i}_2,
\end{align*}
where $c \leq O(i)$ and we have applied the fact that $\|\Gamma\| \le O(i \gamma)$ and Garland's lemma for restrictions (\Cref{lemma:garland-restrict}). Recalling from \Cref{claim:p-norm} that $\norm{g_i}_2 \leq 2^i\norm{D^k_i f}_2$ completes the result.
\end{proof}
We may now restrict our analysis to patterns in which every variable appears at least twice. Note that this implies $\ell \leq 2i$, which is important because we expect our expectation to scale at best with $k^{-2i}$, so any terms with $\ell > 2i$ would cause difficulty. As in \cite{khot2018small}, we break this analysis into two steps. Let $I_1,\ldots,I_4$ satisfy intersection pattern $\sigma$ as above (we drop the $\sigma$ superscript for convenience), and let $H_i$ for $i \in \{2,3,4\}$ denote the set of variables that appear $i$ times.

We'll start by handling $H_2$ through a combination of Cauchy-Schwarz, Garland's method, and our localization technique for higher moments. Unlike the case of the complete complex studied in \cite{khot2018small}, these latter components are necessary due to the fact that local-spectral expanders are generally far from product spaces (a crucial property of the complete complex exploited in \cite{khot2018small}). The proof is fairly technical, so we'll start by laying out some convenient notation. For any $0 \leq m \leq \ell$, let $T^m=\{x_1,\ldots,x_m\}$. Let $j=|H_3 \cup H_4|$ where $0 \le j \le \ell$.
Noting that re-ordering the variables $x_1,\ldots,x_\ell$ has no effect on the distribution,
we may assume without loss of generality that $H_3 \cup H_4=\{x_1,\ldots,x_j\}$ (where if $j=0$ then $H_3 \cup H_4$ is empty).
Finally, we introduce two useful notations: for $m \le \ell$ let $I_r^m = I_r \cap \{x_1,\ldots,x_m\}$ and $s_r^m = i - |I_r^m|$.

\begin{proposition}\label{claim:H_2}
\begin{align*}
\underset{x_1,\ldots,x_\ell}{\E}\left[g_i(I_1)g_i(I_2)g_i(I_3)g_i(I_4) \right] &\leq
\underset{x_1,\ldots,x_j}{\E}\left[\sqrt{\prod\limits_{r=1}^4\underset{{\tau_r \sim X_{T^j}(s_r^j)}}{\E}[g_i^2|_{I_r^j}(\tau_r)]} \right] + 2^{O(i)}\gamma^{1/2}\eps^2\norm{g_i}_2^2
\end{align*}
\end{proposition}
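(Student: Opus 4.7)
My plan is to prove Proposition~\ref{claim:H_2} by iteratively peeling the $H_2$-variables $x_{j+1}, \ldots, x_\ell$ from the innermost expectation outward. Define the intermediate quantity
\[
Q_m \;:=\; \underset{T^m}{\E}\sqrt{\prod_{r=1}^{4}\underset{\tau_r \sim X_{T^m}(s_r^m)}{\E}[g_i^2|_{I_r^m}(\tau_r)]}.
\]
At $m = \ell$, the equality $s_r^\ell = 0$ forces $\tau_r = \emptyset$ and $g_i^2|_{I_r^\ell}(\emptyset) = g_i^2(I_r)$, giving $Q_\ell = \E_{T^\ell}[\prod_r |g_i(I_r)|]$, which upper bounds the LHS of the claim. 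At $m = j$, $Q_j$ matches the first term on the RHS. I will prove by induction that $Q_m \le Q_{m-1} + \mathrm{err}_m$ for $m = \ell, \ldots, j+1$; telescoping then gives $Q_\ell \le Q_j + \sum_m \mathrm{err}_m$, and I plan to bound the total error by $2^{O(i)}\gamma^{1/2}\eps^2\|g_i\|_2^2$.

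For the inductive step at $m$, fix the pair $\{a,b\}$ with $x_m \in I_a \cap I_b$ (which exists because $x_m \in H_2$). By the concavity of $\sqrt{\cdot}$, Jensen's inequality lets me pull $\E_{x_m}$ inside the square root, reducing the task to bounding $\E_{x_m}[\prod_r E_r(x_m)]$ where $E_r(x_m) := \E_{\tau_r \sim X_{T^m}(s_r^m)}[g_i^2|_{I_r^m}(\tau_r)]$. For $r \notin \{a,b\}$, the integrand does not involve $x_m$ and \Cref{cor:localization} replaces the $X_{T^m}$-expectation by the $X_{T^{m-1}}$-expectation up to an operator $\Gamma_r$ of spectral norm $O(i\gamma)$. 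For $r \in \{a,b\}$, a standard averaging identity $\underset{x_m \sim X_{T^{m-1}}(1)}{\E}\underset{\tau_r \sim X_{T^{m-1} \cup x_m}(s_r^m)}{\E} = \underset{\tau'_r \sim X_{T^{m-1}}(s_r^{m-1})}{\E}$ (an up/down adjointness computation analogous to the ones in \Cref{sec:prelims}) would absorb $\E_{x_m}$ into the $\tau_r$-integral and produce exactly the desired $(m-1)$-form, if only $\E_{x_m}$ distributed across the product.

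The main obstacle is precisely that $\E_{x_m}$ does not distribute over the product: both $E_a$ and $E_b$ depend on $x_m$ simultaneously, so the chain-rule identity cannot be applied to them independently. My plan is to handle this by writing $E_r(x_m) = E'_r + \delta_r(x_m)$ via \Cref{cor:localization} (where $E'_r$ is the $x_m$-free counterpart and $\delta_r$ has spectral norm $O(i\gamma)$), expanding $\prod_r E_r(x_m)$, and bounding the cross-terms involving $\delta_r$ via the pointwise bound $\|g_i\|_\infty \le 2^i\eps$ from \Cref{lemma:g-infty}, the link-$\ell_2$ bound of \Cref{lemma:g-link-2nd-moment}, Garland's localization (\Cref{lemma:garland-local}), and a final Cauchy--Schwarz. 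The $\gamma^{1/2}$ (rather than $\gamma$) in the final bound arises naturally when passing the $O(\gamma)$-errors through the outer Jensen square root via $\sqrt{A + B\gamma} \le \sqrt{A} + \sqrt{B\gamma}$, and summing $\mathrm{err}_m$ over the $\ell - j \le 2i$ peel steps folds the combinatorial constants into the $2^{O(i)}$ prefactor.
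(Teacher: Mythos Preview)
Your overall induction scheme and definition of $Q_m$ match the paper exactly, and your treatment of the factors $E_r$ with $r \notin \{a,b\}$ via \Cref{cor:localization} is correct. The gap is in how you handle the two factors $E_a,E_b$ that \emph{do} contain $x_m$.

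First, \Cref{cor:localization} cannot be applied to $E_r$ for $r\in\{a,b\}$: there the restricted function $g_i^2|_{I_r^m}$ itself depends on $x_m$ (since $x_m\in I_r^m$), not merely the link $X_{T^m}$. So your proposed decomposition $E_r = E'_r + \delta_r$ with $\|\delta_r\|=O(i\gamma)$ is simply unavailable for $r\in\{a,b\}$. Second, and more fundamentally, applying Jensen first is what creates the obstruction you identify. After Jensen and after decorrelating $E_c,E_d$ ($\{c,d\}=\{1,2,3,4\}\setminus\{a,b\}$), the main term becomes
\[
\underset{T^{m-1}}{\E}\sqrt{E'_c\,E'_d\cdot \underset{x_m}{\E}\big[E_a(x_m)\,E_b(x_m)\big]},
\]
and to reach $Q_{m-1}$ you would need $\E_{x_m}[E_a E_b]\le E'_a E'_b=\E_{x_m}[E_a]\,\E_{x_m}[E_b]$, i.e.\ a pointwise negative-correlation statement that has no reason to hold. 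No combination of Cauchy--Schwarz, \Cref{lemma:g-infty}, or \Cref{lemma:g-link-2nd-moment} recovers this without losing a factor that destroys the telescoping.

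The paper's fix is precisely to \emph{not} apply Jensen. Keeping the square root outside, one decorrelates only the $r\notin\{a,b\}$ factors (using $\sqrt{u+v}\le\sqrt u+\sqrt v$ on the expanded product), pulls $\sqrt{E'_c E'_d}$ out of $\E_{x_m}$, and is left with
\[
\underset{x_m}{\E}\Big[\sqrt{E_a(x_m)}\sqrt{E_b(x_m)}\Big]\;\le\;\sqrt{\underset{x_m}{\E}[E_a]}\,\sqrt{\underset{x_m}{\E}[E_b]}\;=\;\sqrt{E'_a E'_b}
\]
by Cauchy--Schwarz, which is exactly your averaging identity applied to each factor separately. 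The square root is what makes this decoupling possible; Jensen removes it too early.
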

\begin{proof}
The proof follows from an inductive argument where we pull one variable in $x \in H_2$ inside the sum in each step by de-correlating the two copies of $g_i$ which do not take $x$ as an input, and then applying Cauchy-Schwarz. In particular, we will show by induction that for all $\ell \geq m \geq j$:
\begin{align*}
\underset{x_1,\ldots,x_\ell}{\E}\left[g_i(I_1)g_i(I_2)g_i(I_3)g_i(I_4) \right] \leq \underset{x_1,\ldots,x_m}{\E}\left[\sqrt{\prod\limits_{r=1}^4\underset{{\tau_r \sim X_{T^m}(s^m_r)}}{\E}[g_i^2|_{I_r^m}(\tau_r)]} \right] + 2^{O(i)}\gamma^{1/2}\eps^2 \norm{g_i}_2^2
\end{align*}
The base case ($m=\ell$) is trivial. Since we also done if $m=j$, we may assume that $x_m \in H_2$ and therefore lies in exactly two of $I_1,I_2,I_3,I_4$ by definition. Assume without loss of generality that $x_m \in I_3,I_4$. We'd like to pull $x_m$ inside the expectation. The issue is that despite the fact that $x_m$ does not participate in $I_1$ or $I_2$, these terms actually depend on $x_m$ regardless since $\tau_1$ and $\tau_2$ are drawn from a link that includes $x_m$. To fix this, we can use \Cref{cor:localization} to de-correlate these terms from $x_m$:
\begin{align*} 
\sqrt{\prod\limits_{r=1}^2\underset{{\tau_r \sim X_{T^m}(s^m_r)}}{\E}[g_i^2|_{I_r^m}(\tau_r)]} &= \sqrt{\prod\limits_{r=1}^2 \left( \underset{{\tau_r \sim X_{T^{m-1}}(s^{m-1}_r)}}{\E}\left[g_i^2|_{I_r^{m-1}}(\tau_r)\right] + \Gamma g_i^2|_{I_r^{m-1}}(x_m)\right)}\\
&\leq \sqrt{\underset{{\tau_1 \sim X_{T^{m-1}}(s^{m-1}_1)}}{\E}\left[g_i^2|_{I_1^{m-1}}(\tau_1)\right]}\sqrt{\underset{{\tau_2 \sim X_{T^{m-1}}(s^{m-1}_2)}}{\E}\left[g_i^2|_{I_2^{m-1}}(\tau_1)\right]}\\
&+ \sqrt{\Gamma g_i^2|_{I_2^{m-1}}(x_m)}\sqrt{\underset{{\tau_1 \sim X_{T^{m-1}}(s^{m-1}_1)}}{\E}\left[g_i^2|_{I_1^{m-1}}(\tau_1)\right]}\\
&+ \sqrt{\Gamma g_i^2|_{I_1^{m-1}}(x_m)}\sqrt{\underset{{\tau_2 \sim X_{T^{m-1}}(s^{m-1}_2)}}{\E}\left[g_i^2|_{I_2^{m-1}}(\tau_2)\right]}\\
&+ \sqrt{\Gamma g_i^2|_{I_1^{m-1}}(x_m)}\sqrt{\Gamma g_i^2|_{I_2^{m-1}}(x_m)}\\
\end{align*}
where $\norm{\Gamma} \leq O(i\gamma)$ and we have used the fact that by assumption $I_r^m = I_r^{m-1}$ for $r=1,2$. For the moment, denote the last 3 terms by $err(g)$. Then by the inductive hypothesis we have:
\begin{align*}
&\underset{x_1,\ldots,x_\ell}{\E}\left[g_i(I_1)g_i(I_2)g_i(I_3)g_i(I_4) \right]\\
\leq & \underset{x_1,\ldots,x_m}{\E}\left[\sqrt{\prod\limits_{r=1}^4\underset{{\tau_r \sim X_{T^m}(s_r)}}{\E}[g_i^2|_{I_r^m}(\tau_r)]} \right] + 2^{O(i)}\gamma^{1/2}\eps^2 \norm{g_i}_2^2\\
\leq &\underset{x_1,\ldots,x_{m-1}}{\E}\left[\sqrt{\prod\limits_{r=1}^2\underset{{\tau_r \sim X_{T^{m-1}}(s_r)}}{\E}[g_i^2|_{I_r^{m-1}}(\tau_r)]}\underset{x_m}{\E}\left[\sqrt{\underset{{\tau_3 \sim X_{T^m}(s_3)}}{\E}[g_i^2|_{I_3^m}(\tau_3)]}\sqrt{\underset{{\tau_4 \sim X_{T^m}(s_4)}}{\E}[g_i^2|_{I_4^m}(\tau_4)]} \right] \right]\\
+ &\underset{x_1,\ldots,x_{m}}{\E}\left[err(g)\sqrt{\underset{{\tau_3 \sim X_{T^m}(s^m_3)}}{\E}[g_i^2|_{I_3^m}(\tau_3)]}\sqrt{\underset{{\tau_4 \sim X_{T^m}(s^m_4)}}{\E}[g_i^2|_{I_4^m}(\tau_4)]} \right] + 2^{O(i)}\gamma^{1/2}\eps^2 \norm{g_i}_2^2
\end{align*}
By Cauchy-Schwarz, the first term can be bounded by:
\[
\underset{x_1,\ldots,x_{m-1}}{\E}\left[\sqrt{\prod\limits_{r=1}^4\underset{{\tau_r \sim X_{T^{m-1}}(s^{m-1}_r)}}{\E}[g_i^2|_{I_r^{m-1}}(\tau_r)]} \right],
\]
so it is enough to show that the latter error term is small. We'll analyze each term in $err(g)$ independently using Cauchy-Schwarz, Garland's method, and our bound on $\norm{g}_{\infty}$. Starting with the first term, an application of Cauchy-Schwarz gives:
\begin{align*}
    &\underset{x_1,\ldots,x_{m}}{\E}\left[\sqrt{\Gamma g_i^2|_{I_2^{m-1}}(x_m)}\sqrt{\underset{{\tau_1 \sim X_{T^{m-1}}(s^{m-1}_1)}}{\E}\left[g_i^2|_{I_1^{m-1}}(\tau_1)\right]}\sqrt{\underset{{\tau_3 \sim X_{T^m}(s^m_3)}}{\E}[g_i^2|_{I_3^m}(\tau_3)]}\sqrt{\underset{{\tau_4 \sim X_{T^m}(s^m_4)}}{\E}[g_i^2|_{I_4^m}(\tau_4)]} \right]\\ 
    \leq&  \underset{x_1,\ldots,x_{m}}{\E}\left[\Gamma g_i^2|_{I_2^{m-1}}(x_m)\underset{{\tau_1 \sim X_{T^{m-1}}(s^{m-1}_1)}}{\E}\left[g_i^2|_{I_1^{m-1}}(\tau_1)\right]\right]^{1/2}\underset{x_1,\ldots,x_{m}}{\E}\left[\underset{{\tau_3 \sim X_{T^m}(s^m_3)}}{\E}[g_i^2|_{I_3^m}(\tau_3)]\underset{{\tau_4 \sim X_{T^m}(s^m_4)}}{\E}[g_i^2|_{I_4^m}(\tau_4)] \right]^{1/2}.
    \end{align*}
The righthand expectation is easy to analyze using the fact that $\norm{g_i}_{\infty} \leq 2^{O(i)}\eps$:
\begin{align*}
    \underset{x_1,\ldots,x_{m}}{\E}\left[\underset{{\tau_3 \sim X_{T^m}(s^m_3)}}{\E}[g_i^2|_{I_3^m}(\tau_3)]\underset{{\tau_4 \sim X_{T^m}(s^m_4)}}{\E}[g_i^2|_{I_4^m}(\tau_4)] \right]^{1/2}
    &\leq 2^{O(i)}\eps \underset{x_1,\ldots,x_{m}}{\E}\left[\underset{{\tau_4 \sim X_{T^m}(s^m_4)}}{\E}[g_i^2|_{I_4^m}(\tau_4)] \right]^{1/2}\\
    &= 2^{O(i)}\eps \underset{\tau \sim X(|I_4^m|) }{\E}\left[\langle g_i|_{\tau},g_i|_{\tau} \rangle \right]^{1/2}\\
    &=2^{O(i)}\eps\norm{g_i}_2
\end{align*}
where the last two equalities follow from Garland's method. Turning our attention to the lefthand expectation, we can apply Cauchy-Schwarz to get:
\begin{align*}
    &\underset{x_1,\ldots,x_{m}}{\E}\left[\Gamma g_i^2|_{I_2^{m-1}}(x_m)\underset{{\tau_1 \sim X_{T^{m-1}}(s^{m-1}_1)}}{\E}\left[g_i^2|_{I_1^{m-1}}(\tau_1)\right]\right]^{1/2}\\
    \leq & \ \underset{x_1,\ldots,x_{m-1}}{\E}\left[\langle \Gamma g_i^2|_{I_2^{m-1}}, \Gamma g_i^2|_{I_2^{m-1}} \rangle\right]^{1/4}\underset{x_1,\ldots,x_{m-1}}{\E}\left[\underset{{\tau_1 \sim X_{T^{m-1}}(s^{m-1}_1)}}{\E}\left[g_i^2|_{I_1^{m-1}}(\tau_1)\right]^2\right]^{1/4}\\
    \leq & \ 2^{O(i)}\gamma^{1/2}\underset{x_1,\ldots,x_{m-1}}{\E}\left[\langle g_i^2|_{I_2^{m-1}}, g_i^2|_{I_2^{m-1}} \rangle\right]^{1/4}\underset{x_1,\ldots,x_{m-1}}{\E}\left[\langle g_i^2|_{I_1^{m-1}}, g_i^2|_{I_1^{m-1}} \rangle \right]^{1/4}
    \end{align*}
    where in the last step we have applied the fact that $\norm{\Gamma} \leq O(i\gamma)$. Analysis of the remaining expectations follows exactly as before. In particular, re-arranging variables by symmetry and applying Garland's method, we can continue the above inequality as follows:
    \begin{align*}
    = & \ \gamma^{1/2} 2^{O(i)}\underset{\tau \sim X(|I_2^{m-1}|)}{\E}\left[\langle g_i^2|_{\tau}, g_i^2|_{\tau} \rangle\right]^{1/4}\underset{\tau \sim X(|I_1^{m-1}|)}{\E}\left[\langle g_i^2|_{\tau}, g_i^2|_{\tau} \rangle \right]^{1/4}\\
    = & \ \gamma^{1/2} 2^{O(i)} \langle g_i^2, g_i^2 \rangle^{1/2}\\   
    \leq & \ \gamma^{1/2} 2^{O(i)} \eps \norm{g_i}_2
\end{align*}
where in the final step we have again applied our bound on $\norm{g_i}_{\infty}$. Putting the analysis of these two terms together, we get the desired bound on the first summand of $err(g)$:
\begin{align*}
 &\underset{x_1,\ldots,x_{m}}{\E}\left[\sqrt{\Gamma g_i^2|_{I_2^{m-1}}(x_m)}\sqrt{\underset{{\tau_1 \sim X_{T^{m-1}}(s^{m-1}_1)}}{\E}\left[g_i^2|_{I_1^{m-1}}(\tau_1)\right]}\sqrt{\underset{{\tau_3 \sim X_{T^m}(s^m_3)}}{\E}[g_i^2|_{I_3^m}(\tau_3)]}\sqrt{\underset{{\tau_4 \sim X_{T^m}(s^m_4)}}{\E}[g_i^2|_{I_4^m}(\tau_4)]} \right]\\ 
    \leq& \gamma^{1/2}2^{O(i)}\eps^2 \norm{g_i}_2^2
\end{align*}
Thankfully, the analysis of second summand in $err(g)$ is exactly the same, and the third term differs only in that the lefthand expectation in the previous analysis becomes:
\[
\underset{x_1,\ldots,x_{m}}{\E}\left[\Gamma g_i^2|_{I_1^{m-1}}(x_m)\Gamma g_i^2|_{I_2^{m-1}}(x_m)\right]^{1/2} \leq \gamma 2^{O(i)}\eps\norm{g_i}_2
\]
by the same arguments. Combining these together, we get that our error term is bounded by $\gamma^{1/2} 2^{O(i)}\eps^2\norm{g_i}_2^2$, which completes the proof.
\end{proof}
It is left to analyze $H_3$ and $H_4$. Recalling that we've assumed $\{x_1,\ldots,x_j\}=H_3 \cup H_4$, \Cref{claim:H_2} can be restated as:
\[
\underset{x_1,\ldots,x_\ell}{\E}\left[g_i(I_1)g_i(I_2)g_i(I_3)g_i(I_4) \right] \leq
\underset{H_3 \cup H_4}{\E}\left[\sqrt{\prod\limits_{r=1}^4\underset{{\tau_r \sim X_{H_3 \cup H_4}(s_r^j)}}{\E}[g_i^2|_{I_r^j}(\tau_r)]} \right] + 2^{O(i)}\gamma^{1/2}\eps^2\norm{g_i}_2^2.
\]
The key is now to apply \Cref{lemma:g-link-2nd-moment}, which says that the maximum of the inner restricted expectations are small, where the factor is better the fewer variables we restrict. In order to minimize the number of restrictions, we use Cauchy-Schwarz to separate out $I_1$ and $I_2$ from $I_3$ and $I_4$:
\begin{align*}
    \underset{H_3 \cup H_4}{\E}\left[\sqrt{\prod\limits_{r=1}^4\underset{{\tau_r \sim X_{H_3 \cup H_4}(s_r^j)}}{\E}[g_i^2|_{I_r^j}(\tau_r)]} \right] &\leq \underset{H_3 \cup H_4}{\E}\left[\prod\limits_{r=1}^2\underset{{\tau_r \sim X_{H_3 \cup H_4}(s_r^j)}}{\E}[g_i^2|_{I_r^j}(\tau_r)] \right]^{1/2}\\
    &\cdot \underset{H_3 \cup H_4}{\E}\left[\prod\limits_{r=3}^4\underset{{\tau_r \sim X_{H_3 \cup H_4}(s_r^j)}}{\E}[g_i^2|_{I_r^j}(\tau_r)] \right]^{1/2}.\\
\end{align*}
Analysis of these two terms is the same, so we focus on the former. The idea is to bound one of the two inner expectations (say $I_1$) by its maximum, and note that the other term then simply returns $\norm{g_i}$. Unfortunately, there is a slight issue with this strategy naively: $H_3$ may contain variables that are not in $I_1$, so we cannot directly apply \Cref{lemma:g-link-2nd-moment}. Thankfully, localization again comes to our rescue: we can apply  \Cref{lemma:g-link-2nd-moment} as long as we first de-correlate $I_1$ from the extraneous variables in $H_3$ as in \Cref{claim:H_2}. More formally, letting 
$B_{12}=(H_3 \cap I_1 \cap I_2) \cup H_4$
for simplicity of notation, by exactly the same inductive argument used in \Cref{claim:H_2} we have:
\begin{align*}
\underset{H_3 \cup H_4}{\E}\left[\prod\limits_{r=1}^2\underset{{\tau_r \sim X_{H_3 \cup H_4}(s_r^j)}}{\E}[g_i^2|_{I_r^j}(\tau_r)] \right]^{1/2} &\leq \underset{B_{12} \sim X}{\E}\left[\underset{\tau_1 \sim X_{B_{12}}}{\E}[g_i^2|_{I_1^{B_{12}}}(\tau_1)]\underset{{\tau_2 \sim X_{B_{12}}}}{\E}[g_i^2|_{I_2^{B_{12}}}(\tau_2)] \right]^{1/2} + 2^{O(i)}\varepsilon^2\gamma^{1/2}\norm{g_i}_2
\end{align*}
where for the moment we have omitted the sizes of $B_{12}$, $\tau_1$, and $\tau_2$ for simplicity (these will be computed soon). Pulling out the maximal $I_1$ term and applying \Cref{lemma:g-link-2nd-moment} with $j=|B_{12}|$, we then get:
\begin{align*}
\underset{H_3 \cup H_4}{\E}\left[\prod\limits_{r=1}^2\underset{{\tau_r \sim X_{H_3 \cup H_4}(s_r^j)}}{\E}[g_i^2|_{I_r^j}(\tau_r)] \right]^{1/2} &\leq \max_{B_{12}}\left(\underset{\tau_1 \sim X_{B_{12}}}{\E}[g_i^2|_{I_1^{B_{12}}}(\tau_1)]\right)^{1/2}\norm{g_i}_2 + 2^{O(i)}\varepsilon^2\gamma^{1/2}\norm{g_i}_2\\
&\leq \frac{2^{O(i)}\eps^{1/2}}{{k-|B_{12}| \choose i-|B_{12}|}^{1/2}}\norm{g_i}_2 + 2^{O(i)}\varepsilon^2\gamma^{1/2}\norm{g_i}_2
\end{align*}
The same argument holds for the latter product over $I_3$ and $I_4$. Letting 
$B_{12}=(H_3 \cap I_3 \cap I_4) \cup H_4$, and putting everything together, we finally get the bound:
\begin{align*}
\underset{x_1,\ldots,x_\ell}{\E}\left[g_i(I_1)g_i(I_2)g_i(I_3)g_i(I_4) \right] &\leq \frac{2^{O(i)}\eps}{{k-|B_{12}| \choose i-|B_{12}|}^{1/2}{k-|B_{34}| \choose i-|B_{34}|}^{1/2}}\norm{g_i}_2^2 + 2^{O(i)}\varepsilon^2\gamma^{1/2}\norm{g_i}_2^2\\
&\leq \frac{2^{O(i)}\eps}{{k-|B_{12}| \choose i-|B_{12}|}^{1/2}{k-|B_{34}| \choose i-|B_{34}|}^{1/2}}\frac{1}{{k \choose i}}\norm{f_i}_2^2 + 2^{O(i)}\varepsilon^2\gamma^{1/2}\norm{D^k_i f}_2^2\\
&\leq 2^{O(i)}\eps \left(\frac{k}{i}\right)^{\frac{|B_{12}|+|B_{34}|}{2} -2i}\norm{f_i}_2^2 + 2^{O(i)}\varepsilon^2\gamma^{1/2}\norm{D^k_i f}_2^2
\end{align*}
where we have applied the basic binomial bound ${n \choose p} \geq \left(\frac{n}{p}\right)^p$. To complete the result, it suffices show that $\frac{|B_{12}|+|B_{34}|}{2} = 2i-\ell$. This follows similarly to the analogous argument in \cite{khot2018small}, but we'll give a simplification of their proof for completeness. Recall that $B_{12}$ consists of variables in $H_3$ and $H_4$ that appear in both $I_1$ and $I_2$, and $B_{34}$ similarly consists of variables in $H_3$ and $H_4$ that appear in both $I_3$ and $I_4$. Since every variable in $H_3$ occurs in exactly one of $(I_1 \cap I_2)$ and $(I_3 \cap I_4)$ by definition, we get that 
\[
\frac{|B_{12}|+|B_{34}|}{2} = |H_4| + \frac{|H_3|}{2}.
\]
To compute the righthand side, note that by definition we have the following two relations:
\begin{enumerate}
    \item Since each term has $4i$ total variables (with repetition): 
    \[
        4|H_4| + 3|H_3| + 2|H_2| = 4i
    \]
    \item Since there are $\ell$ unique variables:
    \[
        |H_4| + |H_3| + |H_2| = \ell
    \]
\end{enumerate}
Combining these equations gives the desired equality:
\[
|H_4| + \frac{|H_3|}{2} = \frac{(4|H_4| + 3|H_3| + 2|H_2|) - 2(|H_4| + |H_3| + |H_2|)}{2} = 2i - \ell.
\]
Putting everything together, we finally get
\[
\underset{x_1,\ldots,x_\ell}{\E}\left[g_i(I_1)g_i(I_2)g_i(I_3)g_i(I_4) \right] \leq 2^{O(i)}\eps\left(\frac{i}{k}\right)^{\ell}\norm{f_i}^2 + c_{k,i}\varepsilon\gamma^{1/2}\norm{D^k_i f}^2,
\]
as desired.
\section{Characterizing Expansion in HD-walks}\label{sec:expansion}
One traditional application of hypercontractivity on the discrete hypercube lies in showing that the noisy hypercube graph (given by randomizing each bit of a binary string $x$ with some probability $1-\rho$) is a small-set expander. This result is also often thought of as stating ``sparse functions on the hypercube are noise-sensitive,'' an interpretation we'll discuss in the next section. Unlike the noisy hypercube, it is well known that HD-walks are far from being small set expanders \cite{bafna2020high}. Before we quantify this further, let's recall the definition of (edge) expansion in the general weighted setting.
\begin{definition}[Weighted Edge Expansion]
Let $(X,\Pi)$ be a weighted simplicial complex, $M$ a $k$-dimensional HD-Walk over $(X,\Pi)$, and $S \subset X(k)$ a subset. The weighted edge expansion of $S$ is the average probability of leaving $S$ after one step of $M$:
\[
\Phi(S) = \underset{v \sim \pi_k|_S}{\mathbb{E}}\left[ M(v,X(k) \setminus S)\right],
\]
where $\pi_k|_S$ is the (normalized) restriction of $\pi_k$ to $S$,
\[
M(v, X(k) \setminus S) = \sum\limits_{y \in X(k) \setminus S} M(v,y),
\]
and $M(v,y)$ is the transition probability from $v$ to $y$.
\end{definition}
A small-set expander is simply a graph where all small sets expand. To understand why HD-walks fail this condition, let's consider the Johnson graph. The Johnson graph $J(n,k,\ell)$ is the graph on ${[n] \choose k}$ whose edges are given by sets with intersection size $\ell$. Well-studied object in their own right, the Johnson graphs are a fundamental example higher order random walks on the complete complex \cite{alev2019approximating}. In our context, we usually think of $n$ as being much larger than $k$, and $\ell$ as being (at least) $ck$ for some constant $0<c<1$. In this case, one can show by direct computation that the expansion of any $i$-link $X_\tau$ is bounded away from $1$:
\[
\Phi(X_\tau) \approx 1 - c^{-i},
\]
despite the fact that its density is vanishingly small: $\mathbb{E}[1_{X_\tau}] \approx (k/n)^{i}$.

Recently, BHKL proved a general variant of this result for all HD-walks (see \cite[Theorem 9.2]{bafna2020high}). They show that spectrum of any $k$-dimensional walk $M$ on a sufficiently strong local-spectral expander is divided up into $k+1$ strips of width $O_{k,M}(\gamma)$\footnote{BHKL actually only prove the width is $O_{k,M}(\sqrt{\gamma})$, the improvement to $O_{k,M}(\gamma)$ was given soon after by Zhang \cite{Zhang2020}.} centered around some set of approximate eigenvalues $\{\lambda_i(M)\}_{i=0}^k$, and that the expansion of any link at level $i$ is almost exactly $1-\lambda_i(M)$. They also prove a weak converse to this result: \textit{any} non-expanding set must be concentrated in a link. It is convenient to state the contrapositive. For any $\delta>0$, let $R_\delta(M)=r$ denote the number of approximate eigenvalues of $M$ that are greater than $\delta$ (a quantity BHKL call the \textbf{ST-Rank} of $M$). BHKL \cite[Theorem 9.5]{bafna2020high} prove that the expansion of any set $S \subset X(k)$ is at least:
\begin{equation}\label{eq:BHKL}
\Phi(S) \gtrsim 1 - \delta - c_1{k \choose r}\varepsilon - c_2\gamma
\end{equation}
where $S$ is $(\varepsilon,r)$-pseudorandom.\footnote{Note that we have simplified BHKL's result here somewhat for simplicity of presentation, but it is an accurate representation of their result in most cases of interest.} This is great when $\varepsilon \ll {k \choose r}$, but for many applications of interest (e.g.\ in hardness of approximation), we think of $\varepsilon$ as fixed and of $k$ as going to infinity. In this regime, the above characterization is useless, as the bound reduces to the trivial fact $\Phi(S) \geq 0$. Using hypercontractivity, we can completely resolve this issue by offering a variant of \Cref{eq:BHKL} with no dependence on $k$. Before we give the statement, however, we note that both BHKL and our result require the approximate eigenvalues of the HD-walk $\{\lambda_i(M)\}_{i=0}^k$ to decrease monotonically. BHKL proved that this property holds for a broad class of walks they call \textbf{complete walks}, which includes all HD-walks of interest studied in the literature.
\begin{definition}[Complete HD-Walk (\cite{bafna2020high} Definition 7.10)]\label{def:complete-walk}
Let $(X,\Pi)$ be a weighted, pure simplicial complex and $M=\sum\limits_{Y \in \mathcal Y} \alpha_Y Y$ an HD-walk on $(X,\Pi)$. $M$ is called \textit{complete} if for all $n \in \mathbb{N}$ there exist $n_0 > n$ and $d$ such that $\sum\limits_{Y \in \mathcal Y} \alpha_Y Y$ is also an HD-walk when taken to be over the $d$-dimensional complete complex on $n_0$ vertices.
\end{definition}
All walks we have seen so far (canonical walks, swap walks, pure walks, affine combinations thereof, etc.) are complete, so restricting to this class does not lose much generality. With this in mind, we can finally state our dimension independent bound on the expansion of pseudorandom sets.

\begin{theorem}[Pseudorandom Sets Expand]\label{thm:expansion}
Let $(X,\Pi)$ be a two-sided $\gamma$-local-spectral expander, $M$ a complete $k$-dimensional HD-walk, and $S \subseteq X(k)$ of density $\alpha$. Then for any $\delta>0$ and $r=R_\delta(M)-1$, the expansion of $S$ is at least:
\[
\Phi(S) \geq 1 - \delta - (1-\delta)2^{O(r)} \eps^{1/3} - c\gamma
\]
where $c \leq 2^{O(k)}w(M)h(M)^2$ and $S$ is $(\varepsilon,r)$-pseudorandom.
\end{theorem}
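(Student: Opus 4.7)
The plan is to bound expansion through the standard spectral identity $1 - \Phi(S) = \frac{1}{\alpha}\langle \id{S}, M\id{S} \rangle$, expand $\id{S}$ in the Bottom-Up basis as $\id{S} = \sum_{i=0}^{k}(\id{S})_i$, and control the resulting quadratic form using the level-$i$ inequality (\Cref{cor:level-i}). Noting that $(\id{S})_0$ is the constant function $\alpha$, I would combine \Cref{thm:Bottom-vs-Top} with BHKL's spectral analysis of the HD-Level-Set decomposition (invoked via \Cref{thm:intro-BU-properties}(3)) to conclude that each $(\id{S})_i$ is an approximate eigenvector of $M$ with approximate eigenvalue $\lambda_i(M)$ and that distinct levels are approximately orthogonal. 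This yields
\[
\langle \id{S}, M\id{S} \rangle \leq \alpha^2 + \sum_{i=1}^{k} \lambda_i(M)\,\|(\id{S})_i\|_2^2 + E,
\]
where the error $E$ collects contributions from the non-exactness of both the approximate eigenrelation and approximate orthogonality and scales like $2^{O(k)} w(M) h(M)^2 \gamma \alpha$.

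Next I would split the main sum at level $r = R_\delta(M) - 1$. For $i > r$, completeness of $M$ together with the monotonicity of $\{\lambda_i(M)\}$ it provides gives $\lambda_i(M) \leq \delta$, while for $1 \leq i \leq r$, \Cref{cor:level-i} gives $\|(\id{S})_i\|_2^2 \leq 2^{O(i)}\varepsilon^{1/3}\alpha$ since $(\varepsilon,r)$-pseudorandomness of $\id{S}$ implies $(\varepsilon,i)$-pseudorandomness for all $i \leq r$. To recover the $(1-\delta)$ factor in the theorem rather than a weaker additive bound, I would trade high-level mass for low-level mass via approximate Parseval, $\sum_{i > r} \|(\id{S})_i\|_2^2 = \alpha - \alpha^2 - \sum_{i=1}^r \|(\id{S})_i\|_2^2 \pm O(E)$, and substitute back in to get
\[
\sum_{i \geq 1}\lambda_i(M)\,\|(\id{S})_i\|_2^2 \leq (1-\delta)\sum_{i=1}^r \|(\id{S})_i\|_2^2 + \delta(\alpha-\alpha^2) + O(E).
\]
Dividing by $\alpha$ yields $1 - \Phi(S) \leq \delta + (1-\delta)\alpha + (1-\delta) 2^{O(r)}\varepsilon^{1/3} + c\gamma$. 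Finally, applying the $\tau = \emptyset$ case of pseudorandomness gives $\alpha \leq \varepsilon \leq \varepsilon^{1/3}$, which absorbs the stray $(1-\delta)\alpha$ into $(1-\delta) 2^{O(r)} \varepsilon^{1/3}$ and produces the stated bound.

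The structural input is entirely the level-$i$ inequality (whose proof goes through the hypercontractivity theorem of \Cref{sec:hyper}), so the main technical obstacle is bookkeeping the error $E$. Each up/down step comprising $M$ incurs a $\gamma$-sized commutator error against the Bottom-Up basis via \Cref{eq:DDFH} and \Cref{lemma:approx-kernel}, and these errors must be summed across the $w(M)$ pure walks of height up to $h(M)$ in $M$'s expansion without disturbing the $(1-\delta)$ factor on the main $\varepsilon^{1/3}$ term. Ensuring the aggregate stays within $c \leq 2^{O(k)} w(M) h(M)^2$ requires carefully invoking \Cref{lemma:approx-orthog} at every level while isolating the level-$0$ contribution so that the division by $\alpha$ at the end produces a clean $c\gamma$ additive error and not something scaling like $c\gamma/\alpha$.
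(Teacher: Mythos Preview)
Your proposal is correct and follows essentially the same route as the paper: expand the spectral identity for $\Phi(S)$ in a Fourier basis, use the approximate eigenrelation for $M$, split the sum at level $r$ via monotonicity of the $\lambda_i(M)$, and invoke the level-$i$ inequality for the low levels. The only cosmetic differences are that the paper first expands in the HD-Level-Set basis and then transfers to Bottom-Up via \Cref{thm:Bottom-vs-Top}, and that it works with the inner products $\langle f,\fup{i}\rangle$ rather than the norms $\|\fup{i}\|_2^2$; the advantage of the former is that $\sum_i \langle f,\fup{i}\rangle = \langle f,f\rangle = \alpha$ holds \emph{exactly}, so the Parseval step introduces no additional error and the stray $(1-\delta)\alpha$ term is absorbed directly by the $i=0$ case of \Cref{thm:level-i} without needing to be singled out.
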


The proof of \Cref{thm:expansion} goes through a level-$i$ inequality for pseudorandom functions of independent interest.
\begin{theorem}[Level-$i$ Inequality]\label{thm:level-i}
Let $(X,\Pi)$ be a $\gamma$-local-spectral expander with $\gamma < 2^{-\Omega(k)}$ and $f \in C_k$ a boolean,  $(\eps,i)$-pseudorandom function. Then:
\[
\langle f, \fup{i} \rangle \leq 2^{O(i)}\eps^{1/3}\mathbb{E}[f].
\]
\end{theorem}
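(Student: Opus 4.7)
The plan is to mimic the classical proof of the level-$i$ inequality on the hypercube, where one combines approximate orthogonality with H\"older's inequality and a Bonami-type bound on $\|\fup{i}\|_4$. The three available ingredients are: approximate orthogonality of the Bottom-Up Decomposition (\Cref{lemma:approx-orthog}), hypercontractivity for pseudorandom functions (\Cref{thm:hypercontraction}), and the fact that $f$ is Boolean so that $\|f\|_p^p = \E[f]$ for every $p \ge 1$. Throughout I will assume $\|f\|_\infty = 1$ (since $f$ is Boolean) and that $\gamma$ is small enough that all $\gamma$-dependent error terms below are dominated by the target $2^{O(i)}\eps^{1/3}\E[f]$; the hypothesis $\gamma < 2^{-\Omega(k)}$ will be chosen to make this hold.

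\textbf{Step 1: Reduce to bounding $\|\fup{i}\|_2^2$.} Expanding $f = \sum_{j=0}^k \fup{j}$ inside the inner product gives
\[
\langle f, \fup{i} \rangle \;=\; \langle \fup{i}, \fup{i} \rangle \;+\; \sum_{j \ne i} \langle \fup{j}, \fup{i} \rangle.
\]
By the third approximate-orthogonality bound of \Cref{lemma:approx-orthog}, each cross term is at most $k^{O(k)}\gamma \|f\|_2^2 = k^{O(k)}\gamma \E[f]$, so after summing the $k$ cross terms and using $\gamma < 2^{-\Omega(k)}$, the discrepancy is absorbed into the target. It therefore suffices to bound $\|\fup{i}\|_2^2$.

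\textbf{Step 2: H\"older and hypercontractivity.} Using $\langle \fup{i},\fup{i}\rangle = \langle f,\fup{i}\rangle \pm \mathrm{err}$ from Step 1 and H\"older's inequality with the conjugate pair $(4/3,4)$,
\[
\|\fup{i}\|_2^2 \;\le\; \|f\|_{4/3}\,\|\fup{i}\|_4 + \mathrm{err} \;=\; \E[f]^{3/4}\,\|\fup{i}\|_4 + \mathrm{err},
\]
where the equality uses that $f$ is Boolean. Now apply \Cref{thm:hypercontraction} with $\|f\|_\infty=1$ and $\|D_i^k f\|_2^2 \le \|f\|_2^2 = \E[f]$ to get
\[
\|\fup{i}\|_4^4 \;\le\; 2^{O(i)} \eps\, \|\fup{i}\|_2^2 \;+\; c_{k,i}\,\eps\,\gamma^{1/2}\,\E[f],
\]
so that taking fourth roots and using $(a+b)^{1/4} \le a^{1/4}+b^{1/4}$ yields
\[
\|\fup{i}\|_4 \;\le\; 2^{O(i)} \eps^{1/4} \|\fup{i}\|_2^{1/2} \;+\; (c_{k,i}\eps\gamma^{1/2}\E[f])^{1/4}.
\]

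\textbf{Step 3: Solve the resulting inequality.} Substituting the previous display into the H\"older bound and setting $x := \|\fup{i}\|_2$ gives an inequality of the form
\[
x^2 \;\le\; 2^{O(i)} \eps^{1/4} \E[f]^{3/4} \, x^{1/2} \;+\; \mathrm{err}',
\]
where $\mathrm{err}'$ collects the $\gamma^{1/2}$-terms and the cross-term error from Step 1 and is again dominated by the target under the hypothesis on $\gamma$. Ignoring $\mathrm{err}'$, the relation $x^{3/2} \le 2^{O(i)}\eps^{1/4}\E[f]^{3/4}$ gives $x^2 \le 2^{O(i)}\eps^{1/3}\E[f]$, and then reversing Step 1 turns this into $\langle f,\fup{i}\rangle \le 2^{O(i)}\eps^{1/3}\E[f]$ as desired.

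\textbf{Main obstacle.} The conceptual content is the two-line H\"older/Bonami argument; the real bookkeeping is to verify that every $\gamma$-error along the way (approximate orthogonality, the additive term in \Cref{thm:hypercontraction}, and the $(a+b)^{1/4}$ split) is smaller than $2^{O(i)}\eps^{1/3}\E[f]$. The worst of these is the $c_{k,i}\eps\gamma^{1/2}\E[f]$ term from hypercontractivity, which forces the quantitative hypothesis $\gamma < 2^{-\Omega(k)}$ (in particular $\gamma^{1/2} c_{k,i} \ll 1$). Once this threshold is set, the rest is routine.
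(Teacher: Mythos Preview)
Your overall strategy --- H\"older with exponents $(4/3,4)$, then hypercontractivity (\Cref{thm:hypercontraction}), then solving the resulting self-referential inequality --- is exactly what the paper does. The paper in fact works directly with $\langle f,\fup{i}\rangle$ rather than first passing to $\|\fup{i}\|_2^2$, using approximate orthogonality only once to replace $\E[\fup{i}^2]$ by $\langle f,\fup{i}\rangle$ on the right-hand side; this is a cosmetic difference.

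There is, however, a genuine gap in your error bookkeeping. You assert that the $\gamma$-dependent terms are ``dominated by the target $2^{O(i)}\eps^{1/3}\E[f]$'' once $\gamma < 2^{-\Omega(k)}$, but this is false as stated: the theorem places no lower bound on $\eps$, so $\eps$ may be (say) $2^{-k^2}$, in which case no condition of the form $\gamma < 2^{-Ck}$ can force $k^{O(k)}\gamma\,\E[f]$ or $(c_{k,i}\eps\gamma^{1/2})^{1/4}\E[f]$ below $2^{O(i)}\eps^{1/3}\E[f]$. The hypothesis on $\gamma$ depends only on $k$, not on $\eps$, so your Step~3 absorption cannot go through uniformly.

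The paper closes this gap with a separate argument for the small-$\eps$ regime that does not route through hypercontractivity at all. Using the explicit form of $\fup{i}$ one gets the crude bound
\[
\langle f,\fup{i}\rangle \;=\; \binom{k}{i}\sum_{j=0}^i (-1)^{i-j}\binom{i}{j}\langle D_j^k f, D_j^k f\rangle \;\le\; \binom{k}{i}2^{O(i)}\eps\,\E[f],
\]
where $\langle D_j^k f, D_j^k f\rangle \le \eps\,\E[f]$ follows from pseudorandomness (cf.\ \cite[Lemma~8.8]{bafna2020high}). If $\eps \le \binom{k}{i}^{-3/2}$ this already gives $\langle f,\fup{i}\rangle \le 2^{O(i)}\eps^{1/3}\E[f]$ and you are done; otherwise $\eps \ge 2^{-(3/2)k}$, and now your $\gamma$-errors (of size $2^{-\Omega(k)}\E[f]$ after the hypothesis) really are dominated by $\eps^{1/3}\E[f]$, so your Steps~2--3 go through. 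You need to add this case split; the rest of your plan is correct.
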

Let's first prove \Cref{thm:expansion} given \Cref{thm:level-i}.
\begin{proof}[Proof of \Cref{thm:expansion}]
The argument is standard and follows from the  identity $\Phi(S) = 1- \frac{1}{\alpha}\langle f, Mf \rangle$ (where $\alpha=\mathbb{E}[f]$),
and expanding $f = \sum\limits_{i=0}^k \fdown{i}$ (the HD-Level-Set Decomposition). Namely, since $\fdown{i}$ is an approximate eigenvector, we can write:
\begin{align*}
    \Phi(S) &= 1- \frac{1}{\alpha}\sum\limits_{i=0}^k \langle f, M\fdown{i} \rangle\\
    &\geq 1- \frac{1}{\alpha}\sum\limits_{i=0}^k \lambda_i(M)\langle f, \fdown{i} \rangle - c_1\gamma\\
\end{align*}
where $c_1 \leq w(M) h(M)^2 2^{O(k)}$.
We can now apply \Cref{thm:Bottom-vs-Top} to switch between decompositions to get:
\[
\Phi(S) \geq 1- \frac{1}{\alpha}\sum\limits_{i=0}^k \lambda_i(M)\langle f, \fup{i} \rangle - c_2\gamma
\]
where $c_2 = c_1+2^{O(k)}$. Since $M$ is a complete walk, its eigenvalues decay monotonically \cite{bafna2020high}, we can therefore simplify the above to:
\begin{align*}
    \Phi(S) &\geq 1 - \frac{1}{\alpha}\sum\limits_{i=0}^{r} \lambda_i(M)\langle f, \fup{i} \rangle - \frac{\lambda_{r+1}(M)}{\alpha}\sum\limits_{i=r+1}^k \langle f, \fup{i} \rangle - c_2\gamma\\
    &= 1 -
    \frac{1}{\alpha}\sum\limits_{i=0}^{r} \lambda_i(M)\langle f, \fup{i} \rangle - \frac{\lambda_{r+1}(M)}{\alpha}\left (\alpha -  \sum\limits_{i=0}^{r}\langle f, \fup{i} \rangle\right) - c_2\gamma.
\end{align*}
Recall that by definition $\lambda_{r+1}(M) \le \delta$, and hence
\begin{align*}
    \Phi(S) &\geq 1 - \delta - \frac{1-\delta}{\alpha}\sum\limits_{i=0}^r \langle f,\fup{i} \rangle - c_2\gamma\\
    &\geq 1 - \delta - (1-\delta)2^{O(r)}\varepsilon^{1/3} - c_2\gamma
\end{align*}
where in the last step we have applied \Cref{thm:level-i}.
\end{proof}

It is left to prove \Cref{thm:level-i}, which also follows from fairly standard arguments given \Cref{thm:hypercontraction}. 
\begin{proof}[Proof of \Cref{thm:level-i}]
To simplify notations, we write $f_i$ instead of $\fup{i}$.
Notice that by H\"{o}lder's inequality for $p=4/3, q=1/4$ we have:
\begin{align*}
\ip{f,f_i} \le \|f\|_{4/3} \|f_i\|_{1/4} = \alpha^{3/4} \E[f_i^4]^{1/4}
\end{align*}
Combining this with \Cref{thm:hypercontraction} gives the following relation:
\begin{equation}\label{eq:level}
\frac{\langle f, f_i \rangle^4}{\alpha^3} \leq \E[f_i^4] \leq 2^{O(i)}\eps\E[f_i^2] + c_1\gamma^{1/2}\alpha \leq 2^{O(i)}\eps\langle f,f_i \rangle + c_2\gamma^{1/2}\alpha
\end{equation}
where $c_1,c_2 \leq 2^{O(k)}$
by approximate orthogonality (\Cref{lemma:approx-orthog}). We can simplify the above via two observations. First, note that we can assume without loss of generality that $\gamma^{1/4} \leq c_1^{-1} \eps$. This follows from observing that:
\[
\langle f, f_i \rangle = {k \choose i}\sum\limits_{j=0}^i (-1)^{i-j}{i \choose j}\langle D^k_j f, D^k_j f \rangle.
\]
Appealing to arguments from \cite[Lemma 8.8]{bafna2020high}, we have that $\langle D^k_j f, D^k_j f \rangle \leq \varepsilon\alpha$, which gives the naive bound:
\[
\langle f, f_i \rangle \leq {k \choose i}2^{O(i)}\varepsilon\alpha.
\]
If $\varepsilon \le {k \choose i}^{-3/2}$
then ${k \choose i} \eps \le \eps^{1/3}$ and our desired bound follows. Otherwise, we may assume from now on that $\varepsilon \ge {k \choose i}^{-3/2} \ge 2^{-(3/2)k}$.
Since $\gamma \leq 2^{-\Omega(k)}$, we are therefore free to assume $\gamma^{1/4} \le c_1^{-1} \eps$ as well. Second, we can also assume $\langle f,f_i \rangle \geq \gamma^{1/4}\alpha$, since otherwise we are done by our previous assumptions on $\gamma$ and $\varepsilon$. Combining these with \Cref{eq:level} then gives:
\begin{align*}
\frac{\langle f, f_i \rangle^4}{\alpha^3} &\leq 2^{O(i)}\eps\langle f,f_i \rangle + c_1\gamma^{1/2}\alpha\\
&\leq 2^{O(i)}\eps\langle f,f_i \rangle + \varepsilon\gamma^{1/4}\alpha\\
&\leq 2^{O(i)}\varepsilon\langle f, f_i \rangle \end{align*}
as desired.
\end{proof}

\section{Fourier Analysis on HDX}\label{sec:Fourier}
In this section we further develop the theory of Fourier analysis on simplicial complexes, and show how hypercontractivity for pseudorandom functions (\Cref{thm:hypercontraction}) recovers tight analogs of the KKL Theorem and noise-sensitivity of sparse functions. This requires introducing a number of new analog definitions of classic Fourier analytic quantities on simplicial complexes. To get an idea for what these should look like, it will be useful to start by considering a natural embedding of the hypercube itself into a simplicial complex. 
\begin{definition}[Hypercube Complex]
The hypercube complex $X=X_{\{0,1\}^n}$ is the complete $n$-partite complex on $X(1) = [n] \times \{0,1\}$, where the first coordinate denotes the color of the vertex. That is, the top level faces are $X(n)=\{\{(1,x_1),\ldots,(n,x_n)\}: x \in \{0,1\}^n\}$.
\end{definition}
We make a few notes on this definition. First, it is clear from definition that $X(n)$ can equivalently be thought of as the hypercube $\{0,1\}^n$, where each color in $[n]$ corresponds to a coordinate in $\{0,1\}^n$. Further, classic graphs on $\{0,1\}^n$ such as the hypercube or noisy hypercube can be expressed as simple higher order random walks. The hypercube graph, for instance, is simply the non-lazy lower walk $UD^+ = 2U_{n-1}D_n - I$. This embedding will serve as our guiding principle for developing analog Fourier-analytic definitions on simplicial complexes---whenever possible, our definitions will reduce to the standard notion when applied to the hypercube complex. We note that the same embedding can be used for any product distribution and all of our definitions will generalize appropriately. We focus on the simple case of the hypercube for ease of exposition.

\subsection{Total Influence and the KKL Theorem}
We'll start with a fundamental concept in classical Fourier analysis, \textit{influence}. Let's first recall the definition of (total) influence on the discrete hypercube. Influence can be formalized in a number of equivalent ways. It is often thought of, for instance, as a measure of average sensitivity. In our context, it will be most convenient to view influence as a statement about the expansion of a function with respect to the hypercube graph. More formally, let $Q_n$ denote the normalized adjacency matrix of the hypercube graph, and $Q^{\lazy}_n = \frac{I+Q_n}{2}$ its lazy variant. We will write total influence in terms of the (un-normalized) \textit{Laplacian operator} $L=n(I-Q^{\lazy}_n)$.
\begin{definition}[Total Influence (hypercube)]
Let $f: \{0,1\}^n \to \{0,1\}$ be a Boolean function. The total influence of $f$, denoted $I[f]$, is:
\begin{align*}
    I[f] = \langle f, L f \rangle.
\end{align*}
\end{definition}
Expressed in this sense, there is a natural generalization to simplicial complexes. It is not hard to see that on the hypercube complex, $Q_n^{\lazy}$ is exactly the lower walk $UD$. As a result, we'll define influence using the Laplacian of the lower walk.
\begin{definition}[Total Influence]\label{def:influence}
Let $(X,\Pi)$ be a pure, weighed simplicial complex and $f \in C_k$. The influence of $f$, denoted $I[f]$ is:
\[
I_{(X,\Pi)}[f] = \langle f,L_{UD}f \rangle
\]
where $L_{UD}=k(I-U_{k-1} D_k)$.
When clear from context, we will simply write $I[f]$.
\end{definition}

When $(X,\Pi)$ is sufficiently expanding, \Cref{def:influence} acts much like standard influence on the cube. For instance, recalling standard bounds on the spectral expansion of $L_{UD}$ \cite{kaufman2020high}, it is not hard to see the total influence of any function on a $\gamma$-local-spectral expander lies between $(1+O_k(\gamma))\var(f) \leq I_{(X,\Pi)}[f] \leq k\var(f)$, which returns the standard bounds as $\gamma$ goes to $0$. Similarly, it is obvious that the total influence of any function on the hypercube complex is equivalent to its total influence on the hypercube, as the lower walk $U_{n-1} D_n$ is exactly $Q_n^{\lazy}$.
\begin{observation}
Let $f: \{0,1\}^n \to \mathbb{R}$ be any function and $f_X: X_{\{0,1\}^n}(n) \to \mathbb{R}$ its equivalent on the hypercube complex, that is:
\[
f(x_1,\ldots,x_n) = f_X( (1,x_1), \ldots, (n,x_n) )
\]
Then:
\[
I_{X_{\{0,1\}^n}}[f] = I[f].
\]
\end{observation}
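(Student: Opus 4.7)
The plan is to show that, on the hypercube complex, the composition $U_{n-1}D_n$ that appears in \Cref{def:influence} is literally the lazy hypercube random walk $Q_n^{\lazy}$, after which the equality of influences follows by unwinding definitions. The only thing to check carefully is the normalization coming from the induced weights $\Pi$ on $X(n)$ and $X(n-1)$.

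First I would set up the weights. The natural distribution $\pi_n$ on $X(n)$ is uniform (each top-face corresponds to a distinct $x \in \{0,1\}^n$ and carries weight $2^{-n}$). For any $\tau \in X(n-1)$, the face $\tau$ is obtained from some $\sigma \in X(n)$ by removing a single colored vertex $(i,x_i)$; exactly two top-faces contain $\tau$, namely $\sigma$ and $\sigma^{(i)}$, the face differing from $\sigma$ only in the value at coordinate $i$. So $\pi_n(X_\tau) = 2 \cdot 2^{-n}$ and by definition of the down operator $D_n f_X(\tau) = \tfrac{1}{2}\left(f_X(\sigma) + f_X(\sigma^{(i)})\right)$.

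Next I would apply $U_{n-1}$. For $\sigma \in X(n)$, there are exactly $n$ sub-faces $\tau_i \subset \sigma$ in $X(n-1)$, one for each coordinate $i$, and averaging uniformly gives
\begin{align*}
U_{n-1}D_n f_X(\sigma) &= \frac{1}{n}\sum_{i=1}^n \frac{1}{2}\left(f_X(\sigma) + f_X(\sigma^{(i)})\right) \\
&= \frac{1}{2}f_X(\sigma) + \frac{1}{2n}\sum_{i=1}^n f_X(\sigma^{(i)}).
\end{align*}
Identifying $X(n)$ with $\{0,1\}^n$, the second term is exactly $\tfrac{1}{2}Q_n f(x)$, where $Q_n$ is the normalized hypercube adjacency operator. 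Hence $U_{n-1}D_n = \tfrac{I+Q_n}{2} = Q_n^{\lazy}$ as operators on $C_n(X_{\{0,1\}^n}) \cong \{f:\{0,1\}^n \to \mathbb{R}\}$, and so $L_{UD} = n(I - U_{n-1}D_n) = n(I - Q_n^{\lazy}) = L$.

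Finally I would verify that the inner products agree: since $\pi_n$ is uniform on $X(n)$, $\langle g, h\rangle_{X(n)} = \E_{\sigma\sim\pi_n}[g(\sigma)h(\sigma)] = \E_{x \sim \{0,1\}^n}[g_X(x)h_X(x)]$. Combining this with the operator identity $L_{UD} = L$ yields
\[
I_{X_{\{0,1\}^n}}[f_X] = \langle f_X, L_{UD}f_X\rangle_{X(n)} = \langle f, Lf\rangle = I[f],
\]
as claimed. There is no real obstacle here; the only place to be careful is the weighted definition of $D_n$, which requires the normalization $\pi_n(X_\tau)$ in the denominator so that $D_n$ averages rather than sums over the two completions of $\tau$—this is precisely what produces the laziness factor of $1/2$ and makes the match with $Q_n^{\lazy}$ exact.
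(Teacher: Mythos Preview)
Your proposal is correct and follows exactly the approach the paper indicates: the paper simply asserts (just before the observation, and earlier in the section) that the lower walk $U_{n-1}D_n$ on the hypercube complex is $Q_n^{\lazy}$, and you have carefully verified this identity together with the matching of inner products. There is nothing to add.
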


One of the most well-studied problems in the analysis of boolean functions is understanding the structure of functions with low influence. The seminal result in this area is called the ``KKL Theorem'' \cite{kahn1988influence}. Informally, the KKL Theorem states that if a function has low total influence, there must exist an \textit{influential coordinate} (in the sense that on average over $\{0,1\}^n$, the coordinate has a large affect on the value of $f$). Morally, this can also be thought of as strong notion of the following statement: ``functions with low influence are not pseudorandom.'' While the KKL Theorem itself does not extend beyond the hypercube, this latter interpretation does. In particular, Bourgain \cite{friedgut1999sharp} proved a similar statement over any product space: functions with low influence must have some influential \textit{set} of coordinates, and are therefore not pseudorandom. We prove a variant of Bourgain's result for local-spectral expanders.

\begin{theorem}[Bourgain's Theorem for HDX]\label{thm:hdx-sharp-threshold}
Let $(X,\Pi)$ be a two-sided $\gamma$-local-spectral expander with $\gamma \leq 2^{-\Omega(k)}$ and $f \in C_k$ a boolean function. Then for any $0 \leq K \leq k$, if $I[f] \leq K\var(f)$, there exists an $(i \leq K)$-link $\tau$ with large density:
\[
 \underset{X_\tau}{\E}[f] \geq 2^{-O(K)}.
\]
\end{theorem}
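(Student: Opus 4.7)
The plan is to argue by contrapositive, combining the level-$i$ inequality (\Cref{thm:level-i}) with the spectral structure of the lower walk $U_{k-1}D_k$. First I would reduce to the case $\alpha \coloneqq \E[f] \leq 1/2$, since otherwise the empty face $\emptyset \in X(0)$ already satisfies $\E_{X_\emptyset}[f] = \alpha > 1/2 \geq 2^{-O(K)}$. I would then assume for contradiction that every $i$-link with $i \leq K$ has density at most $\eps \coloneqq 2^{-cK}$ for a large constant $c$ to be chosen at the end. Since $f$ is boolean and hence non-negative, this makes $f$ $(\eps, K)$-pseudorandom in the sense of \Cref{def:intro-pr}.

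Next I would translate both $I[f]$ and $\var(f)$ into Bottom-Up Fourier weights. By the approximate eigenvalue property of the Bottom-Up decomposition (\Cref{thm:intro-BU-properties}, obtained via \Cref{thm:Bottom-vs-Top} and the HD-Level-Set correspondence) together with approximate orthogonality (\Cref{lemma:approx-orthog}), the eigenvalue of $U_{k-1}D_k$ on level $\fup{i}$ is $\approx (k-i)/k$. Hence
\[
I[f] = \sum_{i=1}^k i \cdot \|\fup{i}\|_2^2 \pm 2^{O(k)}\gamma\alpha \quad \text{and} \quad \var(f) = \sum_{i \geq 1} \|\fup{i}\|_2^2 \pm 2^{O(k)}\gamma\alpha \geq \tfrac{\alpha}{2} - 2^{O(k)}\gamma\alpha.
\]

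The contradiction now comes from bounding $\sum_{i=1}^{K}\|\fup{i}\|_2^2$ both above and below. Applying Markov's inequality to the hypothesis gives $(K+1)\sum_{i > K}\|\fup{i}\|_2^2 \leq \sum_{i > K} i\|\fup{i}\|_2^2 \leq K\var(f) + 2^{O(k)}\gamma\alpha$, and hence
\[
\sum_{i=1}^{K} \|\fup{i}\|_2^2 \geq \tfrac{\var(f)}{K+1} - O(2^{O(k)}\gamma\alpha) \gtrsim \tfrac{\alpha}{K}.
\]
On the other hand, \Cref{thm:level-i} is applicable at every level $i \leq K$ and yields $\langle f, \fup{i}\rangle \leq 2^{O(i)}\eps^{1/3}\alpha$; combined with $\|\fup{i}\|_2^2 \leq \langle f, \fup{i}\rangle + 2^{O(k)}\gamma\alpha$ from \Cref{lemma:approx-orthog}, summing over $i \leq K$ gives
\[
\sum_{i=1}^{K} \|\fup{i}\|_2^2 \leq 2^{O(K)}\eps^{1/3}\alpha + 2^{O(k)}\gamma\alpha K.
\]
Comparing the two bounds forces $\eps \gtrsim 2^{-O(K)}$, contradicting $\eps = 2^{-cK}$ once $c$ is chosen larger than the implicit constant.

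The main obstacle is bookkeeping the various $2^{O(k)}\gamma$ error terms coming from approximate orthogonality and approximate eigenvalues; since the hypothesis $\gamma \leq 2^{-\Omega(k)}$ is precisely tuned so that each such term is dominated by the $\alpha/K$ lower bound, this is a careful but essentially routine check. A minor subtlety is that \Cref{thm:level-i} controls $\langle f, \fup{i}\rangle$ rather than $\|\fup{i}\|_2^2$ directly, but the two quantities agree up to the same $2^{O(k)}\gamma\alpha$ error by \Cref{lemma:approx-orthog}, so this introduces no new difficulty.
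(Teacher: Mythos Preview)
Your proof is correct and relies on the same core ingredient (the level-$i$ inequality, \Cref{thm:level-i}) as the paper, but reaches it by a somewhat different route. The paper first rewrites the influence hypothesis as an expansion bound $\Phi(f) \leq K/k$ for the lower walk $U_{k-1}D_k$ and then applies \Cref{thm:expansion} as a black box, choosing $\delta$ so that the ST-rank is exactly $K+1$; comparing the resulting lower bound on $\Phi(f)$ with $K/k$ forces $\eps \geq 2^{-O(K)}$. You instead work directly in the Fourier decomposition: expressing $I[f] \approx \sum_i i\,\|\fup{i}\|_2^2$ and using a Markov-type argument to force at least $\var(f)/(K+1)$ of the mass onto levels $\leq K$, then bounding that low-level mass from above via \Cref{thm:level-i}. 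Your argument is essentially an unwrapping of what the proof of \Cref{thm:expansion} does internally for the specific walk $UD$, and is arguably more direct here since it sidesteps the general expansion machinery (complete walks, ST-rank, the careful choice of $\delta$). The paper's route has the compensating advantage of reusing an already-proved general theorem and making the expansion interpretation explicit.
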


\begin{proof}
This follows without too much difficulty from the expansion of pseudorandom sets (\Cref{thm:expansion}). In particular, notice that our assumption on the influence implies the following bound on the expansion of $f$ with respect to the lower walk $U_{k-1}D_k$:
\[
\Phi(f) = \frac{\ip{f,L_{UD}f}}{k\E[f]} = \frac{I[f](1 - \E[f])}{k\var(f)} \leq \frac{K}{k}.
\]
Recall that \Cref{thm:expansion} states that for any $\delta>0$ and $r = R_\delta(UD)-1$, the expansion of an $(\varepsilon,r)$-pseudorandom boolean function $g$ with respect to the lower walk is at least:
\[
\Phi(g) \geq 1 - \delta - (1-\delta)2^{O(r)}\varepsilon^{1/3} - c\gamma.
\]
Using this fact, we'll show that $f$ cannot be $(\varepsilon,K)$-pseudorandom for $\varepsilon \leq 2^{-\Omega(K)}$, which gives the result.

To this end, assume $f$ is $(\varepsilon, K)$-pseudorandom for some $\varepsilon=2^{-\Omega(K)}$ to be determined soon (else we are done), and let $\delta$ be $1 - \frac{K(1 + \eps^{1/6})}{k}$ such that $1 - \delta < (K+1)/k$. Since the eigenvalues of $UD$ are concentrated around $1, 1 - 1/k, 1 - 2/k, \cdots, 1 - K/k$ for small enough $\gamma$ \cite{bafna2020high}, the ST-Rank $R_\delta(UD)=K+1$, and $r=K$. \Cref{thm:expansion} then implies:
\[
\Phi(f) \geq \frac{K}{k} \cdot (1+\eps^{1/6})(1 - 2^{O(r)}\eps^{1/3}),
\]
where we have again used our assumption on the size of $\gamma$. Re-arranging the above using the upper bound on expansion then gives a lower bound on $\varepsilon$ of:
\[
\eps^{1/3} \geq \frac{\eps^{1/6}}{1+\eps^{1/6}}\cdot \frac{1}{2^{O(r)}} \geq \frac{1}{2^{O(r)}},
\]
which implies the result.



\end{proof}

Before moving on, we'll prove that this result is tight.
\begin{proposition}\label{thm:lower}
Let $c \geq 1$ be any constant. Then for all integers $K,k >1$ satisfying $k \geq \Omega_c(K)$ 
and any $n$ sufficiently larger than $k$, there exists a Boolean function $f \in C_k$ on the $k$-dimensional complete complex on $n$ vertices satisfying:
\begin{enumerate}
    \item The influence of $f$ is small:
    \[
    I[f] \leq K\text{Var}(f)
    \]
    \item For every $i \leq cK$, all $i$-links are sparse:
    \[
\forall i \leq cK, \tau \in X(i): \underset{X_\tau}{\mathbb{E}}[f] \leq 2^{-\Omega(K)}.
\]
\end{enumerate}
\end{proposition}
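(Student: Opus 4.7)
The plan is to prove tightness by exhibiting a discrete halfspace, mirroring standard Bourgain-type lower-bound constructions on the $p$-biased cube. Fix a subset $A \subseteq [n]$ of size $|A| = \lfloor \alpha n/k \rfloor$ with $\alpha = C_0 c^2 K$ for a sufficiently large constant $C_0 = C_0(c)$, and fix a balanced sign vector $v \in \{\pm 1\}^A$ (that is, $\sum_{i \in A} v_i = 0$). On the complete complex $X$ on $[n]$, define $\phi(S) = \sum_{i \in S \cap A} v_i$ and $f(S) = \mathbf{1}[\phi(S) \geq t]$ where $t = C_1 c K$, with $C_0, C_1$ chosen so that $t^2/\alpha = \Theta(K)$ while simultaneously $(t - cK)^2/\alpha = \Omega(K)$ (concretely, $\alpha \approx 3 c^2 K$ and $t \approx 2 c K$ works).

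First I would verify that $f$ and all of its small links are sparse. Over uniform $S \in \binom{[n]}{k}$, the variable $\phi(S)$ is a sum of $\pm 1$ entries of $v$ taken over a hypergeometric-sized random subset of $A$ of expected size $|A| k/n = \alpha$. A standard Hoeffding-type inequality for hypergeometric Rademacher sums, together with $v$ being balanced, shows $\phi(S)$ is sub-Gaussian with mean $0$ and variance $O(\alpha)$, so $\E[f] \leq \exp(-\Omega(t^2/\alpha)) = 2^{-\Omega(K)}$. For any $\tau \in X(i)$ with $i \leq cK$, write $\phi(S) = \phi(\tau) + \phi(S \setminus \tau)$ and observe that $|\phi(\tau)| \leq |\tau \cap A| \leq cK$ since $|v_i| = 1$. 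Hence the link density is bounded above by $\Pr[\phi(S \setminus \tau) \geq t - cK \mid \tau \subseteq S]$, and the same Hoeffding estimate applied inside the link $X_\tau$, together with the assumption $k \ll n$ which keeps any mean-shift from the conditioning at $o(1)$, gives $\E_{X_\tau}[f] \leq \exp(-\Omega((t-cK)^2/\alpha)) = 2^{-\Omega(K)}$ uniformly in $\tau$.

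Next I would handle the influence bound. One step of the lower walk $U_{k-1}D_k$ on $X(k)$ swaps a uniformly random element of $S$ for a uniformly random element of $[n] \setminus S$; such a swap changes $\phi$ only if the removed or added vertex lies in $A$, which occurs with probability $O(|S \cap A|/k + |A|/n) = O(\alpha/k)$, and in that case $\phi$ shifts by at most $\pm 2$. By a local anti-concentration estimate for $\phi(S)$ near the tail at $t$, $\Pr[\phi(S) \in [t-2, t+2]] = O(\E[f]/\sqrt{\alpha})$. Combining,
\[
\Pr[f(S) \neq f(S')] \leq O(\alpha/k) \cdot O(\E[f]/\sqrt{\alpha}) = O(\sqrt{\alpha}/k) \cdot \E[f],
\]
so that $I[f] = \tfrac{k}{2}\Pr[f(S) \neq f(S')] = O(\sqrt{\alpha}) \cdot \E[f] = O(c\sqrt{K}) \cdot \mathrm{Var}(f)$, which is at most $K \cdot \mathrm{Var}(f)$ whenever $K \geq \Omega(c^2)$, matching condition (1).

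The main obstacle is calibrating $\alpha$ and $t$ so that all three demands hold simultaneously: pushing the link bound past the worst-case $\tau$ with $|\phi(\tau)| = cK$ forces both $t - cK = \Omega(cK)$ and $\alpha = \Omega(c^2 K)$, after which the influence bound degrades from the ideal $O(\sqrt{K})$ to $O(c\sqrt{K})$, which is precisely why the statement is only meaningful once $k \gg_c K$. The most delicate ingredient is the local anti-concentration estimate used for the influence bound: rather than a direct Gaussian density argument, it must be justified either by a local CLT for hypergeometric Rademacher sums near the tail or by a direct ratio-of-probabilities estimate exploiting the balanced choice of $v$ and the assumption $k \ll n$. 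The link-density step, by contrast, needs only the one-sided Hoeffding estimate and is routine once the halfspace form is fixed.
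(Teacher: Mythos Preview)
Your anti-concentration estimate is off by a factor of $\sqrt{K}$, and this breaks the construction for any $c$ larger than an absolute constant. For a mean-zero variable with variance $\alpha$, the Gaussian heuristic (and any local CLT you could invoke here) gives
\[
\Pr\bigl[\phi(S)\in[t-2,t+2]\bigr]\;\asymp\;\frac{1}{\sqrt{\alpha}}\,e^{-t^2/(2\alpha)}
\quad\text{while}\quad
\E[f]=\Pr[\phi(S)\ge t]\;\asymp\;\frac{\sqrt{\alpha}}{t}\,e^{-t^2/(2\alpha)},
\]
so the correct ratio is $t/\alpha$, not $1/\sqrt{\alpha}$. With your parameters $t=\Theta(cK)$, $\alpha=\Theta(c^2K)$ this gives $\Pr[\phi\in[t-2,t+2]]=\Theta(\E[f]/c)$, and plugging into your own influence computation yields $I[f]=\Theta(t)\cdot\E[f]=\Theta(cK)\cdot\E[f]$ rather than $O(c\sqrt{K})\cdot\E[f]$. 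Since the link condition genuinely forces $t>cK$ (the adversary can pick $\tau\subset A$ with all $v_i=+1$), while $I[f]\le K\,\mathrm{Var}(f)$ forces $t\le O(K)$, the two constraints are incompatible once $c$ exceeds a fixed constant. A matching \emph{lower} bound on $I[f]$ follows from the same density calculation, so this is not an artifact of a loose upper bound: halfspaces over a random sign vector simply have influence $\Theta(t)\cdot\E[f]$, and no choice of $\alpha,t$ can satisfy both requirements simultaneously for large $c$.

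The paper avoids this tension by using an anti-tribes (AND-of-ORs) function: take $m=2cK$ disjoint ``tribes'' $T_1,\dots,T_m\subset[n]$ each of size $\Theta_c(n/k)$, and set $f(S)=1$ iff $S$ meets every tribe. The link bound is then handled combinatorially (a $cK$-face can intersect at most $cK$ tribes, leaving $cK$ tribes each still missed with constant probability), and the influence bound reduces to controlling the probability that a tribe is hit \emph{exactly once}, which can be made an arbitrarily small constant by tuning the tribe size. This decouples the two constraints in a way the halfspace cannot.
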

\begin{proof}
Our construction is based on a careful analysis of the anti-tribes function (a.k.a the AND of ORs function) similar to \cite[Example 5.8]{keevash2019hypercontractivity}. Concretely, let $T_1,\ldots T_{m}$ (called ``tribes'') be $m=2cK$ disjoint sets of $c_1\frac{n}{k}$ vertices for some $c_1 \geq \log(\Omega(c))$.
We define our candidate function $f \in C_k$ to be $1$ on a $k$-face $S$ exactly when $S$ contains some vertex from each $T_i$:
\[
f(S)=
\begin{cases}
1 & \text{if $\forall 1 \leq i \leq m$: $|S \cap T_i| > 0$}\\
0 & \text{else}.
\end{cases}
\]
For simplicity, it will actually be more convenient to analyze $f$ as a function over $[n]^k$ rather than $X(k)={n \choose[k]}$. Since the probability of repeated vertices in the former is $o_{n,k}(1)$, this has no effect on our final result when $n$ is sufficiently larger than $k$.

Let's start by proving the density of every $cK$-link is at most $2^{-\Omega(k)}$.\footnote{Formally we note this should really be shown for $\floor{cK}$-links, but this makes no significant difference in the analysis so we ignore it for simplicity.} Note that this implies the same for every $i$-link for $i \leq cK$. It is not hard to see that the largest density link comes from fixing an element in each of $cK$ tribes. For simplicity, fix such a $cK$-link $T$ with a vertex in each $T_i$ for $cK+1 \leq i \leq m$ (all such links are symmetric, so it suffices to analyze this case). For a uniformly drawn element $S \in [n]^k$, let $E_i$ denote the event that $S$ contains a vertex in $T_i$. We'd like to bound:
\[
\underset{X_T}{\mathbb{E}}[f]= \Pr_{S \sim [n]^k}\left[\bigcap_{i=1}^{cK} E_i ~\Bigg|~ S \supset T \right] = \Pr_{S \sim [n]^{k-cK}}\left[\bigcap_{i=1}^{cK} E_i \right],
\]
where we have used the fact that $S \setminus T$ is independent of $T$ since we are working over $[n]^k$.
Since the $E_i$ are negatively correlated, we can bound this probability by:
\begin{align*}
    \Pr_{S \sim [n]^{k-cK}}\left[\bigcap_{i=1}^{cK} E_i \right]  & \leq \prod_{i=1}^{cK} \Pr_{S \sim [n]^{k-cK}}\left[ E_i \right]\\
    &\leq (1-(1-c_1/k)^{k})^{cK}\\
    & \leq \left(1 -\frac{1}{O(c)}\right)^{cK}\\
    & \leq 2^{-\Omega(K)}
\end{align*}
where we've used the fact that $e^{-x} \geq 1-x \geq e^{-x/(1-x)}$ for $x<1$ and our assumptions on the size of $k$.


We now move on to analyzing the influence of $f$, which will follow from similar computations. To start, notice that it is instead sufficient to bound the expansion of $f$ with respect to the lower walk by: 
\[
\Phi(f) \leq \frac{K}{k}\frac{\text{Var}(f)}{\mathbb{E}[f]} = \frac{K(1-\mathbb{E}[f])}{k},
\] 
as then:
\begin{align*}
I[f] &= \langle f,L_{UD}f\rangle = k\Phi(f)\mathbb{E}[f] \leq K\text{Var}(f),
\end{align*}
where we recall $L_{UD}$ is the un-normalized Laplacian of $UD$. 

To this end, recall that the expansion of $f$ can also be defined as the average probability of leaving $\text{supp}(f)$ after applying the walk, that is: 
\[
\Phi(f) = \underset{S \sim \text{supp}(f)}{\mathbb{E}}[\phi(S)],
\]
where $\phi(S)$ denotes the probability of leaving $S$ in a single step of the lower walk. To compute this value, recall that in the down step of the walk, a uniformly random vertex is removed from $S$. In order to leave the support of $f$ in the up step, the removed element must have been selected from a tribe $T_i$ such that $|S \cap T_i| = 1$. The idea is then to show that for most samples, only a small fraction of tribes have exactly one element. With this in mind, let $B_i$ be the event $|S \cap T_i| = 1$ over the randomness of $S \sim \text{supp}(f)$. Formalizing the above argument, we can bound $\phi(S)$ by the sum over $B_i$:
\[
\phi(S) \leq \sum_{i=1}^{m} \frac{B_i(S)}{k},
\]
and therefore the expansion $\Phi(f)$ by:
\[
\Phi(f) \leq \frac{1}{k}\underset{{S \sim \text{supp}(f)}}{\E}[B_i(S)].
\]

By a similar argument to our density calculations, the probability that any fixed tribe $T_i$ has exactly one element from $S \sim \text{supp}(f)$ is at most:
\begin{align*}
\mathbb{E}[B_i] &= \left(1-\frac{c_1}{k}\right)^{k-m}\\ 
&\leq e^{-c_1\frac{k-m}{k}}\\ 
&\leq \frac{1}{\Omega(c)}
\end{align*}
since we have by assumption that $k$ is much larger than $m$. Plugging this into our expression for expansion then gives:
\[
\Phi(f) \leq \frac{m}{k} \cdot \frac{1}{\Omega(c)} \leq c_2\frac{K}{k}
\]
for some $c_2 < 1$. Noting that $\mathbb{E}[f] = 2^{-\Omega(K)}$ then implies the result for the appropriate setting of constants.
\end{proof}

\subsection{Stability and the Noise Operator}
Another fundamental notion in boolean Fourier analysis is the \textit{noise operator} $T_\rho$. It is convenient to express the definition in terms of the following process on an element $x \in \{0,1\}^n$:
\begin{enumerate}
    \item Remove each bit with probability $1-\rho$.
    \item Replace each removed bit uniformly\footnote{In more general settings like the p-biased cube, this is replaced with respect to the underlying distribution.} at random.
\end{enumerate}
We write the distribution over $y$ given by this process as $N_\rho(x)$. The noise operator $T_\rho$ is simply the averaging operator over $\rho$-correlated strings.
\begin{definition}[Noise Operator (Hypercube)]
Let $f: \{0,1\} \to \mathbb{R}$ be any function. The noise operator $T_\rho$ averages $f$ over $N_\rho$:
\[
T_\rho f(x) = \underset{y \sim N_\rho(x)}{\E}[f(y)].
\]
\end{definition}
Extending the noise operator to simplicial complexes is a bit tricky naively since there is no notion of coordinates. To do this, consider the following reformulation of the distribution $N_\rho(x)$, instead of removing each coordinate independently with probability $1-\rho$, we remove a uniformly random \textit{set of $i$ coordinates} with probability ${n \choose i}\rho^{n-i}(1-\rho)^i$, and replace them uniformly at random. This equivalent process does have a natural analog on simplicial complexes: simply replace ``uniformly random set of $i$ coordinates'' with ``uniformly random $i$-face.'' We can formalize this through the averaging operators.
\begin{definition}[Noise Operator (Simplicial Complex)]
Let $(X,\Pi)$ be a pure, weighted simplicial complex. The noise operator $T^k_\rho(X,\Pi)$ at level $k$ of the complex is:
\[
T^k_\rho(X,\Pi) = \sum\limits_{i=0}^k {k \choose i}(1-\rho)^i \rho^{k-i}U_{k-i}^kD^k_{k-i}.
\]
We write $T_\rho$ when the level and complex are clear from context.
\end{definition}
Let's take a moment to check that, as with influence, when applied to the hypercube complex this definition recovers $T_\rho$.
\begin{observation}
Let $f: \{0,1\}^n \to \mathbb{R}$ be any function and $f_X: X_{\{0,1\}^n}(n) \to \mathbb{R}$ its equivalent on the hypercube complex, then:
\[
T^n_\rho(X_{\{0,1\}^n})f_X = T_\rho f.
\]
\end{observation}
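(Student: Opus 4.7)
The plan is to verify the identity by unpacking both sides and matching the coefficient of each canonical walk term. First I would rewrite the standard hypercube noise operator using the equivalent ``bucketed'' formulation the paper already described: conditioning on the number of randomized coordinates $j \sim \mathrm{Bin}(n,1-\rho)$, the distribution $N_\rho(x)$ factors as first choosing a uniformly random size-$j$ set $I \subseteq [n]$ and then redrawing those $j$ coordinates uniformly while keeping $x|_{[n]\setminus I}$ fixed. This gives
\[
T_\rho f(x) \;=\; \sum_{j=0}^n \binom{n}{j}(1-\rho)^j \rho^{n-j}\, R_j f(x),
\qquad R_j f(x) := \E_{I \sim \binom{[n]}{j}}\,\E_{y \in \{0,1\}^n :\, y|_{[n]\setminus I}=x|_{[n]\setminus I}}[f(y)].
\]
Since the coefficients in the definition of $T^n_\rho(X_{\{0,1\}^n})$ match (with $i$ there equal to $j$ here, via $U^n_{n-i}D^n_{n-i}$), it suffices to show that for every $0 \le j \le n$ and every top-level face $\tau = \{(1,x_1),\ldots,(n,x_n)\}$,
\[
U^n_{n-j} D^n_{n-j} f_X(\tau) \;=\; R_j f(x).
\]

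The main step is then a direct computation in the hypercube complex. A face $\sigma \in X(n-j)$ is necessarily of the form $\{(k, y_k) : k \in J\}$ for some $J \in \binom{[n]}{n-j}$ (because $X$ is $n$-partite with color classes indexed by coordinates), and a top-level face $\tau'$ contains $\sigma$ if and only if the corresponding string $y(\tau')$ extends $\sigma$ on $J$. Hence
\[
D^n_{n-j} f_X(\sigma) \;=\; \E_{y \in \{0,1\}^n :\, y|_J = \sigma|_J}[f(y)].
\]
Similarly, subfaces of $\tau$ at level $n-j$ are in bijection with uniform choices of $J \in \binom{[n]}{n-j}$, the corresponding $\sigma$ being determined by the values $x_k$ for $k \in J$. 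Composing,
\[
U^n_{n-j} D^n_{n-j} f_X(\tau) \;=\; \E_{J \sim \binom{[n]}{n-j}}\,\E_{y :\, y|_J = x|_J}[f(y)],
\]
and substituting $I := [n] \setminus J$ (a uniformly random size-$j$ subset) recovers exactly $R_j f(x)$.

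Summing over $j$ and comparing coefficients closes the identification $T^n_\rho(X_{\{0,1\}^n}) f_X = T_\rho f$. There is no real obstacle here: the proof is essentially a bookkeeping check, and the only conceptual point is to recognize that the canonical walk $U^n_{n-j} D^n_{n-j}$, restricted to the hypercube complex and read through the bijection between top faces and strings in $\{0,1\}^n$, is literally the operator that picks a uniformly random set of $j$ coordinates and rerandomizes them. Once this identification is made, the binomial weights on both sides align term by term.
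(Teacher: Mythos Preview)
Your proposal is correct and follows essentially the same approach as the paper's proof: both identify $U^n_{n-j}D^n_{n-j}$ on the hypercube complex with the process of removing $j$ uniformly random coordinates and rerandomizing them, then match this against the bucketed formulation of $N_\rho$ with binomial weights. The paper's version is a brief two-sentence sketch asserting this identification is ``clear from definition,'' whereas you have spelled out the bijection between subfaces and coordinate subsets explicitly, but the argument is the same.
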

\begin{proof}
$T^n_\rho(X_{\{0,1\}^n})$ is also an averaging operator, so it is enough to confirm it averages over the $\rho$-noisy distribution $N_\rho$. We claim this is clear from definition. In particular, notice that $U_{n-i}^nD^n_{n-i}$ on $X_{\{0,1\}^n}$ is exactly the process of removing $i$-coordinates uniformly at random, and replacing them with uniformly random bits. As we mentioned above, this is an equivalent way to define $N_\rho(x)$, is applying this process with probability ${n \choose i}(1-\rho)^i \rho^{n-i}$, which exactly matches the definition of $T_\rho(X_{\{0,1\}^n})$.
\end{proof}
The noise operator has a wide variety of applications across boolean Fourier analysis. One classical application is to analyze the \textit{noise-sensitivity} of a boolean function, that is the likelihood that the function flips on a noisy input. It is convenient to define the opposite concept first, \textit{stability}.
\begin{definition}[Stability (Hypercube)]
Let $f: \{0,1\} \to \mathbb{R}$ be any function. The stability of $f$ with respect to $\rho$, denoted $\stab_{\rho}(f)$, is:
\[
\stab_{\rho}(f) = \langle f, T_\rho f \rangle.
\]
\end{definition}
Since we already defined $T_\rho$ on simplicial complexes, stability has an obvious analog.
\begin{definition}[Stability (Simplicial Complex)]
Let $(X,\Pi)$ be a weighted, pure simplicial complex and $f\in C_k$. The noise stability of $f$ with respect to $\rho$, denoted $\stab_{\rho}(f)$, is:
\[
\stab^{(X,\Pi)}_{\rho}(f) = \langle f, T_\rho^k(X,\Pi) f \rangle.
\]
We drop $(X,\Pi)$ from the notation when clear from context.
\end{definition}
Similarly, it is clear that our definition of stability for complexes returns the original definition when applied to the hypercube complex.
\begin{observation}
Let $f: \{0,1\}^n \to \mathbb{R}$ be any function and $f_X: X_{\{0,1\}^n}(n) \to \mathbb{R}$ its equivalent on the hypercube complex, then:
\[
\stab_{\rho}(f) = \stab^{X_{\{0,1\}^n}}_{\rho}(f_X)
\]
\end{observation}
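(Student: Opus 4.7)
The plan is to reduce this observation directly to the previous one (that $T^n_\rho(X_{\{0,1\}^n})f_X = T_\rho f$) together with the compatibility of the inner products on the two domains. First I would note that both stabilities are defined as inner products against the noise operator, so the entire content of the statement is (i) the noise operators agree under the identification $f \leftrightarrow f_X$, and (ii) the inner products on $\{0,1\}^n$ (with uniform measure) and on $X_{\{0,1\}^n}(n)$ (with the induced top-level distribution $\pi_n$) agree under this identification.

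For (i), I would simply cite the preceding observation, which already establishes $T^n_\rho(X_{\{0,1\}^n})f_X = T_\rho f$ under the natural bijection between $\{0,1\}^n$ and the top-level faces of the hypercube complex. For (ii), I would point out that the top-level distribution $\pi_n$ on the complete $n$-partite complex $X_{\{0,1\}^n}$ (where each color class has two vertices) is exactly uniform on $X(n)$, which is in bijection with $\{0,1\}^n$. Hence $\langle g, h \rangle_{X(n)} = \mathbb{E}_{\tau \sim \pi_n}[g(\tau) h(\tau)] = \mathbb{E}_{x \sim \{0,1\}^n}[g_X(x) h_X(x)] = \langle g, h \rangle_{\{0,1\}^n}$ for any $g, h$ and their hypercube counterparts.

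Combining (i) and (ii) gives the chain
\[
\stab_\rho^{X_{\{0,1\}^n}}(f_X) = \langle f_X, T^n_\rho(X_{\{0,1\}^n}) f_X \rangle_{X(n)} = \langle f, T_\rho f \rangle_{\{0,1\}^n} = \stab_\rho(f),
\]
which is the desired identity. There is no real obstacle here — the statement is essentially a bookkeeping corollary of the preceding observation on noise operators, and the only thing to verify carefully is that the weighted inner product on the hypercube complex coincides with the uniform inner product on $\{0,1\}^n$, which is immediate from the construction of $X_{\{0,1\}^n}$ as the complete $n$-partite complex with uniform weights.
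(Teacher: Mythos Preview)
Your proposal is correct and matches the paper's treatment: the paper states this observation without proof, prefacing it with ``Similarly, it is clear that our definition of stability for complexes returns the original definition when applied to the hypercube complex,'' which is exactly your reduction to the preceding observation on $T_\rho$ together with the (trivial) compatibility of inner products.
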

A function is called \textit{noise-sensitive} if is has poor stability. One classical result in boolean Fourier analysis is that sparse functions on the hypercube are noise-sensitive, which is equivalent to saying that the noisy hypercube graph is a small-set expander. Since the noise operator is just a specific instance of a (complete) higher order random walk, \Cref{thm:expansion} implies an analogous statement for functions on HDX: pseudorandom functions are noise-sensitive.
\begin{corollary}[Pseudorandom functions are noise sensitive]
Let $(X,\Pi)$ be a two-sided $\gamma$-local-spectral expander, $f \in C_k$ an $(r,\delta)$-pseudorandom boolean function for $r=\log(2/\eps)/\log(1/\rho)+2$ and $\delta \leq 2^{-\Omega(r)}\eps^3$. Then $f$ is noise sensitive:
\[
\stab_\rho(f) \leq (\eps + c\gamma)\E[f]
\]
for $c \leq 2^{O(k)}$.
\end{corollary}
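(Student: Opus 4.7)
The plan is to deduce the corollary as a direct instantiation of \Cref{thm:expansion} applied to the noise operator, together with a short eigenvalue computation. As a setup step, I would verify that $T_\rho$ is a complete HD-walk in the sense of \Cref{def:complete-walk}: by its very definition it is a non-negative convex combination of the canonical walks $U^k_{k-i} D^k_{k-i}$ with coefficients summing to one, so it is stochastic, self-adjoint, complete (the same linear combination defines a walk on any complete complex of sufficient dimension), and has $w(T_\rho) = 1$ and $h(T_\rho) = 2k$. In particular the error constant in \Cref{thm:expansion} specializes to $2^{O(k)}$, as required for the conclusion.

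The main computational input is pinning down the approximate eigenvalues of $T_\rho$. On the complete complex (equivalently, the Johnson scheme), the canonical walk $U^k_{k-i} D^k_{k-i}$ acts on level $j$ with eigenvalue $\binom{k-j}{i}/\binom{k}{i}$, and BHKL's spectral theory for complete HD-walks lifts this (up to additive $O_k(\gamma)$ error) to any $\gamma$-local-spectral expander. A short binomial identity then gives
\[
\lambda_j(T_\rho) = \sum_{i=0}^{k-j} \binom{k}{i}(1-\rho)^i \rho^{k-i} \cdot \frac{\binom{k-j}{i}}{\binom{k}{i}} = \rho^j \sum_{i=0}^{k-j} \binom{k-j}{i}(1-\rho)^i \rho^{(k-j)-i} = \rho^j,
\]
so the approximate eigenvalues of $T_\rho$ are $1 = \rho^0 > \rho > \rho^2 > \cdots$, decaying monotonically as required by the complete-walk hypothesis.

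To invoke \Cref{thm:expansion}, set the threshold to $\delta_{\star} := \rho^{r+1}$, so that the eigenvalues exceeding $\delta_{\star}$ are exactly $\rho^0, \rho, \ldots, \rho^r$ and hence $R_{\delta_{\star}}(T_\rho) - 1 = r$. The choice $r = \log(2/\eps)/\log(1/\rho) + 2$ rearranges to $\rho^{r-2} = \eps/2$, giving $\delta_{\star} \leq \eps/2$. Applied to $f = \mathbbm{1}_S$, which is $(\delta,r)$-pseudorandom by assumption, \Cref{thm:expansion} then yields
\[
\Phi_{T_\rho}(f) \geq 1 - \delta_{\star} - (1-\delta_{\star}) 2^{O(r)} \delta^{1/3} - c\gamma.
\]
Converting expansion to stability via the identity $\stab_\rho(f) = (1 - \Phi_{T_\rho}(f)) \E[f]$ (a one-line calculation from $\langle f, T_\rho f \rangle = \E[f] \cdot \E_{v \sim \pi|_S}[T_\rho(v,S)]$) and using the hypothesis $\delta \leq 2^{-\Omega(r)} \eps^3$ to force $2^{O(r)} \delta^{1/3} \leq \eps/2$, the two small terms combine to $\eps \E[f]$ and the spectral error is absorbed into $c\gamma \E[f]$, completing the bound.

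The only non-routine ingredient is pinning down the approximate eigenvalues of $T_\rho$: everything else is bookkeeping, but the eigenvalue identification requires both invoking BHKL's complete-walk spectral theorem and the binomial collapse above. If one wanted a self-contained alternative, one could instead expand $\langle f, T_\rho f \rangle$ against the Bottom-Up Decomposition, apply the level-$i$ inequality \Cref{thm:level-i} to the $i < r$ terms, and bound the $i \geq r$ tail by $\rho^r \sum_i \|\fup{i}\|_2^2 \leq \rho^r \E[f]$ using approximate orthogonality and booleanness; the parameter trade-offs end up identical.
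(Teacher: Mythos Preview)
Your proposal is correct and follows essentially the same route as the paper's proof: identify the approximate eigenvalues of $T_\rho$ as $\rho^j$ (the paper cites \cite[Corollary 7.6]{bafna2020high} for this, while you derive it via the binomial collapse), read off the ST-rank, apply \Cref{thm:expansion}, and convert non-expansion back to stability. Your threshold choice $\delta_\star = \rho^{r+1}$ is slightly cleaner than the paper's $\eps/2$ but leads to the same arithmetic, and your explicit verification that $T_\rho$ is complete with $w(T_\rho)=1$, $h(T_\rho)=2k$ fills in a detail the paper leaves implicit.
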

\begin{proof}
One can directly compute from \cite[Corollary 7.6]{bafna2020high} that the approximate eigenvalues of $T_\rho$ are exactly $\lambda_i = \rho^i$. As a result, for small enough $\gamma$, the $(\varepsilon/2)$-ST-Rank of $T_\rho$ is at most: 
\[
R_{\varepsilon/2}(T_\rho) \leq \log(\varepsilon/2)/\log(1/\rho)+2.
\] 
Since $T_\rho$ is a higher order random walk, \Cref{thm:expansion} states that the non-expansion of any $(\delta,r)$-pseudorandom function $f$ of density $\alpha$ is at most:
\begin{align*}
\frac{1}{\alpha}\langle f, T_\rho f \rangle &\leq \alpha + (1-\alpha)\varepsilon/2 + 2^{O(r)}\delta + c\gamma\\ 
&\leq \alpha + (1-\alpha)\varepsilon/2 + \varepsilon/4 + c\gamma\\
&\leq \varepsilon + c\gamma,
\end{align*}
where we've used the fact that $\alpha \leq \delta \leq \varepsilon/4$.
\end{proof}

The noise operator is actually also commonly used to define hypercontractivity. In this form, the standard hypercontractive inequality generally states:
\[
\norm{T_\rho f}_4 \leq \norm{f}_2
\]
for some $\rho=\Theta(1)$. It is well known that on the hypercube this statement is in fact equivalent to Bonami's lemma. We can show a similar equivalence between our variant of Bonami's lemma (\Cref{thm:hypercontraction}) and a noise operator based form of hypercontractivity for pseudorandom functions. To state the strongest form of the result, it will be useful to extend the classic notion of \textit{degree} to simplicial complexes.
\begin{definition}[Function Degree]
Let $(X,\Pi)$ be a pure, weighted simplicial complex, and $f \in C_k$ any function. The degree of $f$, denoted $\deg(f)$, is the largest $i$ such that $\fup{i}$ is non-zero.
\end{definition}
We now show how to translate \Cref{thm:hypercontraction} into noise operator form for degree $i$, $(\varepsilon,i)$-pseudorandom functions.
\begin{proposition}\label{thm:noise-hyper}
Let $(X,\Pi)$ be a two-sided $\gamma$-local-spectral expander satisfying $\gamma \leq 2^{-\Omega(k)}$ and $f \in C_k$ a degree $i$,\footnote{When $i \ll k$, we can replace the condition on $\gamma$ with $\gamma \leq k^{-\Omega(i)}$} $(\varepsilon,i)$-pseudorandom function.
Then for some constant $\rho=\Theta(1)$, we have:
\[
\norm{T_\rho f}_4^4 \leq \eps\norm{f}_2^2\norm{f}_\infty^{2}.
\]
\end{proposition}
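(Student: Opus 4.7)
Since $f$ has degree $i$, the Bottom-Up decomposition reads $f = \sum_{j=0}^{i} f_j$. The plan is to combine the approximate spectral action of $T_\rho$ on this decomposition with the Bonami-type inequality of \Cref{thm:hypercontraction} and control all the resulting cross terms via a geometric series in $\rho$. First I would invoke the fact (BHKL Corollary 7.6, cited in the excerpt) that $T_\rho$ has approximate eigenvalue $\rho^j$ on the $j$-th Fourier level, so that
\[
T_\rho f \;=\; \sum_{j=0}^{i} \rho^j f_j \;+\; E,
\]
where $E$ is an error term of spectral norm $\le c_k \gamma$ applied to $f$. Raising to the fourth power and integrating, the contribution of $E$ is absorbed by a $c_k' \gamma \norm{f}_2^2 \norm{f}_\infty^2$ term since $\gamma \le 2^{-\Omega(k)}$, so the task reduces to bounding $\E\big[(\sum_j \rho^j f_j)^4\big]$.

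Next I would expand this fourth power term by term, writing
\[
\E\Big[\Big(\sum_{j=0}^i \rho^j f_j\Big)^{\!4}\Big] \;=\; \sum_{j_1,j_2,j_3,j_4} \rho^{j_1+j_2+j_3+j_4}\, \E[f_{j_1} f_{j_2} f_{j_3} f_{j_4}],
\]
and apply generalized H\"{o}lder to each summand: $|\E[f_{j_1} f_{j_2} f_{j_3} f_{j_4}]| \le \prod_a \norm{f_{j_a}}_4$. Now \Cref{thm:hypercontraction} gives $\norm{f_j}_4 \le 2^{O(j)} \eps^{1/4}\, \norm{f_j}_2^{1/2}\, \norm{f}_\infty^{1/2}\, +\, (\text{a } \gamma^{1/8}\text{ error})$; ignoring the error momentarily, this yields
\[
\E\Big[\Big(\sum_j \rho^j f_j\Big)^{\!4}\Big] \;\le\; \eps \, \norm{f}_\infty^2 \left(\sum_{j=0}^{i} (C\rho)^j \norm{f_j}_2^{1/2}\right)^{\!4}
\]
for some universal constant $C$ coming from the $2^{O(j)}$ in \Cref{thm:hypercontraction}. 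Choosing $\rho = \Theta(1)$ small enough that $C\rho < 1$, I would finish with H\"{o}lder's inequality in the form $\sum_j (C\rho)^j \norm{f_j}_2^{1/2} \le \big(\sum_j (C\rho)^{4j/3}\big)^{3/4} \big(\sum_j \norm{f_j}_2^2\big)^{1/4} = O(1)\, \norm{f}_2^{1/2}$, invoking the approximate orthogonality of the Bottom-Up decomposition (\Cref{lemma:approx-orthog}) to identify $\sum_j \norm{f_j}_2^2$ with $\norm{f}_2^2$ up to a $2^{O(k)}\gamma$ error. Raising to the fourth power delivers the target bound $\eps\,\norm{f}_2^2\,\norm{f}_\infty^2$.

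The last step is bookkeeping on the error terms: the additive $\gamma^{1/2}$-type error in \Cref{thm:hypercontraction}, the error in the approximate eigendecomposition of $T_\rho$, and the approximate orthogonality slack all carry a factor of the form $c_k \gamma^{1/2}$ with $c_k \le 2^{O(k)}$, each multiplied by $\norm{f}_2^2 \norm{f}_\infty^2$ (or something smaller). These are dominated by $\eps \norm{f}_2^2 \norm{f}_\infty^2$ whenever $\gamma \le 2^{-\Omega(k)}$, as assumed; in the $i \ll k$ regime one can trade $2^{O(k)}$ for $k^{O(i)}$ and relax the assumption on $\gamma$ accordingly, matching the footnote. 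The main technical obstacle is ensuring the $2^{O(j)}$ combinatorial blow-up from \Cref{thm:hypercontraction} does not overwhelm the geometric damping $\rho^j$, which is exactly why we must take $\rho$ a sufficiently small universal constant; once $C\rho < 1$ the sum telescopes and the $i$-dependence disappears. Everything else is a straightforward application of H\"{o}lder/Minkowski and the structural properties of the Bottom-Up decomposition established earlier in the paper.
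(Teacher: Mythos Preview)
Your plan is essentially the paper's argument, with one organizational difference and one point where your bookkeeping is too casual.

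\textbf{Organizational difference.} Instead of expanding $\E[(\sum_j \rho^j f_j)^4]$ into all $(j_1,j_2,j_3,j_4)$ terms and applying generalized H\"older, the paper simply applies Minkowski in $\ell_4$ from the start: $\|T_\rho f\|_4 \le \sum_{j\le i} \|T_\rho f_{\uparrow j}\|_4 \le \sum_j \rho^j\|f_{\uparrow j}\|_4 + \|\mathrm{err}_j\|_4$. Then \Cref{thm:hypercontraction} bounds each $\|f_{\uparrow j}\|_4$ by $2^{O(j)}\eps^{1/4}\|f_{\uparrow j}\|_2^{1/2}\|f\|_\infty^{1/2}$ (plus the $\gamma$-error), and choosing $\rho$ so that $2^{O(1)}\rho<1$ makes the sum geometric. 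This is equivalent to your route but avoids the four-fold expansion and the auxiliary H\"older step.

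\textbf{Where your error analysis is too loose.} You write that $E$ has spectral norm $\le c_k\gamma$ and that its contribution is ``absorbed by a $c_k'\gamma\|f\|_2^2\|f\|_\infty^2$ term.'' That alone is not enough: the target bound is $\eps\|f\|_2^2\|f\|_\infty^2$, and nothing in the hypotheses forces $\eps$ to be larger than $c_k'\gamma$. The paper avoids this by keeping the error per level and observing that $\mathrm{err}_j$ is a bounded combination of averaging operators applied to $g_j$, so by \Cref{lemma:g-infty} one gets $\|\mathrm{err}_j\|_\infty \le k^{O(j)}\eps\|f\|_\infty$ in addition to $\|\mathrm{err}_j\|_2 \le k^{O(j)}\gamma\|f\|_2$. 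Then $\|\mathrm{err}_j\|_4 \le \|\mathrm{err}_j\|_2^{1/2}\|\mathrm{err}_j\|_\infty^{1/2} \le k^{O(j)}\gamma^{1/2}\eps^{1/2}\|f\|_2^{1/2}\|f\|_\infty^{1/2}$, which carries the needed factor of $\eps^{1/2}$ and is therefore dominated by $\tfrac12\eps^{1/4}\|f\|_2^{1/2}\|f\|_\infty^{1/2}$ under the assumed bound on $\gamma$. In short, the pseudorandomness hypothesis enters the error analysis as well, through the $\ell_\infty$ bound on $g_j$; once you add that ingredient your argument goes through.
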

\begin{proof}
The overall proof follows a fairly standard reduction from hypercontractivity to Bonami's lemma (see e.g.\ \cite[Exercise 9.6]{o2014analysis}), but requires some extra work due to the fact that the Bottom-Up decomposition is only approximately an eigenbasis for $T_\rho$ (in an $\ell_2$-sense). Namely, by \cite[Proposition 7.5]{bafna2020high} and \Cref{lemma:approx-kernel}, we can write:
\[
T_\rho \fup{j} = \rho^k \fup{j} + \text{err}_j
\]
where $\norm{\text{err}_j}_2 \leq k^{O(j)}\gamma\norm{f}_2$, and $\norm{\text{err}_j}_\infty \leq k^{O(j)}\varepsilon\norm{f}_\infty$. The last of these facts is slightly less standard, and follows from noting that $\text{err}_j$ is really a linear combination of at most $k^{O(j)}$ averaging operators applied to $g_j$ (see \cite[Proposition 7.5]{bafna2020high}), and that $\norm{g_j}_\infty \leq 2^{O(j)}\varepsilon\norm{f}_\infty$. With this in mind, we can expand out $\norm{T_\rho f}$ by the Bottom-Up Decomposition and apply \Cref{thm:hypercontraction} to get:
\begin{align*}
    \norm{T_\rho f}_4 &\leq \sum\limits_{j=0}^i \norm{T_\rho \fup{j}}_4\\
    &\leq \sum\limits_{j=0}^i \rho^i\norm{\fup{j}}_4 + \norm{\text{err}_j}_4\\
    &\leq \frac{1}{2}\varepsilon^{1/4}\norm{f}_2^{1/2}\norm{f}_\infty^{1/2} + \sum\limits_{j=0}^i \norm{\text{err}_j}_4\\
    &\leq  \frac{1}{2}\varepsilon^{1/4}\norm{f}_2^{1/2}\norm{f}_\infty^{1/2} + \sum\limits_{j=0}^i \norm{\text{err}_j}^{1/2}_2\norm{\text{err}_j}_\infty^{1/2}\\
    &\leq \varepsilon^{1/4}\norm{f}_2^{1/2}\norm{f}_\infty^{1/2}
\end{align*}
where we have assumed that $\rho$ is a sufficiently small constant. Taking the fourth power of both sides completes the proof.
\end{proof}

\bibliographystyle{amsalpha}  
\bibliography{references} 

\end{document}